\definecolor{bestcolor}{gray}{0.5}
\newcites{Sup}{References}
\def\spacingset#1{\renewcommand{\baselinestretch}%
{#1}\small\normalsize} \spacingset{1}
\let\c@algorithm\c@sampler
\newtheorem{definition}{Definition}
\newtheorem{theorem}{Theorem}
\newtheorem{lemma}{Lemma}
\newtheorem{corollary}{Corollary}
\newtheorem{proposition}{Proposition}
\newtheorem{assump}{Assumption}
\newtheorem{cond}{Condition}
\def\bal#1\eal{\begin{align}#1\end{align}}
\def\balnn#1\ealnn{\begin{align*}#1\end{align*}}
\DeclareMathOperator*{\argmin}{arg\,min}
\DeclareMathOperator*{\argmax}{arg\,max}
\DeclarePairedDelimiter\set{\{}{\}}
\DeclarePairedDelimiterX{\norm}[1]{\lVert}{\rVert}{#1}
\DeclarePairedDelimiterX{\abs}[1]{\lvert}{\rvert}{#1}
\DeclarePairedDelimiterX{\mean}[1]{\langle}{\rangle}{#1}
\DeclareMathOperator*{\spn}{span}
\DeclareMathOperator*{\tr}{tr}
\DeclareMathOperator*{\rank}{rank}
\DeclareMathOperator*{\diag}{diag}
\DeclareMathOperator*{\supp}{supp}
\DeclarePairedDelimiterX{\expectarg}[1]{[}{]}{%
    \ifnum\currentgrouptype=16 \else\begingroup\fi
    \activatebar#1
    \ifnum\currentgrouptype=16 \else\endgroup\fi
}
\DeclarePairedDelimiterX{\variancearg}[1]{(}{)}{%
    \ifnum\currentgrouptype=16 \else\begingroup\fi
    \activatebar#1
    \ifnum\currentgrouptype=16 \else\endgroup\fi
}
\newcommand{\innermid}{\nonscript\;\delimsize\vert\nonscript\;}
\newcommand{\activatebar}{%
    \begingroup\lccode`\~=`\|
    \lowercase{\endgroup\let~}\innermid
    \mathcode`|=\string"8000
}
\newcommand{\bbP}{\mathbb{P}}
\newcommand{\bbE}{\mathbb{E}}
\newcommand{\bbB}{\mathbb{B}}
\newcommand{\E}{\mathbb{E} \, \expectarg}
\newcommand{\iidsim}{\overset{\text{iid}}\sim}
\newcommand{\indsim}{\overset{\text{ind.}}\sim}
\newcommand{\SSIBP}{\text{SS-IBP}_d(a, \kappa, \bbP_{spike}, \bbP_{slab})}
\newcommand{\Exp}{\operatorname{Exponential}}
\newcommand{\Bern}{\operatorname{Bernoulli}}
\newcommand{\Pois}{\operatorname{Poisson}}
\newcommand{\Laplace}{\operatorname{Laplace}}
\newcommand{\Beta}{\operatorname{Beta}}
\def\mOne{{\mathbbm{1}}}
\newcommand{\ind}[1]{\mOne_{\{#1\}}}
\newcommand{\Reals}[1]{\mathbb{R}^{#1}}
\newcommand{\bY}{\mathbf{Y}}
\newcommand{\bX}{\mathbf{X}}
\newcommand{\bA}{\mathbf{A}}
\newcommand{\bB}{\mathbf{B}}
\newcommand{\bC}{\mathbf{C}}
\newcommand{\bD}{\mathbf{D}}
\newcommand{\bR}{\mathbf{R}}
\newcommand{\bS}{\mathbf{S}}
\newcommand{\bP}{\mathbf{P}}
\newcommand{\bQ}{\mathbf{Q}}
\newcommand{\bv}{\mathbf{v}}
\newcommand{\bbeta}{\boldsymbol\beta}
\newcommand{\boldeta}{\boldsymbol\eta}
\newcommand{\bTheta}{\boldsymbol\Theta}
\newcommand{\blambda}{\boldsymbol\lambda}
\newcommand{\bLambda}{\boldsymbol\Lambda}
\newcommand{\bpsi}{\boldsymbol\psi}
\newcommand{\bxi}{\boldsymbol\xi}
\newcommand{\bb}{\bm{\mathrm{b}}}
\newcommand{\bs}{\bm{\mathrm{s}}}
\newcommand{\bx}{\bm{\mathrm{x}}}
\newcommand{\bZ}{\bm{\mathrm{Z}}}
\newcommand{\bW}{\bm{\mathrm{W}}}
\newcommand{\bU}{\bm{\mathrm{U}}}
\newcommand{\bV}{\bm{\mathrm{V}}}
\newcommand{\bu}{\bm{\mathrm{u}}}
\newcommand{\bI}{\bm{\mathrm{I}}}
\newcommand{\bmu}{\bm{\mu}}
\title{A Spike-and-Slab Prior for Dimension Selection in Generalized Linear Network Eigenmodels}
\author{Joshua Daniel Loyal\thanks{Department of Statistics, Florida State University} \and Yuguo Chen\thanks{Department of Statistics, University of Illinois at Urbana-Champaign}}
\date{\empty}
\begin{document}

\maketitle

\begin{abstract}
    Latent space models (LSMs) are frequently used to model network data by embedding a network's nodes into a low-dimensional latent space; however, choosing the dimension of this space remains a challenge. To this end, we begin by formalizing a class of LSMs we call generalized linear network eigenmodels (GLNEMs) that can model various edge types (binary, ordinal, non-negative continuous) found in scientific applications. This model class subsumes the traditional eigenmodel by embedding it in a generalized linear model with an exponential dispersion family random component and fixes identifiability issues that hindered interpretability. Next, we propose a Bayesian approach to dimension selection for GLNEMs based on an ordered spike-and-slab prior that provides improved dimension estimation and satisfies several appealing theoretical properties. In particular, we show that the model's posterior concentrates on low-dimensional models near the truth. We demonstrate our approach's consistent dimension selection on simulated networks. Lastly, we use GLNEMs to study the effect of covariates on the formation of networks from biology, ecology, and economics and the existence of residual latent structure.

\end{abstract}


\noindent
{\it Keywords:} Generalized Linear Model; Model Selection; Network Latent Space Model; Statistical Network Analysis.

\onehalfspacing

\section{Introduction}

Networks data is at the center of modern statistical applications in various fields, including sociology, biology, ecology, and economics. A network describes the relations, or edges, between pairs of entities, or nodes. This information is represented as an $n \times n$ adjacency matrix $\bY$ with entries $\set{Y_{ij} \, : \, 1 \leq i,j \leq n}$ that quantify the relation between node $i$ and node $j$. These edge variables $Y_{ij}$ can be binary with values in $\set{0, 1}$ or weighted taking on any real value depending on the relation under study.  For example, world trade is often studied as a network with weighted edge variables, $Y_{ij}$, that represent the amount of bilateral trade in dollars between nations~\citep{ward2013, benedictis2014}.

A fundamental goal in the analysis of network data is to understand the relationship between the edge variables and additional dyadic covariates $\set{\bx_{ij} \in \Reals{p} \, : \, 1 \leq i,j \leq n}$.  A popular approach focuses on estimating the regression function $\mathbb{E}[Y_{ij} \mid \bx_{ij}]$ that relates the expected value of the edge variables to the dyadic covariates. Such a relationship is called {\it assortative} or {\it dissassortative} in the network literature depending on whether increasing a dyadic covariate increases or decreases the regression function. An example from economics is the gravity model, which models the expected bilateral trade between nations as a log-linear function of the nations' gross domestic product and geographic distance~\citep{tinbergen1962, anderson1979}, cf. Section~\ref{sec:applications}. However, the assumption made by these traditional regression methods that the edge variables are independent conditional on the covariates is almost always invalidated by the presence of strong network dependencies such as degree, transitivity, and clustering effects~\citep{kolaczyk2017, hoff2021}. 

In the statistics literature, a solution is to introduce additional latent variables into the regression to capture the residual network dependencies, such that conditional on the latent variables the edge independence assumption holds. These latent variables usually take the form of latent variable network models including the $\beta$-model~\citep{graham2017, yan2019, stein2022}, stochastic block models~\citep{mariadassou2010, huang2022}, and latent space models (LSMs)~\citep{hoff2002, handcock2007, krivitsky2008, ma2020}. For a comprehensive overview of network models, see \citet{goldenberg2010} and \citet{kolaczyk2017}. In this work, we assume the latent variables take the form of a LSM
which
has garnered popularity due to its ability to capture various network structures while remaining amiable to interpretation and visualization. The key idea behind LSMs is that each node $i$ can be represented by a latent position $\bu_i \in \Reals{d}$ in a low-dimensional ($d \ll n$) latent space, and nodes with similar latent positions will have a larger edge variable in expectation. 


This paper tackles two problems in the Bayesian analysis of networks using LSMs:
(1) Selecting the dimension of the latent space $d$ while also assessing its uncertainty, and (2) Accounting for this selection when assessing the covariate effects' statistical uncertainty. The most popular approach to dimension selection minimizes a model selection criterion~\citep{spiegelhalter2002, watanabe2010, sosa2021} or a metric calculated on data held out by a data splitting strategy such as cross-validation~\citep{hoff2008, chen2018, li2020}. For Bayesian LSMs, significant issues with these approaches are that they either are computationally intensive, lack theoretical support, or only apply to networks without covariates. As such, many Bayesian analyses forgo dimension selection altogether. Furthermore, any post-selection inference about the covariate effects will not account for the uncertainty induced by the dimension selection process.


Alternatively, a fully Bayesian solution places a prior over the latent space dimension and selects the dimension based on its posterior distribution. A difficulty with this approach is that most LSM likelihoods are invariant to the re-ordering of the latent space dimensions, so special priors are required to account for this non-identifiability. Early attempts used ordered shrinkage priors~\citep{bhattacharya2011, durante2014}; however, these priors often induce undesirable overshrinkage due to their sensitivity to hyperparameter choice~\citep{durante2017letters} and are unable to fully eliminate unnecessary dimensions from the model. As such, recent applications of the eigenmodel~\citep{guhaniyogi2020, guha2021} have employed spike-and-slab priors~\citep{ishwaran2005} that penalize increasing model size. However, these priors share many weaknesses with the original ordered shrinkage priors because their increasing penalization only holds in expectation. Furthermore, all existing priors lack any theoretical guarantees that their penalization penetrates through to the posterior.

This work addresses these problems for a large class of models we call \textit{generalized linear network eigenmodels} (GLNEMs), which are variants of the eigenmodel proposed by \citet{hoff2008, hoff2009}. Traditionally, the eigenmodel relates a latent Gaussian edge variable to the regression function through a link function. The use of latent Gaussian random variables permits Bayesian inference over a variety of edge-types by facilitating a flexible Metropolis-within-Gibbs Markov chain Monte Carlo (MCMC) algorithm; however, it hinders the interpretation of the covariate effects since one must interpret them conditional on these latent variables. Instead, GLNEMs directly embed the eigenmodel in a generalized linear modeling framework with an exponential dispersion family random component and possibly non-canonical link functions. Importantly, under the GLNEM framework, we introduce new identifiability constraints on the latent positions that allow one to interpret the covariate effects as marginal effects independent of the latent variables.

The main thrust of this article is then the development of a computationally efficient and theoretically grounded Bayesian approach that can automatically identify the dimension of the latent space for GLNEMs. Our approach to dimension selection begins with a prior that induces posterior zeros with high probability. 
We introduce a novel spike-and-slab prior coupled with a non-homogeneous Indian buffet process prior~\citep{griffiths2011} that increasingly pulls coefficients to the spike at zero, which we call the {\it truncated non-homogeneous spike-and-slab Indian buffet process}. The advantages of this prior are three-fold: (1) we demonstrate that the prior induces a stochastic ordering that alleviates issues arising from the GLNEM likelihood's invariance to the re-ordering of the latent space dimensions, (2) we provide a tail bound on the expected posterior latent space dimension that demonstrates concentration near the true number of dimensions, and (3) we identify hyperparameter values that are justified by our theoretical results. To the best of our knowledge, this is the first such theoretical guarantee for Bayesian LSMs.

Our final contribution addresses the challenge of designing an efficient MCMC
algorithm that applies for all GLNEMs. This task's difficulty is compounded by the fact that our proposed identifiability constraints require some parameters to lie on a sub-manifold of the Stiefel manifold. We address this issue by introducing a new parameter expanded prior based on the QR decomposition that is amiable to gradient based sampling methods. Under this prior, the target posterior distribution has marginal densities for the parameters of interest that satisfy the identifiability constraints. Furthermore, we use this prior to design a simple Metropolis-within-Gibbs sampler with efficient Hamiltonian Monte Carlo proposals~\citep{neal2011, hoffman2014} that applies for any GLNEM.

The article is structured as follows. Section~\ref{sec:model} introduces GLNEMs. 
Section~\ref{sec:dim_select} develops the truncated non-homogeneous spike-and-slab Indian buffet process and demonstrates properties sufficient for dimension selection. Section~\ref{sec:theory} provides a tail bound on the GLNEM's posterior latent space dimension under the proposed prior. Section~\ref{ss:sec:estimation} describes our parameter expansion strategy and MCMC sampler. Section~\ref{ss:sec:simulation} investigates the proposed methods in a simulation study, and Section~\ref{sec:applications} applies them to networks from biology, ecology, and economics. Lastly, Section~\ref{sec:discussion} contains a discussion. 


\section{Generalized Linear Network Eigenmodels}\label{sec:model}

Here, we outline a class of latent space models for network data with dyadic covariates we call generalized linear network eigenmodels (GLNEMs). We start by fixing notation. Let $B_{1:L}$ refer to the set $\set{B_1, \dots, B_L}$ where $B_{\ell}$ is any indexed object. We denote the $n$-dimensional vector of ones and zeros by $\mathbf{1}_n$ and $\mathbf{0}_n$, respectively. We let $\bI_d$ represent the $d \times d$ identity matrix. For a vector $\bv \in \Reals{d}$, we use $\diag(\bv)$ to denote the $d \times d$ diagonal matrix with the elements of $\bv$ on its diagonal. For a matrix $\bC \in \Reals{n \times m}$, we denote the Frobenius norm as $\norm{\bC}_F = (\sum_{ij} C_{ij}^2)^{1/2}$ and the number of non-zero elements as $\norm{\bC}_0$. We denote the Stiefel manifold, which contains $n \times d$ semi-orthogonal matrices, as $\mathcal{V}_{d,n} = \set{\bU \in \Reals{n \times d} \, : \, \bU^{\top} \bU = \bI_d}$. We use $\indsim$ and $\iidsim$ to denote independently distributed and independent and identically distributed, respectively. Lastly, the proofs of all subsequent results can be found in Section~\ref{sec:proofs} of the supplement.

\subsection{Model Description}


Suppose we observe an undirected network on a set of $n$ nodes, which is represented by an $n \times n$ adjacency matrix $\bY$ with symmetric real-valued entries $Y_{ij} = Y_{ji}$. For simplicity, we allow self-loops (i.e., diagonal entries) although these entries may be unobserved. We also observe additional $p$-dimensional dyadic covariates for each pair of nodes $i$ and $j$, denoted by $\bx_{ij} = (x_{ij,1}, \dots, x_{ij, p})^{\top}$. Let $\bX_k$ denote the $n \times n$ matrix with the $k$-th covariate of each dyad as entries, i.e., $(\bX_k)_{ij} = x_{ij, k}$ for $k = 1, \dots, p$. 

In the tradition of generalized linear models (GLMs), an intuitive statistical model for the adjacency matrix specifies a systematic component that describes $\bY$'s expected value and a random component that categorizes its distribution. 
GLNEMs relate the adjacency matrix's expected value to a linear function of the covariates and the latent space via a strictly increasing link function $g$ as follows:
\begin{equation}\label{eq:systematic}
    g(\mathbb{E}[\bY \mid \bX_{1:p}]) = \sum_{k=1}^p \bX_k \beta_k + \bU \bLambda \bU^{\top}, 
\end{equation}
where $\bbeta = (\beta_1, \dots, \beta_p)^{\top} \in \Reals{p}$, $\bLambda = \diag(\blambda) =\diag(\lambda_1, \dots, \lambda_d) \in \Reals{d \times d}$ is a real-valued diagonal matrix, $\bU = (\bu_1, \dots, \bu_n)^{\top} \in \Reals{n \times d}$ is a matrix with the nodes' latent positions as rows, and in a slight abuse of notation, $g(\bbE[\bY \mid \bX_{1:p}])$ is the link function applied element-wise to the matrix $\bbE[\bY \mid \bX_{1:p}]$. The $k$-th regression coefficient $\beta_k$ measures the extent of assortativity or disassortativity in the network attributed to the $k$-th covariate depending on whether $\beta_k$ is positive or negative. The $\bU\bLambda\bU^{\top}$ term captures latent low-rank residual network effects. This term takes the form of an eigen-decomposition, which motivates the eigenmodel name~\citep{hoff2008}. Similar to the regression coefficients, $\lambda_h$ measures the amount of assortativity or disassortativity associated with the $h$-th latent space dimension. 



For the random component, we assume the edge variables are independent conditional on the dyad-wise covariates and the latent space:
\begin{equation}\label{eq:random}
    Y_{ij} = Y_{ji} \mid \bx_{ij} \indsim Q\left\{\cdot \mid g^{-1}\left(\bbeta^{\top} \bx_{ij} + \left[\bU \bLambda\bU^{\top}\right]_{ij}\right), \phi\right\}, \qquad 1 \leq i \leq j \leq n,
\end{equation}
where $Q(\cdot \mid \mu, \phi)$ is a member of the exponential dispersion family with mean $\mu$ and known dispersion factor $\phi$. That is, we assume that the edge variables are independently drawn from densities of the form
\begin{equation}\label{eq:edm}
      q(y_{ij}; \theta_{ij}, \phi) = q_{ij}(y_{ij}) = \exp\left\{\frac{y_{ij} \, \theta_{ij} - b(\theta_{ij})}{\phi} + k(y_{ij}, \phi) \right\}, \qquad 1 \leq i \leq j \leq n,
\end{equation}
where $\theta_{ij}$ is a natural parameter lying in $\Theta \subset \Reals{}$, $\phi$ is a known dispersion factor, and $b$ and $k$ are known functions. Following the usual conventions for GLMs, we assume that the function $b$ is twice differentiable and strictly convex on $\Theta$ so that the second derivate $b''$ satisfies $b''(x) > 0$ for every $x \in \Theta$. It is easy to see that the expected value and variance of $Y_{ij}$ are $b'(\theta_{ij})$ and $\phi b''(\theta_{ij})$, respectively. The exponential dispersion family includes the Gaussian, binomial, Poisson, negative binomial, gamma, Tweedie, and beta distributions.  Based on our choice of systematic component, we have
\begin{equation}\label{eq:linpred}
    g(b'(\theta_{ij})) = g(\mathbb{E}[Y_{ij} \mid \bx_{ij}]) = \bbeta^{\top}\bx_{ij} + \left[\bU \bLambda \bU^{\top}\right]_{ij} = \bbeta^{\top}\bx_{ij} + \bu_i^{\top}\bLambda \bu_j.
\end{equation}
We refer to model (\ref{eq:systematic}) -- (\ref{eq:edm}) as a generalized linear network eigenmodel (GLNEM).

It is worth comparing the GLNEM framework to existing approaches. Unlike the eigenmodel proposed by \citet{hoff2008, hoff2009}, which assumes $g(\mathbb{E}[Y_{ij} \mid \bx_{ij}]) = \bbeta^{\top}\bx_{ij} + \bu_i^{\top}\bLambda\bu_j + \varepsilon_{ij}$, where $\varepsilon_{ij} \iidsim N(0, \sigma^2)$, GLNEMs do not use latent Gaussian random variables.
GLNEMs are also closely related to the generalized linear modeling framework for network data proposed by \citet{wu2017}, who replace $\bU\bLambda\bU^{\top}$ with a general low-rank matrix. Unlike the model of \citet{wu2017}, GLNEMs allow for both dispersion and non-canonical link functions, which can provide a better fit to real-world networks as shown in Section~\ref{sec:applications}. Next we present some concrete examples to demonstrate the utility of the GLNEM framework.

{\bf Example 1} (Binary Edges). For binary edge variables, an appropriate random component $Q$ is a Bernoulli distribution. 
Setting $g(x) = \Phi(x)$, the cumulative distribution function of the standard normal distribution, one recovers the eigenmodel in \citet{hoff2008}. Another common choice is the logistic link function. However, the GLNEM framework also includes the complementary log-log and log-log link functions, which have not appeared in the network literature. 

{\bf Example 2} (Ordinal Edges).  Ordinal edge variables are a common edge type, e.g., networks of counts. In this case, common choices for $Q$ are the Poisson or negative binomial distribution. Often $g(x) = \log(x)$ is used for both distributions despite being non-canonical for the latter. Furthermore, we found that the negative binomial model often results in lower-dimensional models when over-dispersion is present in the data, which often occurs in networks due to zero-inflation. See Section~\ref{sec:zero-inflation} of the supplement for details.   

{\bf Example 3} (Non-Negative Continuous Edges). Edge variables are often non-negative continuous so that $Y_{ij} \in \set{0} \cup \Reals{}_{+}$, e.g., world-trade networks. A solution proposed in the literature is the tobit model based on a latent Gaussian variable~\citep{sewell2016weights}; however, this model can be inappropriate for certain networks. For example, economists have scrutinized using the tobit model to model world trade due to the results depending on the units of measurement~\citep{silva2006}, e.g., trade measured in dollars or billions of dollars. A solution under the GLNEM framework that has not appeared in the network literature is the Tweedie distribution~\citep{jorgensen1987} with power parameter $1 < \xi < 2$, which is invariant to the edge variables' units. Furthermore, the Tweedie distribution is equivalent to a compound Poisson-gamma distribution, which is a reasonable data generating process for trade. See Section~\ref{sec:tweedie} of the supplement for more details.


\subsection{Parameter Identifiability and Marginal Effect Interpretation}

Identifiability of model (\ref{eq:systematic}) -- (\ref{eq:edm}) requires additional constraints on the model parameters. For $\bbeta$ and $\bU \bLambda\bU^{\top}$ to be individually identifiable, we place a sum-to-zero constraint on $\bU$'s columns and require the node-averaged covariate matrix $\bar{\bX} = (1/n)\sum_{j=1}^n (\bx_{1j}, \dots, \bx_{nj})^{\top} = (\bar{\bx}_1, \dots, \bar{\bx}_n)^{\top}$ be non-singular. For the magnitude of $\bLambda$ to be identifiable, we must constrain the scale of $\bU$. We adopt a constraint imposed by \citet{hoff2009} that requires $\bU$ to be a semi-orthogonal matrix. In summary, we make the following two assumptions: 

\begin{assump}\label{id:a1}
$\bU \in \bar{\mathcal{V}}_{d,n}$, where $\bar{\mathcal{V}}_{d,n} = \set{\bU \in \Reals{n \times d} \ \mid \ \bU^{\top}\bU = \bI_d \text{ and } \bU^{\top}\mathbf{1}_n = \mathbf{0}_d}$.
\end{assump}
\begin{assump}\label{id:a2}
$\rank(\bar{\bX}) = p$.
\end{assump}
\noindent Formally, we have the following proposition about the identifiability of the covariate effects.

\begin{proposition}[Identifiability of Covariate Effects]\label{prop:identify}
    Assume $\bY$ is drawn from model (\ref{eq:systematic}) -- (\ref{eq:edm}) with parameters $\set{\bbeta, \bU, \bLambda, \phi}$ that satisfy Assumptions \ref{id:a1} and \ref{id:a2}, then $\bbeta$ and $\bU \bLambda \bU^{\top}$ are identifiable. That is, if the probability distribution induced by two different parameterizations $\set{\bbeta, \bU, \bLambda, \phi}$ and $\set{\tilde{\bbeta}, \tilde{\bU}, \tilde{\bLambda}, \tilde{\phi}}$ coincide, then $\bbeta = \tilde{\bbeta}$ and $\bU \bLambda \bU^{\top} = \tilde{\bU} \tilde{\bLambda} \tilde{\bU}^{\top}$.
\end{proposition}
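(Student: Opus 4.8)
The plan is to reduce the distributional hypothesis to a pointwise identity between the two linear predictors, and then to peel off $\bbeta$ and $\bU\bLambda\bU^{\top}$ in turn using the two identifiability assumptions. First I would observe that if the distributions induced by the two parameterizations coincide, then so do all of the edge means $\mathbb{E}[Y_{ij} \mid \bx_{ij}] = b'(\theta_{ij})$; this step needs no control on the dispersion factors, because the mean $b'(\theta_{ij})$ does not involve $\phi$, so $\phi = \tilde{\phi}$ is never required. Since the link $g$ is strictly increasing and therefore invertible, equality of the means is equivalent to equality of the linear predictors, which by (\ref{eq:linpred}) gives
\begin{equation*}
  \bbeta^{\top}\bx_{ij} + \bu_i^{\top}\bLambda\bu_j = \tilde{\bbeta}^{\top}\bx_{ij} + \tilde{\bu}_i^{\top}\tilde{\bLambda}\tilde{\bu}_j, \qquad 1 \le i, j \le n.
\end{equation*}
Writing $\mathbf{M} = \bU\bLambda\bU^{\top}$ and $\tilde{\mathbf{M}} = \tilde{\bU}\tilde{\bLambda}\tilde{\bU}^{\top}$, this is the scalar identity $(\bbeta - \tilde{\bbeta})^{\top}\bx_{ij} = \tilde{M}_{ij} - M_{ij}$.

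The heart of the argument is to eliminate the latent term by averaging over a node's partners, which is exactly what the sum-to-zero constraint is designed to enable. Under Assumption~\ref{id:a1}, $\bU^{\top}\mathbf{1}_n = \mathbf{0}_d$ gives $\mathbf{M}\mathbf{1}_n = \bU\bLambda(\bU^{\top}\mathbf{1}_n) = \mathbf{0}_n$, so every row of $\mathbf{M}$ sums to zero, and likewise for $\tilde{\mathbf{M}}$. Averaging the scalar identity over $j = 1, \dots, n$ therefore annihilates the right-hand side and leaves $(\bbeta - \tilde{\bbeta})^{\top}\bar{\bx}_i = 0$ for each $i$, where $\bar{\bx}_i = (1/n)\sum_{j=1}^n \bx_{ij}$ is the $i$-th row of $\bar{\bX}$. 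Stacking these $n$ equations gives $\bar{\bX}(\bbeta - \tilde{\bbeta}) = \mathbf{0}_n$, and Assumption~\ref{id:a2} ($\rank(\bar{\bX}) = p$, so $\bar{\bX}$ has trivial null space) forces $\bbeta = \tilde{\bbeta}$. Feeding this back into the scalar identity makes its left-hand side vanish, so $M_{ij} = \tilde{M}_{ij}$ for all $i, j$, i.e., $\bU\bLambda\bU^{\top} = \tilde{\bU}\tilde{\bLambda}\tilde{\bU}^{\top}$, completing both identifications.

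I expect no deep obstacle here; the real content is recognizing the averaging trick and keeping the two assumptions' roles clean (A1 projects out the eigenmodel term, A2 then pins down $\bbeta$). The one point I would state precisely is that the averaging must run over all $j$, including $j = i$, since it is the complete row sum $\sum_{j=1}^n M_{ij}$ that vanishes under Assumption~\ref{id:a1}; this matches the definition of $\bar{\bX}$, which likewise sums over all $j$, and the model's inclusion of the (possibly unobserved) diagonal in the induced distribution. I would also note that the first step uses only equality of the edge means, so the conclusion in fact holds under the weaker hypothesis that the two parameterizations induce the same mean matrix $\mathbb{E}[\bY \mid \bX_{1:p}]$, the distributional assumption serving only to guarantee it.
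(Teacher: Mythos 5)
Your proposal is correct and follows essentially the same route as the paper: averaging the scalar identity over $j$ is exactly the paper's right-multiplication of the matrix identity by $\mathbf{1}_n$, and your trivial-null-space argument for $\bar{\bX}$ is equivalent to the paper's step of left-multiplying by $\bar{\bX}^{\top}$ and invoking full rank of $\bar{\bX}^{\top}\bar{\bX}$. Your side remarks (that only equality of the mean matrices is needed, and that the sum must include $j=i$) are accurate refinements consistent with the paper's setup.
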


Importantly, the sum-to-zero constraint allows one to interpret $\bbeta$'s elements as marginal effects, which is usually not the case for LSMs~\citep{minhas2019}. 
Specifically, often one wants to quantify how the covariates affect the edge variables associated with a single node $i$. It is natural to consider the average $n^{-1} \sum_{j=1}^n g(\mathbb{E}[Y_{ij} \mid \bx_{ij}]) = \bbeta^{\top} \bar{\bx}_{i}$. In other words, the elements of $\bbeta$ represent the effect a covariate has on the average value of $g(\bbE[Y_{ij} \mid \bx_{ij}])$ after marginalizing over the latent positions. For example, in the context of a Bernoulli GLNEM with a logistic link, $\beta_k$ is the marginal additive effect on the average log-odds ratio  of forming an edge with node $i$ for a unit increase in feature $k$, i.e., a unit increase in the average covariate $\bar{x}_{i,k}$ increases the average log-odds of forming an edge with node $i$ by $\beta_k$. 
Similarly, under a log link we can interpret $e^{\beta_k}$ as the marginal multiplicative effect on the geometric mean of the expected edge variables involving node $i$. Note that we call these marginal effects because they are not conditional on keeping the latent positions fixed.

\section{A Spike-and-Slab Prior for Dimension Selection}\label{sec:dim_select}

Now, we turn to the main contribution of this paper which is the development of a Bayesian procedure that can automatically identify the dimension of the latent space.  In the context of GLNEMs, the dimension selection problem is equivalent to estimating $\rank(\bU \bLambda \bU^{\top}) = \norm{\bLambda}_0 \leq d$, i.e., the number of non-zero diagonal elements of $\bLambda$. As such our approach begins with a spike-and-slab prior on the diagonal elements of $\bLambda$ that induces posterior zeros with high probability. Furthermore, we expect additional dimensions to play a progressively less important role in describing the network structure. As a solution, we introduce a novel ordered spike-and-slab prior that induces a stochastic ordering of $\bLambda$'s elements and results in a posterior that asymptotically concentrates around the true dimension $d_0$.

\subsection{The Non-Homogeneous Spike-and-Slab Indian Buffet Process}

To construct a prior with ordered shrinkage and dimension selection, we build upon the spike-and-slab priors~\citep{ishwaran2005} and the Indian buffet process~\citep{griffiths2011}. Specifically, let $\set{\eta_h}_{h=1}^d$ be a collection of $d$ random variables. We consider the following spike-and-slab prior on $\set{\eta_h}_{h=1}^d$ with a shrinking sequence of slab probabilities $\theta_1 > \theta_2 > \dots > \theta_d$ :
\begin{equation}\label{eq:ssibp_prior}
\begin{split}
&\eta_h \mid \theta_h \indsim \theta_h \mathbb{P}_{slab} + (1 - \theta_h) \mathbb{P}_{spike}, \qquad \theta_h = \prod_{\ell=1}^{h} \nu_{\ell}, \qquad h = 1, \dots, d, \\
&\nu_{1} \indsim \Beta(a, \kappa + 1), \qquad \nu_{\ell} \iidsim \Beta(a, 1), \qquad \ell = 2, \dots, d,
\end{split}
\end{equation}
where $a > 0$, $\kappa \geq 0$, and $\bbP_{spike}$ and $\bbP_{slab}$ are the distributions of the spike and slab components, respectively. Because our prior uses the stick-breaking construction of the Indian buffet process, we denote the prior in 
(\ref{eq:ssibp_prior}) as the non-homogeneous spike-and-slab Indian buffet process prior truncated at $d$, or $\SSIBP$. The hyperparameters of this prior are determined by $\bbP_{slab}$, $\bbP_{spike}$, and the scalars $a $ and $\kappa$. We call this prior non-homogeneous because we draw $\nu_1$ from a two-parameter Beta distribution and the remaining $\nu_{2:d}$ parameters from a one-parameter Beta distribution. We will show that this non-homogeneous IBP is sufficient to obtain posterior concentration around the true dimension when used as a prior for $\bLambda$ in GLNEMs.

The ordering of the slab probabilities $\set{\theta_h}_{h=1}^d$ induces a stochastic ordering of the parameters $\set{\eta_h}_{h=1}^d$ under the prior, which is formalized in the following proposition.
\begin{proposition}[SS-IBPs Induce a Stochastic Ordering]\label{prop:ssibp-ordering}
For $\epsilon > 0$ and fixed $\eta_0 \in \Reals{}$, let $\mathbb{B}_{\epsilon}(\eta_0) = \set{\eta \, : \, \abs{\eta - \eta_0} \leq \epsilon}$ denote the $\epsilon$-ball centered at $\eta_0$. Under the $\SSIBP$, $\mathbb{P}(\abs{\eta_h - \eta_0} \leq \epsilon) < \mathbb{P}(\abs{\eta_{h+1} - \eta_0} \leq \epsilon)$ whenever $\mathbb{P}_{slab}(\mathbb{B}_{\epsilon}(\eta_0)) < \mathbb{P}_{spike}(\mathbb{B}_{\epsilon}(\eta_0))$.
\end{proposition}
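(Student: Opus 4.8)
The plan is to reduce everything to the expected slab probability $\mathbb{E}[\theta_h]$ and then exploit two competing sign effects. First I would condition on $\theta_h$. By the mixture form of the prior in (\ref{eq:ssibp_prior}), for any fixed $\eta_0$ and $\epsilon > 0$,
\[
\mathbb{P}(\abs{\eta_h - \eta_0} \leq \epsilon \mid \theta_h) = \theta_h \, s + (1 - \theta_h)\, p = p + \theta_h (s - p),
\]
where I abbreviate $s = \mathbb{P}_{slab}(\mathbb{B}_\epsilon(\eta_0))$ and $p = \mathbb{P}_{spike}(\mathbb{B}_\epsilon(\eta_0))$. Marginalizing over $\theta_h$ by the tower property gives $\mathbb{P}(\abs{\eta_h - \eta_0} \leq \epsilon) = p + \mathbb{E}[\theta_h]\,(s - p)$, so the entire $h$-dependence is carried by the single scalar $\mathbb{E}[\theta_h]$.

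Next I would compute $\mathbb{E}[\theta_h]$ explicitly. Since $\theta_h = \prod_{\ell=1}^h \nu_\ell$ with the $\nu_\ell$ mutually independent, independence gives $\mathbb{E}[\theta_h] = \prod_{\ell=1}^h \mathbb{E}[\nu_\ell]$. Using the Beta means $\mathbb{E}[\nu_1] = a/(a + \kappa + 1)$ and $\mathbb{E}[\nu_\ell] = a/(a+1)$ for $\ell \geq 2$, this yields $\mathbb{E}[\theta_h] = \frac{a}{a + \kappa + 1}\left(\frac{a}{a+1}\right)^{h-1}$, and hence $\mathbb{E}[\theta_{h+1}] = \frac{a}{a+1}\,\mathbb{E}[\theta_h]$. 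Because $a > 0$ forces $a/(a+1) \in (0,1)$, the sequence $\set{\mathbb{E}[\theta_h]}$ is strictly decreasing, i.e.\ $\mathbb{E}[\theta_{h+1}] - \mathbb{E}[\theta_h] < 0$. (Note that the non-homogeneity parameter $\kappa$ only rescales the whole sequence by the common factor $a/(a+\kappa+1)$ and does not affect this strict monotonicity.)

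Finally I would combine the two pieces. Subtracting the two marginal probabilities,
\[
\mathbb{P}(\abs{\eta_{h+1} - \eta_0} \leq \epsilon) - \mathbb{P}(\abs{\eta_h - \eta_0} \leq \epsilon) = \left(\mathbb{E}[\theta_{h+1}] - \mathbb{E}[\theta_h]\right)(s - p).
\]
The hypothesis $\mathbb{P}_{slab}(\mathbb{B}_\epsilon(\eta_0)) < \mathbb{P}_{spike}(\mathbb{B}_\epsilon(\eta_0))$ is exactly $s - p < 0$, and the previous step gives $\mathbb{E}[\theta_{h+1}] - \mathbb{E}[\theta_h] < 0$; the product of these two negative factors is strictly positive, which is the claimed inequality. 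The only real subtlety worth flagging in the writeup is this sign reversal: decreasing the expected slab weight shifts mass toward the spike, so the probability of landing in a ball that the spike charges more heavily than the slab actually \emph{increases} with $h$. Everything else is a routine expectation computation, so I do not anticipate a genuine obstacle.
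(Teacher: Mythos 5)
Your proposal is correct and follows essentially the same route as the paper's proof: both condition on $\theta_h$, write the marginal probability as $\mathbb{P}_{spike}(\mathbb{B}_{\epsilon}(\eta_0)) + \mathbb{E}[\theta_h]\left[\mathbb{P}_{slab}(\mathbb{B}_{\epsilon}(\eta_0)) - \mathbb{P}_{spike}(\mathbb{B}_{\epsilon}(\eta_0))\right]$, compute $\mathbb{E}[\theta_h] = \frac{a}{a+\kappa+1}\left(\frac{a}{a+1}\right)^{h-1}$ from the independent Beta sticks, and conclude from the strict monotonicity of this sequence together with the sign hypothesis. No gaps.
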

Intuitively, for any $\eta_0$ that is favored by the spike distribution $\bbP_{spike}$ relative to the slab $\bbP_{slab}$, the $\SSIBP$ places more mass around $\eta_0$ as $h$ increases. We usually set $\eta_0 = 0$, which implies the prior places more mass around zero as $h$ increases.

Next we turn to an important property of the $\SSIBP$ prior's tail behavior that is crucial for obtaining a posterior that is adaptive to the latent space dimension and may be of interest for applications outside GLNEMs. When $\bbP_{spike} = \delta_0$ is a point mass concentrated at zero, the $\SSIBP$ prior induces an implicit prior on the number of non-zero elements of $\boldeta = (\eta_1, \dots, \eta_d)^{\top}$, that is, $\norm{\boldeta}_0$, that favors low-dimensions. Formally, the next theorem shows that this induced prior has exponentially decaying tails when we set $\bbP_{spike} = \delta_0$ and $a$ and $\kappa$ appropriately. 
\begin{theorem}[The Prior on $\norm{\boldeta}_0$ has Exponentially Decaying Tails]\label{thm:exp_decay}
If $\boldeta \sim \text{SS-IBP}_d(\allowbreak a, d^{1 + \delta}, \bbP_{spike}, \bbP_{slab})$ for $a \in (0, 1)$, $\delta \geq 6 / \log d$, and $\bbP_{spike} = \delta_0$, then for any $t \geq 1$,
\begin{equation}\label{eq:exp_decay}
    \bbP\left(\norm{\boldeta}_0 > t\right) \leq 2 \, e^{-t (\delta/6) \log d}.
\end{equation}
\end{theorem}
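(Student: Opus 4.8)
The plan is to reduce the statement to a tail bound for a Beta--Binomial random variable. Since $\bbP_{spike} = \delta_0$, conditional on the slab probabilities $\theta_{1:d}$ each coordinate is zero unless it is drawn from the slab; writing $Z_h$ for the indicator that $\eta_h$ comes from the slab, we have $Z_h \mid \theta_h \indsim \Bern(\theta_h)$ and $\norm{\boldeta}_0 \leq \sum_{h=1}^d Z_h$ almost surely (with equality when $\bbP_{slab}$ is nonatomic at $0$), so it suffices to bound $\bbP(\sum_h Z_h > t)$. The key structural observation is that $\theta_h = \prod_{\ell=1}^h \nu_\ell \leq \nu_1$ almost surely, because every $\nu_\ell \in [0,1]$. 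Hence, conditional on $\nu_1 = v$, the $Z_h$ are (after integrating out $\nu_{2:d}$) independent indicators each with success probability at most $v$, so a standard coupling gives the stochastic domination $(\sum_h Z_h \mid \nu_1 = v) \preceq \Bin(d, v)$. Mixing over $\nu_1 \sim \Beta(a, \kappa+1)$ preserves this domination, and therefore $\norm{\boldeta}_0 \preceq N$, where $N$ is Beta--Binomial$(d, a, \kappa+1)$-distributed.

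Second, I would establish the pointwise bound $\bbP(N = n) \leq (d/\kappa)^n$ for every $0 \leq n \leq d$. Writing the Beta--Binomial mass function $\bbP(N=n) = \binom{d}{n}\, B(n+a, d-n+\kappa+1)/B(a,\kappa+1)$ as a ratio of Gamma functions and expanding into rising factorials gives
\begin{equation*}
\bbP(N=n) = \binom{d}{n}\,\frac{\prod_{i=0}^{n-1}(a+i)\,\prod_{i=1}^{d-n}(\kappa+i)}{\prod_{i=1}^{d}(a+\kappa+i)}.
\end{equation*}
Here is where the hypothesis $a \in (0,1)$ enters: since $a \le 1$ we have $\prod_{i=0}^{n-1}(a+i) \le n!$, whence $\binom{d}{n}\prod_{i=0}^{n-1}(a+i) \le d!/(d-n)! \le d^n$. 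For the remaining ratio, I would cancel the numerator against the first $d-n$ factors of the denominator (each such quotient being $\le 1$) and bound the $n$ leftover denominator factors below by $\kappa^n$, yielding $\prod_{i=1}^{d-n}(\kappa+i)/\prod_{i=1}^d(a+\kappa+i) \le \kappa^{-n}$. Multiplying the two pieces gives $\bbP(N=n)\le (d/\kappa)^n$.

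Finally, I would substitute $\kappa = d^{1+\delta}$, so that $(d/\kappa)^n = d^{-\delta n} = e^{-\delta n \log d}$, and sum the resulting geometric tail:
\begin{equation*}
\bbP(\norm{\boldeta}_0 > t) \le \bbP(N > t) \le \sum_{n > t} e^{-\delta n \log d} \le \frac{e^{-\delta t \log d}}{1 - d^{-\delta}}.
\end{equation*}
The hypothesis $\delta \ge 6/\log d$ forces $d^{-\delta} \le e^{-6} < 1/2$, so $1/(1-d^{-\delta}) \le 2$, and the bound $\bbP(\norm{\boldeta}_0 > t) \le 2\, e^{-\delta t \log d}$ follows; this is in fact stronger than the stated $2\, e^{-t(\delta/6)\log d}$, so the theorem holds with room to spare. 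I expect the main obstacle to be the pointwise mass bound: one must manipulate the Gamma-function ratio carefully and verify the factor-by-factor inequalities uniformly in $n$ (including the boundary cases $n=0$ and $n=d$), whereas the stochastic-domination and geometric-summation steps are routine once the observation $\theta_h \le \nu_1$ is in hand.
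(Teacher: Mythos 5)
Your proposal is correct, and it takes a genuinely different route from the paper. The paper's proof splits on the event $\set{\theta_1 < \tilde{\theta}}$ for a carefully tuned threshold $\tilde{\theta} = t\delta\log d/(6d^{1+\delta})$: on that event it applies a Chernoff-type bound for sums of independent Bernoullis with decreasing success probabilities (Lemma~\ref{lemma:chernoff}), and it controls $\bbP(\nu_1 \geq \tilde{\theta})$ by a direct integral estimate using Gautschi's inequality for the Beta normalizing constant; the factor $6$ in the exponent is the price of balancing these two terms. You instead observe that $\theta_h \leq \nu_1$ for all $h$, dominate $\norm{\boldeta}_0$ stochastically by a Beta--Binomial$(d, a, \kappa+1)$ variable $N$, prove the pointwise bound $\bbP(N=n) \leq (d/\kappa)^n$ by elementary manipulation of the rising factorials (I checked the cancellation: $\binom{d}{n}\prod_{i=0}^{n-1}(a+i) \leq d^n$ uses $a\leq 1$, and the leftover $n$ denominator factors are each at least $\kappa$), and sum the geometric tail. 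This avoids both the Chernoff lemma and Gautschi's inequality and yields the strictly stronger bound $2e^{-t\delta\log d}$ in place of $2e^{-t(\delta/6)\log d}$. One small imprecision to fix in the write-up: after integrating out $\nu_{2:d}$ the indicators $Z_h$ are \emph{not} independent given $\nu_1$ (they share the randomness of $\nu_{2:d}$); the domination should instead be established by conditioning on all of $\theta_{1:d}$, where the $Z_h$ are genuinely independent with $\theta_h \leq \nu_1$, coupling $\sum_h Z_h \preceq \Bin(d,\nu_1)$ there, and then noting that the dominating law depends only on $\nu_1$ so the bound survives marginalizing $\nu_{2:d}$. With that rewording the argument is complete.
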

It has been widely observed in the literature on posterior asymptotics that an exponentially decaying tail is an sufficient property for obtaining optimally behaving posteriors~\citep{castillo2012, rockova2016}. Note that Theorem~\ref{thm:exp_decay} leaves the slab distribution unspecified; however, a sufficiently heavy-tailed slab is used in Theorem~\ref{thm:post_con} to ensure property (\ref{eq:exp_decay}) transfers to the posterior. 

\subsection[morestuff]{An SS-IBP Prior for $\bLambda$ in GLNEMs}

Now, we use the $\SSIBP$ to construct a prior for $\bLambda = \diag(\blambda)$ that promotes sparsity and induces a soft identifiability constraint that counteracts the permutation invariance of the GLNEM's likelihood. In particular, we use the following prior:
\begin{align}\label{eq:static_ssibp}
    &(\lambda_1, \dots, \lambda_d) \sim \SSIBP, \nonumber \\
    &\bbP_{slab} = \Laplace(b), \qquad \bbP_{spike} = \Laplace(v_0b),
\end{align}
where $\Laplace(b)$ denotes a Laplace distribution centered at the origin with scale $b$.

The next corollary to Proposition~\ref{prop:ssibp-ordering} states that this prior induces a stochastic ordering on $\bLambda$'s diagonal elements, which counteracts the likelihood's column permutation invariance. 
\begin{corollary}\label{prop:static_marginal_order}
    If $(\lambda_1, \dots, \lambda_h)$ are drawn from the SS-IBP prior defined in 
    (\ref{eq:static_ssibp}), then, for any $\epsilon > 0$, $\bbP(\abs{\lambda_{h}} \leq \epsilon) < \bbP(\abs{\lambda_{h+1}} \leq \epsilon)$, whenever $0 \leq v_0 < 1$.
\end{corollary}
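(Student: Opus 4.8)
The plan is to obtain Corollary~\ref{prop:static_marginal_order} as an immediate specialization of Proposition~\ref{prop:ssibp-ordering}. The prior in (\ref{eq:static_ssibp}) is exactly an $\SSIBP$, so the proposition applies verbatim once we take $\eta_0 = 0$, in which case $\bbB_\epsilon(0) = [-\epsilon, \epsilon]$ and the conclusion reads $\bbP(\abs{\lambda_h} \le \epsilon) < \bbP(\abs{\lambda_{h+1}} \le \epsilon)$. Thus the entire task reduces to verifying the single hypothesis of the proposition, namely the strict inequality $\bbP_{slab}(\bbB_\epsilon(0)) < \bbP_{spike}(\bbB_\epsilon(0))$, for the Laplace spike and slab over the stated range $0 \le v_0 < 1$.

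First I would compute the mass a centered Laplace places on the symmetric interval. For a $\Laplace(s)$ variable $X$, a direct integration of the density $\tfrac{1}{2s}e^{-\abs{x}/s}$ gives $\bbP(\abs{X} \le \epsilon) = 1 - e^{-\epsilon/s}$. Applying this to the slab scale $s = b$ and the spike scale $s = v_0 b$ yields $\bbP_{slab}(\bbB_\epsilon(0)) = 1 - e^{-\epsilon/b}$ and $\bbP_{spike}(\bbB_\epsilon(0)) = 1 - e^{-\epsilon/(v_0 b)}$. Since $\epsilon > 0$ and $0 < v_0 < 1$ force $\epsilon/(v_0 b) > \epsilon/b > 0$, monotonicity of the exponential gives $e^{-\epsilon/(v_0 b)} < e^{-\epsilon/b}$ and hence the required strict inequality. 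Equivalently, the map $s \mapsto 1 - e^{-\epsilon/s}$ is strictly decreasing, so the more concentrated spike (scale $v_0 b < b$) places strictly more mass near the origin than the slab.

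The only point requiring separate care --- and the closest thing to an obstacle --- is the boundary value $v_0 = 0$, at which $\bbP_{spike} = \Laplace(0)$ degenerates to the point mass $\delta_0$ and the formula above is not directly valid. Here I would argue directly: $\bbP_{spike}(\bbB_\epsilon(0)) = \delta_0([-\epsilon,\epsilon]) = 1$, while $\bbP_{slab}(\bbB_\epsilon(0)) = 1 - e^{-\epsilon/b} < 1$ for every $\epsilon > 0$, so the strict inequality again holds. With the hypothesis thereby verified across the full range $0 \le v_0 < 1$, the conclusion follows by invoking Proposition~\ref{prop:ssibp-ordering} with $\eta_0 = 0$.
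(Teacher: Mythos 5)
Your proposal is correct and follows the same route as the paper: the corollary is obtained by applying Proposition~\ref{prop:ssibp-ordering} with $\eta_0 = 0$ and verifying that a Laplace with the smaller scale $v_0 b$ places strictly more mass on $[-\epsilon,\epsilon]$ than the slab at scale $b$, which you do via the explicit formula $1 - e^{-\epsilon/s}$. Your separate treatment of the degenerate boundary case $v_0 = 0$ (where the spike is $\delta_0$) is a careful addition that the paper leaves implicit, but the argument is otherwise identical.
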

This corollary follows immediately from the fact that the Laplace distribution puts greater mass around zero when the scale parameter is near zero. In particular, the condition $v_0 < 1$ ensures that the spike distribution is more concentrated around zero than the slab distribution.  In the remainder of this article, we will set $v_0 = 0$, which corresponds to using a discrete spike-and-slab prior with $\bbP_{spike} = \delta_0$, a point mass concentrated at 0. We provide the more general result because setting $v_0 > 0$ can have computational advantages.

\section{Theoretical Results on Dimension Selection}\label{sec:theory}

Next, we provide theoretical support for using the posterior distribution of $\norm{\bLambda}_0$ to infer the dimension of the latent space for GLNEMs when an SS-IBP prior is placed on $\bLambda$. Specifically, we consider the following model augmented with priors for the parameters:
\begin{align*}
    &Y_{ij} = Y_{ji} \mid \bx_{ij} \indsim Q\left\{\cdot \mid g^{-1}\left(\bbeta^{\top} \bx_{ij} + \left[\bU \bLambda\bU^{\top}\right]_{ij}\right), \phi\right\}, \qquad 1 \leq i \leq j \leq n,  \\
    &\bbeta \sim N(\mathbf{0}_p, \sigma^2_{\beta} \bI_p), \qquad \bU \sim \Pi_{\bU}, \qquad (\lambda_1, \dots, \lambda_d) \sim \SSIBP,
\end{align*}
where the edge variable's distribution function $Q$ is a member of the exponential dispersion family defined in 
(\ref{eq:random}), $\Pi_{\bU}$ is any distribution fully supported on $\mathcal{V}_{d,n}$, $\bbP_{spike} = \delta_0$, and $\bbP_{slab} = \Laplace(b)$. Because we will study the asymptotics of the posterior distribution as the number of nodes $n$ goes to infinity, we will denote the adjacency matrix by $\bY^{(n)}$ and the dyad-wise covariate matrices by $\bX_k^{(n)}$ for $1 \leq k \leq p$ to make the dependence on $n$ apparent. We denote the posterior distribution as $\Pi(\cdot \mid \bY^{(n)})$.



The goal of our analysis is to study the frequentist properties of the posterior distribution assuming the matrices $\bY^{(n)}$ are generated from model (\ref{eq:systematic}) -- (\ref{eq:edm}) with true non-zero latent space dimension $d_0$ and true parameters $\set{\bbeta_0, \bU_0, \bLambda_0}$ where $\bbeta_0 \in \Reals{p}, \bLambda_0 \in \Reals{d_0 \times d_0}, \bU_0 \in \mathcal{V}_{d_0, n}$, and $\norm{\bLambda_0}_0 = d_0$. Let $\bbP_0^{(n)}$ and $\bbE_0^{(n)}$ denote the probability and expectation under this true data generating process.  We consider the scenario that $d_0$ grows with $n$ while $p$ is fixed. As such, we must also increase the prior's truncation level $d$ as $n$ increases. We find that the following conditions on the growth rate of $d$ are sufficient to obtain our results:

\begin{cond}[Growth of $d$ with $n$]\label{assump:A3}
    $d = \lceil n^{\gamma} \rceil$ for some $\gamma \in (0, 1]$.
\end{cond}

\begin{cond}[Bounded scale parameter]\label{assump:A1}
    For sufficiently large $n$, $b \leq K d$ for a constant $K > 0$ independent of $n$.
\end{cond}

Condition \ref{assump:A3} requires $d$ to grow as a fractional power of $n$. Condition~\ref{assump:A1} constrains the scale of the Laplace slab. This condition holds when $b$ is a constant or grows as a fractional power (less than or equal to $\gamma$) of $n$, 
such as our choice of $b = \sqrt{n/2}$ in the following sections.

Next, we make the following assumptions on the truth:

\begin{assump}[Growth of $d_0$]\label{assump:A2}
    $d_0 / \log d \rightarrow  0$ as $n \rightarrow \infty$.
\end{assump}

\begin{assump}[Bounded $\bLambda_0$]\label{assump:A4}
$\sup_{1 \leq h \leq d_0}{\abs{\lambda_{0h}}} \leq K_{\lambda}$ for some $K_{\lambda} > 0$.
\end{assump}

\begin{assump}[Bounded latent space]
    $\max_{1\leq i \leq n} \norm{\bu_{0,i}}_2 \leq K_{u}$ for some $K_{u} > 0$.
\end{assump}

\begin{assump}[Bounded covariate effects]\label{assump:A6}
    $\norm{\bbeta_0}_2 \leq K_{\beta}$ for some $K_{\beta} > 0$. 
\end{assump}


\begin{assump}[Bounded covariates]\label{assump:A5}
    $\max_{1 \leq i \leq j \leq n} \norm{\bx_{ij}}_2 \leq K_x$ for some $K_x > 0$.
\end{assump}

Assumption \ref{assump:A2} sets the growth rate of the true latent space dimension $d_0$ relative to the truncation level $d$, which trivially holds for the common assumption that $d_0$ is bounded due to Condition \ref{assump:A3}. Assumptions \ref{assump:A4} -- \ref{assump:A5} are equivalent to bounding $\max_{1 \leq i \leq j \leq n}\abs{\bbeta^{\top}_0\bx_{ij} + \bu_{0i}^{\top}\bLambda_0\bu_{0j}}$,
which is common in the LSM literature~\citep{wu2017, ma2020}. 
For a Bernoulli GLNEM, these bounds imply that the resulting networks are dense. 

Finally, we need the following mild assumptions about the function $b(\cdot)$ and the link function $g(\cdot)$. These conditions are satisfied by all GLNEMs in this work.

\begin{assump}[Bounded variance]\label{assump:A7}
    For any compact subset $\mathcal{K} \subset \Theta$, there exists positive constants $K_{b,1}, K_{b,2} > 0$ such that $K_{b,1} \leq \inf_{\theta \in \mathcal{K}} b''(\theta) \leq \sup_{\theta \in \mathcal{K}} b''(\theta) \leq K_{b,2}$.
\end{assump}

\begin{assump}[Inverse link has a bounded derivative]\label{assump:A8}
    $\sup_{\set{\eta:\ \abs{\eta} \leq M}} (g^{-1})'(\eta) \leq K_{g}$ for some $K_g > 0$, where $M$ is the bound implied by Assumptions \ref{assump:A4} -- \ref{assump:A5}.
\end{assump}

The next theorem states that the average posterior probability that $\norm{\bLambda}_0$ overshoots a constant multiple of the true latent space dimension $d_0$ goes to zero as the number of nodes $n$ and prior truncation level $d$ simultaneously increase. 

\begin{theorem}[The Posterior Concentrates on Low Dimensions in Expectation]\label{thm:post_con}
    Assume model (\ref{eq:systematic}) -- (\ref{eq:edm}) with true non-zero latent space dimension $d_0$ and true parameters $\set{\bbeta_0, \bLambda_0, \bU_0}$ such that $\norm{\bLambda_0}_0 = d_0$. Assume the following prior: $\blambda \sim \text{SS-IBP}_d(1/d, d^{1 + \delta}, \bbP_{spike}, \bbP_{slab})$ for $\delta > 0$,  $\bbP_{spike} = \delta_0$, $\bbP_{slab} = \Laplace(b)$ for $b \geq 1$, $\bbeta \sim N(\mathbf{0}_p, \sigma_{\beta}^2 \bI_p)$, $\bU \in \mathcal{V}_{d,n}$ with prior probability one, and $b$ and $d$ satisfy \ref{assump:A3} - \ref{assump:A1}. Assume \ref{assump:A2} - \ref{assump:A8} hold, then 
\[
    \lim_{n \rightarrow \infty} \bbE_0^{(n)} \, \Pi\left(\, \norm{\bLambda}_0 > C d_0 \mid \bY^{(n)} \, \right) = 0,
\]
    for some $C > 1$ that only depends on $\delta$ and $K_{\lambda}$.
\end{theorem}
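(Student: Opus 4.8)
The plan is to control the expected posterior mass of the overfitting event through the evidence (marginal likelihood) ratio. Writing $p_\theta$ for the GLNEM likelihood at $\theta = (\bbeta, \bU, \bLambda)$ and $p_0$ for the likelihood at the truth, I would study
\[
\Pi\!\left(\norm{\bLambda}_0 > Cd_0 \mid \bY^{(n)}\right) = \frac{N_n}{D_n}, \qquad N_n = \int_{\{\norm{\bLambda}_0 > Cd_0\}} \frac{p_\theta}{p_0}(\bY^{(n)})\, d\Pi(\theta), \quad D_n = \int \frac{p_\theta}{p_0}(\bY^{(n)})\, d\Pi(\theta),
\]
lower-bounding the denominator and upper-bounding the numerator in $\bbE_0^{(n)}$-expectation.

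First I would establish the denominator bound via the usual Kullback--Leibler prior-mass argument. The number of independent dyads is of order $n^2$, so I would exhibit a neighborhood $B_n$ of the truth on which both the averaged KL divergence and KL variation between $p_0$ and $p_\theta$ are $O(\epsilon_n^2)$ for a suitable rate $\epsilon_n$, and conclude that $D_n \geq \Pi(B_n)\, e^{-c n^2 \epsilon_n^2}$ with $\bbP_0^{(n)}$-probability approaching one. Assumptions \ref{assump:A7} and \ref{assump:A8} convert a Euclidean neighborhood of the linear predictor into such a KL neighborhood through a second-order expansion of $b(\cdot)$. The essential input is a lower bound on $\Pi(B_n)$: because $\bbP_{spike} = \delta_0$, the SS-IBP must assign exactly the first $d_0$ coordinates to the slab---an event whose probability is the product $\prod_{h \le d_0}\theta_h$ under the prior---while the Laplace slab places the nonzero $\lambda_h$ within $O(\epsilon_n)$ of the bounded true values $\lambda_{0h}$ (Assumption \ref{assump:A4}); the Gaussian prior on $\bbeta$ and the full-support prior on $\bU$ contribute further controllable factors.

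For the numerator I would first note that the naive Ghosal--Ghosh--van der Vaart testing route is unavailable: an overfitting model with $\norm{\bLambda}_0 = k > d_0$ can reproduce $\bU_0\bLambda_0\bU_0^{\top}$ to arbitrary accuracy by making its $k-d_0$ extra eigenvalues tiny, so the set $\{\norm{\bLambda}_0 > Cd_0\}$ is not separated from the truth in Hellinger distance and no uniformly consistent test exists. I would therefore split the overfitting event by the distance of $\bU\bLambda\bU^{\top}$ from $\bU_0\bLambda_0\bU_0^{\top}$. On the part where this distance exceeds $\epsilon_n$, exponentially powerful tests do exist, with type-II errors controlled by the metric entropy of the rank-$k$ mean matrices restricted to the slab support; this entropy grows only polynomially in $n$ and $k$ and is overwhelmed by the prior's exponential tail. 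On the complementary near-truth part, I would compare marginal likelihoods directly: decomposing by the realized support $S$ with $\abs{S} = k$, the marginal likelihood $m_S(\bY^{(n)})$ of an overfit support exceeds that of the true support only up to an Occam factor, whereas the prior mass obeys $\bbP(\norm{\bLambda}_0 = k) \le 2\,e^{-(k-1)(\delta/6)\log d}$ by the mechanism behind Theorem~\ref{thm:exp_decay}.

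The main obstacle is precisely this near-truth overfitting regime, where the likelihood provides essentially no discrimination and the entire penalization must be supplied by the prior. The delicate step is to show that the exponential prior penalty $e^{-k(\delta/6)\log d}$ dominates, uniformly over all $k > Cd_0$, the combined effect of the $\binom{d}{k}$ candidate supports and the marginal-likelihood Occam factor accrued over the extra dimensions. This is where Condition~\ref{assump:A3} enters decisively: $d = \lceil n^{\gamma}\rceil$ forces $\log d \asymp \gamma \log n$, so each surplus dimension costs a factor $d^{-\delta/6}$, of order $n^{-\gamma\delta/6}$, in prior mass that cannot be recovered in fit, and the choice $\kappa = d^{1+\delta}$ calibrates this rate. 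Summing the resulting geometric series over $k > Cd_0$ and dividing by the denominator bound yields a quantity tending to zero, with the threshold constant $C > 1$ emerging from the requirement that the net exponent stay negative once $k$ exceeds $Cd_0$; its dependence on $\delta$ and $K_{\lambda}$ reflects the trade-off between the prior's tail rate and the magnitude of the true eigenvalues permitted by Assumption~\ref{assump:A4}.
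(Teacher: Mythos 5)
Your denominator analysis is essentially the paper's: a Kullback--Leibler prior-mass argument (Lemma~\ref{lemma:elbo}) giving $D_n \geq e^{-C_1 n(n-1)\epsilon_n^2}$ with $n(n-1)\epsilon_n^2 \asymp K_{\lambda}^2 d_0 \log d$ on an event of $\bbP_0^{(n)}$-probability tending to one, with the SS-IBP support probability, the Laplace slab near the bounded $\lambda_{0h}$, and the Gaussian prior on $\bbeta$ supplying the prior mass exactly as you describe. Your diagnosis of the numerator is also half right: the overfitting event $\{\norm{\bLambda}_0 > Cd_0\}$ is indeed not separated from the truth, so no testing argument applies to it.

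But there is a genuine gap in how you then handle the numerator, and it causes you to build machinery that is both unnecessary and never actually executed. The missing idea is Fubini's theorem: since the integrand of $N_n$ is a likelihood ratio, $\bbE_0^{(n)}\bigl[\prod_{i\le j} q_{ij}(Y_{ij})/q_{0ij}(Y_{ij})\bigr] = \prod_{i\le j}\int q_{ij}\,dy_{ij} \le 1$ for \emph{every} parameter value, so exchanging expectation and integration gives $\bbE_0^{(n)}[N_n] \le \Pi\bigl(\norm{\bLambda}_0 > Cd_0\bigr)$ outright. Combined with Theorem~\ref{thm:exp_decay}, which bounds this prior probability by $2e^{-Cd_0(\delta/6)\log d}$, and the denominator bound, the result follows by taking $C > C_1(6/\delta)K_{\lambda}^2 \vee 1$ so that the net exponent $-(C\delta/6 - C_1K_{\lambda}^2)d_0\log d$ is negative. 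Your proposed alternative --- exponential tests on the far-from-truth part plus a support-by-support marginal-likelihood comparison with ``Occam factors'' on the near-truth part --- leaves its central step (that the prior penalty dominates the $\binom{d}{k}$ supports and the Occam factors uniformly over $k > Cd_0$) asserted rather than proved, and it is far from clear it could be carried out cleanly in this nonparametric setting; in any case it is not needed, because in $\bbE_0^{(n)}$-expectation the likelihood ratio contributes nothing and the prior tail does all the work, which is precisely the intuition you voiced but did not convert into the one-line Fubini argument.
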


Practically, Theorem~\ref{thm:post_con} justifies the following strategy for estimating $d_0$. Without knowledge of $d_0$, one would set the prior truncation level $d$ as large as possible with the hope that the shrinkage prior will select an appropriate dimension. An objection to this strategy is that employing too many latent dimensions may cause the posterior to over-estimate the dimension of the latent space due to over-fitting. Theorem~\ref{thm:post_con} alleviates this concern by showing that the posterior will asymptotically concentrate on low-dimensional latent spaces near the true dimension $d_0$.

\section{Estimation}\label{ss:sec:estimation}

In this section, we develop a Markov chain Monte Carlo (MCMC) algorithm to sample from the GLNEM's posterior. Due to the high-dimensionality of the latent variables and the necessity to provide an algorithm that applies over a wide range of random and systematic components, we use gradient-based sampling methods. Specifically, we propose a Metropolis-within-Gibbs sampler with Hamiltonian Monte Carlo (HMC) proposals. However, the requirement that $\bU$ must lie in $\bar{\mathcal{V}}_{d,n}$ poses a challenge for gradient-based samplers because they often require the parameters to lie in an unconstrained space. As such, we first propose various parameter expansion strategies that overcome this limitation.


\subsection{A Prior Over $\bar{\mathcal{V}}_{d,n}$}\label{subsec:semi_ortho_prior}

An explicit prior density over centered semi-orthogonal matrices is difficult to construct due to the constraints placed on these matrices. In fact, we are unaware of any explicit prior density over $\bar{\mathcal{V}}_{d,n}$ currently used for Bayesian inference. Instead, we use the following probabilistic construction based on the QR decomposition:
\begin{enumerate}
\item Sample $\bB \in \Reals{n \times d}$ with entries independently drawn from a standard normal $N(0, 1)$.
\item Calculate the unique QR decomposition of $\tilde{\bB} = [\mathbf{1}_n, \bB] \in \Reals{n \times (d+1)}$, i.e., $\tilde{\bB} = \bQ \bR$, where $\bQ \in \mathcal{V}_{d+1,n}$ and $\bR \in \Reals{(d+1) \times (d+1)}$ is an upper-triangular matrix with positive elements on its diagonal.
\item Set $\bU$ to the last $d$ columns of $\bQ$, i.e., $\bU = \bQ_{2:(d+1)}$.
\end{enumerate}
This procedure generates a matrix $\bU \in \bar{\mathcal{V}}_{d,n}$ since the columns of $\bQ_{2:(d+1)}$ lie in the orthogonal complement of $\spn(\bQ_1) = \spn(\mathbf{1}_n)$, where $\bQ_1$ is the first column of $\bQ$. To make the dependence of $\bU$ on $\bB$ explicit, we denote $\bU_{\bB}$ as a matrix constructed in this fashion. Also, the following result holds.
\begin{proposition}\label{prop:prior_support}
The support of the probability distribution of $\bU_{\bB}$ equals $\bar{\mathcal{V}}_{d,n}$.
\end{proposition}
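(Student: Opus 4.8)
The plan is to establish the two set inclusions that characterize a support: that the law of $\bU_{\bB}$ assigns full probability to $\bar{\mathcal{V}}_{d,n}$, giving $\supp(\bU_{\bB}) \subseteq \bar{\mathcal{V}}_{d,n}$; and that every point of $\bar{\mathcal{V}}_{d,n}$ lies in the support, giving $\bar{\mathcal{V}}_{d,n} \subseteq \supp(\bU_{\bB})$.

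First I would handle the containment $\supp(\bU_{\bB}) \subseteq \bar{\mathcal{V}}_{d,n}$. Since the entries of $\bB$ are i.i.d.\ $N(0,1)$ and $\mathbf{1}_n$ is appended, the matrix $\tilde{\bB} = [\mathbf{1}_n, \bB]$ has full column rank with probability one (the rank-deficient set is a Lebesgue-null algebraic set and the Gaussian law is absolutely continuous). On this event the positive-diagonal QR decomposition is unique, the first column of $\bQ$ equals $\mathbf{1}_n/\sqrt{n}$, and orthonormality of $\bQ$'s columns gives $\bU^{\top}\bU = \bI_d$ together with $\bU^{\top}\mathbf{1}_n = \sqrt{n}\,\bU^{\top}\bQ_1 = \mathbf{0}_d$, exactly as noted before the statement. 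Thus $\bU_{\bB} \in \bar{\mathcal{V}}_{d,n}$ almost surely, and since $\bar{\mathcal{V}}_{d,n}$ is closed the support is contained in it.

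For the reverse inclusion I would fix an arbitrary target $\bU_0 \in \bar{\mathcal{V}}_{d,n}$ and show every open neighborhood of $\bU_0$ receives positive probability. The key observation is that the construction is surjective with an explicit preimage: taking $\bB_0 = \bU_0$ gives $\tilde{\bB}_0 = [\mathbf{1}_n, \bU_0]$, whose columns are already orthogonal (from $\bU_0^{\top}\mathbf{1}_n = \mathbf{0}_d$ and $\bU_0^{\top}\bU_0 = \bI_d$), so its QR factor is $\bQ = [\mathbf{1}_n/\sqrt{n}, \bU_0]$ with $\bR = \diag(\sqrt{n}, 1, \dots, 1)$; this factorization has positive diagonal and is therefore the unique one, yielding $\bU_{\bB_0} = \bU_0$. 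In particular $\tilde{\bB}_0$ has full column rank, so the map $\bB \mapsto \bU_{\bB}$ --- a composition of appending $\mathbf{1}_n$, the positive-diagonal QR map, and projection onto the last $d$ columns --- is continuous on a neighborhood of $\bB_0$, because the QR decomposition is continuous on the open set of full-column-rank matrices.

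Combining these, given any $\epsilon > 0$, continuity at $\bB_0$ furnishes $\delta > 0$ with $\norm{\bB - \bB_0}_F < \delta \Rightarrow \norm{\bU_{\bB} - \bU_0}_F < \epsilon$; since the Gaussian density is strictly positive, the ball $\set{\bB : \norm{\bB - \bB_0}_F < \delta}$ has positive probability, so $\bbP(\norm{\bU_{\bB} - \bU_0}_F < \epsilon) > 0$. As $\epsilon$ is arbitrary, $\bU_0 \in \supp(\bU_{\bB})$, and since $\bU_0$ was arbitrary the two inclusions give $\supp(\bU_{\bB}) = \bar{\mathcal{V}}_{d,n}$. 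I expect the only delicate point to be justifying continuity of the positive-diagonal QR map on the full-rank set; this is standard, since the Gram--Schmidt formulas are rational in the entries with strictly positive denominators away from rank-deficiency, and everything else is either the already-noted almost-sure membership or the explicit preimage computation.
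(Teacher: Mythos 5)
Your proof is correct and follows essentially the same route as the paper's: the paper proves a general push-forward lemma (a continuous surjection maps a fully supported measure to a fully supported one), exhibits the same explicit preimage $\bB_0 = \bU_0$, and invokes full support of the Gaussian, while you simply inline that lemma as an $\epsilon$--$\delta$ argument and add the (easy) containment $\supp(\bU_{\bB}) \subseteq \bar{\mathcal{V}}_{d,n}$. As a minor bonus, your factorization $\bQ = [\mathbf{1}_n/\sqrt{n}, \bU_0]$, $\bR = \diag(\sqrt{n}, 1, \dots, 1)$ is the correctly normalized QR decomposition, whereas the paper's displayed factors $\bQ = [\mathbf{1}_n, \bU_0]$, $\bR = \bI_{d+1}$ overlook that $\mathbf{1}_n$ is not a unit vector.
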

This proposition states that the probability distribution induced by steps 1--3 assigns positive probability to all open neighborhoods of $\bar{\mathcal{V}}_{d,n}$'s elements, which makes the distribution suitable as a prior when performing posterior inference over parameters in $\bar{\mathcal{V}}_{d,n}$.  Lastly, since this procedure is differentiable with respect to $\bB$~\citep{walter2012}, we can easily include it as a prior in gradient-based sampling methods such as the one we develop in Section~\ref{subsec:hmc_gibbs}.


\subsection{Hierarchical Representation of the SS-IBP with a Laplace Slab}\label{subsec:change_of_variable}

To improve the mixing of the MCMC algorithm, we use a hierarchical representation of the $\text{SS-IBP}_d(a, \kappa, \delta_0, \Laplace(b))$ prior based on the exponential scale mixture representation of the Laplace distribution~\citep{park2008}. The slab probabilities $\theta_1, \dots, \theta_d$ come from 
(\ref{eq:ssibp_prior}); however, we introduce binary indicator variables, $Z_1, \dots, Z_d$, such that for $h = 1, \dots, d$,
\begin{equation*}
    Z_h \mid \theta_h \indsim \Bern(\theta_h), \quad \sigma_h^2 \iidsim \Exp(1/2b^2), \quad \lambda_h \indsim Z_h N(0, \sigma_h^2) + (1 - Z_h) \delta_0.
\end{equation*}
Marginalizing over $\sigma_{1:d}^2$ and $Z_{1:d}$ recovers the original $\text{SS-IBP}_d(a, \kappa, \delta_0, \Laplace(b))$ prior. To further improve mixing, we use a non-centered parameterization~\citep{papas2007} for $\lambda_h$, that is, we set $\lambda_h = Z_h \sigma_h \tilde{\lambda}_h$, where $\tilde{\lambda}_h \iidsim N(0, 1)$ for $1 \leq h \leq d$. We denote $\bLambda(Z_{1:d}) = \diag(Z_1 \sigma_1 \tilde{\lambda}_1, \dots, Z_d \sigma_d \tilde{\lambda}_d)$ to make the dependence of $\bLambda$ on $Z_{1:d}$ clear.

\subsection{The Metropolis-within-Gibbs Sampler}\label{subsec:hmc_gibbs}

Here we outline the MCMC algorithm that we use to sample from the GLNEM's posterior. Let $\bpsi = \set{\bU_{\bB}, \bLambda(Z_{1:d}), \bbeta, \sigma_{1:d}^2, \theta_{1:d}}$ denote the collection of model parameters excluding the dimension indicator variables $Z_{1:d}$. Overall, the algorithm is a Gibbs sampler that alternates between sampling $\bpsi$ and $Z_{1:d}$ from their full conditional distribution $p(\bpsi \mid Z_{1:d}, \bY)$ and $p(Z_{1:d} \mid \bpsi, \bY)$, respectively. To sample from $p(\bpsi \mid Z_{1:d}, \bY)$, we use Hamiltonian Monte Carlo (HMC) with adaptive tuning parameter selection~\citep{neal2011, hoffman2014} as implemented in NumPyro~\citep{phan2019, bingham2019}. To sample from $p(\bpsi \mid Z_{1:d}, \bY)$ using NumPyro's HMC routine, we must provide the logarithm of this density up to an additive constant, which can be found in Section~\ref{sec:full_cond} of the supplementary material. The model parameters are initialized as detailed in Section~\ref{sec:mcmc_info} of the supplementary material. Algorithm~\ref{alg:hmc_within_gibbs} outlines the full Metropolis-within-Gibbs algorithm. 

Up until now, we have made the unrealistic assumption that the dispersion parameter $\phi$ is known. In addition, the Tweedie distribution has an  unknown power parameter $1 < \xi < 2$. Under the proposed Bayesian inference scheme, we can infer these parameters by giving them priors and analyzing their posteriors. We place a half-Cauchy prior on $\phi$ and a uniform prior on $\xi$, i.e., $\phi \sim C^{+}(0, 1)$ and $\xi \sim \textrm{Unif}[1.01, 1.99]$. The truncation of the endpoints for $\xi$ avoids the multimodality of the Tweedie density.  These parameters are then sampled as a part of step 1 in Algorithm~\ref{alg:hmc_within_gibbs}. Because the normalizing constant for the Tweedie distribution is intractable, we approximate it using the series expansion proposed by \citet{dunn2005}.

\begin{sampler}[t]
Given the previous parameters $\bpsi^{(s)}$ and $Z_{1:d}^{(s)}$, update the current parameters as follows:
\begin{enumerate}
\item Sample $\bpsi^{(s + 1)}$ from $p(\bpsi \mid Z_{1:d}^{(s)}, \bY)$ defined in Equation (\ref{eq:hmc_logpost}) of the supplementary material using HMC with adaptive tuning.
\item Let $\pi \in \texttt{Permutation}(\set{1, \dots, d})$ and set $Z_{1:d}^{(s+1)} = Z_{1:d}^{(s)}$. For $h = 1, \dots, d$:

Sample $Z_{\pi(h)}^{(s + 1)}$ from $p(Z_{\pi(h)} \mid Z_{-\pi(h)}^{(s+1)}, \bpsi^{(s+1)}, \bY)$, where
\[
\text{odds}(Z_{\pi(h)} = 1 \mid Z_{-\pi(h)}^{(s+1)}, \bpsi^{(s+1)}, \bY) = \frac{p(\bY \mid \bTheta(Z_{1:d}^{(s+1)}), Z_{\pi(h)} = 1, Z_{-\pi(h)}^{(s+1)})}{p(\bY \mid \bTheta(Z_{1:d}^{(s+1)}), Z_{\pi(h)} = 0, Z_{-\pi(h)}^{(s+1)})} \times \frac{\theta_{\pi(h)}^{(s+1)}}{1 - \theta_{\pi(h)}^{(s+1)}}.
\]
and
\[
    \bTheta(Z_{1:d}^{(s+1)}) =  (g \circ b')^{-1}\left(\sum_{k=1}^p \bX_k \beta^{(s+1)}_k  + \bU_{\bB}^{(s+1)} \bLambda^{(s+1)}(Z_{1:d}^{(s+1)}) \bU_{\bB}^{(s+1) \, \textrm{T}}\right),
\]
        with $(g \circ b')^{-1}$ applied element-wise. The likelihood terms are from model (\ref{eq:systematic}) -- (\ref{eq:linpred}).
\end{enumerate}
Note that $\texttt{Permutation}(\set{1, \dots, d})$ denotes a different random permutation of $\set{1, \dots, d}$ at each iteration. Also, we use $Z_{-h}$ to indicate the collection of $Z_{1:d}$ with the $h$-th element removed, i.e., $(Z_1, \dots, Z_{h-1}, Z_{h+1}, \dots, Z_d)$.
\caption{Metropolis-within-Gibbs Sampler for GLNEMs}
\label{alg:hmc_within_gibbs}
\end{sampler}

Although the $\SSIBP$ prior places soft identifiability constraints on the parameters, the posterior remains invariant to signed-permutations of $\bU$'s columns. As such, we post-process the posterior samples by matching to some reference configuration based on the Frobenious norm. We take the maximum a posteriori (MAP) estimates as the reference. This matching is a linear assignment problem, which we solve with the Hungarian method~\citep{kuhn1955}. In addition, $\bU$'s posterior mean is not guaranteed to lie in $\bar{\mathcal{V}}_{d,n}$. As such, we take a Bayesian decision theoretic approach and use $\bU$'s posterior Fr\'echet mean as a point estimate. See Section~\ref{sec:mcmc_info} of the supplementary material for details.

\section{Simulation Studies}\label{ss:sec:simulation}

In this section,  we present simulation studies that assessed different properties of the SS-IBP prior and our estimation method. We performed two studies that evaluated the following tasks: (1) the ability of the MCMC algorithm to recover the model parameters, and (2) the accuracy of the $\SSIBP$ prior in recovering the true dimension of the latent space compared to existing methods. An assessment of the method's sensitivity to model misspecification and zero-inflation is included in Section~\ref{sec:zero-inflation} of the supplement.

\subsection{Simulation Setup and Competing Methods}\label{ss:subsec:competition}

For various values of $n$, we generated 50 synthetic networks from the following five GLNEMs with parameters $\set{\bbeta_0, \bU_0, \bLambda_0}$ and true dimension $d_0$: (1) Bernoulli with a logistic link, (2) Gaussian with an identity link and variance $\sigma^2 = 9.0$, (3) Poisson with a log link, (4) negative binomial with a log link and dispersion $\phi = 0.5$, and (5) Tweedie with a log link, dispersion $\phi = 10$, and power parameter $\xi = 1.6$. For each network, we sampled an initial latent position matrix $\tilde{\bB} = (\tilde{\bb}_1, \dots, \tilde{\bb}_n)^{\top}$ by drawing $\tilde{\bb}_i \iidsim 0.5 N(\bmu, 0.1^2) + 0.5 N(-\bmu, 0.1^2)$, where $\bmu = (1/\sqrt{d_0})\mathbf{1}_{d_0}$. We set $\bU_0 = \bU_{\tilde{\bB}}$, where $\bU_{\tilde{\bB}}$ was constructed using the procedure described in Section~\ref{subsec:semi_ortho_prior}. We sampled $\bLambda_0$'s diagonal elements from a Gaussian mixture $\lambda_{0h} \iidsim 0.5 N(cn, n) + 0.5N(-cn, n)$ for $1 \leq h \leq d_0$. We set $c = 1$ for the Bernoulli and Gaussian models, $c = 0.5$ for the Poisson and negative binomial models, and $c = 2$ for the Tweedie model. We included five dyadic covariates. For each dyad, we set $x_{ij,1} = 1$ to account for an intercept and we drew $x_{ij,k} \iidsim \mathrm{Unif}[-1, 1]$ for $2 \leq k \leq 5$. We set $\bbeta_0 = (\beta_{01}, -0.5, 0.5, 0, 0)^{\top}$. We set the intercept $\beta_{01}$ to $-1$ for all models except the Gaussian model where we set it to $1$ to better match the positive values found in real data. We set $d_0 = 3$ in all simulations.

We compared our proposed SS-IBP prior with two alternative selection criteria commonly used in the latent space literature: minimization of information criteria and K-fold cross-validation. For the information criteria, we considered the Akaike information criterion (AIC), Bayesian information criterion (BIC), deviance information criterion (DIC), and the Watanabe-Akaike information criterion (WAIC). The K-fold cross-validation (K-fold CV) scheme used $K = 5$ folds, where the folds defined a non-overlapping partition of the network's dyads. We selected the dimension that maximized the held-out log-likelihood averaged across all folds. We found that the K-fold CV tended to overestimate the latent space dimension due to the held-out log-likelihood curves increasing drastically around the true dimension and slowly decreasing afterword. To compensate for this behavior, we included the one standard error rule, which selected the smallest model within one standard deviation of the best model (K-fold CV 1SE). The competing  methods placed a $N(0, n\mathbf{I}_d)$ prior on the diagonal of $\bLambda$ as in \citet{hoff2009}. We estimated these models by running NumPyro's HMC algorithm for 5,000 iterations after a burn-in of 5,000 iterations. Section~\ref{sec:model_select} of the supplementary material has more details on the competing methods. 

Lastly, we estimated the GLNEMs with SS-IBP priors by running the MCMC algorithm proposed in Section~\ref{ss:sec:estimation} for 5,000 iterations after a burn-in of 5,000 iterations. To match the theory developed in Section~\ref{sec:theory}, we set $a = 1 / d$ and $\kappa = d^{1.1}$. Furthermore, we used a truncation level of $d = 8$ in all simulations. For the remaining hyperparameters, we set $b = \sqrt{n/2}$ and $\sigma_{\beta} = 10$, which resulted in relatively broad priors.


\subsection{Parameter Recovery}\label{subsec:performance}

This section presents results that demonstrate the ability of the MCMC algorithm proposed in Section~\ref{ss:sec:estimation} to recover the model parameters as the number of nodes increases. To evaluate the estimators' accuracy, we compared the top three dimensions ranked by their posterior inclusion probabilities, $\hat{\bbP}(Z_h = 1 \mid \bY) = (1/S) \sum_{s=1}^S \mathbbm{1}\set{Z_h^{(s)} = 1}$, to the ground truth. Our evaluation metrics included the trace correlation between the ground truth and estimated latent positions, i.e., $\tr(\bU_0^{\top} \hat{\bU})/d_0$, and the relative error with respect to the squared Frobenius norm between the true and estimated $\bLambda$ matrix and coefficients $\bbeta$. We also include the relative error between the true and estimated matrix $\bU \bLambda \bU^{\top}$, which was estimated using all eight latent dimensions.

Table~\ref{tab:performance} displays the results. Note that the trace correlation ranges from $-1$ to 1, with one indicating a perfect alignment of the subspaces spanned by $\bU_0$ and $\hat{\bU}$. For the relative errors, smaller is better.  In all cases, the MCMC algorithm is able to recover the model's parameters with high accuracy. Furthermore, the performance increases with the number of nodes. As such, we can conclude that the MCMC algorithm is successful in estimating the model parameters for a wide variety of GLNEMs.


\begin{table}[h]
\centering\small
\begin{tabular}{@{}lllllll@{}} \toprule
    GLNEM & $n$ & $\bU$ Trace Corr. & $\bLambda$ Rel. Error & $\bU \bLambda \bU^{\top}$ Rel. Error & $\bbeta$ Rel. Error \\
\bottomrule
 & 100 & 0.909 (0.074) & 0.004043 (0.003925) & 0.156 (0.021) & 0.01157 (0.0084) \\
Bernoulli & 200 & 0.965 (0.066) & 0.000909 (0.000826) & 0.078 (0.007) & 0.00313 (0.0018) \\
 & 300 & 0.977 (0.057) & 0.000380 (0.000337) & 0.052 (0.004) & 0.00102 (0.0011) \\[0.25em]
\hline
 & 100 & 0.926 (0.067) & 0.002575 (0.002996) & 0.173 (0.027) & 0.01325 (0.0086) \\
Gaussian & 200 & 0.960 (0.052) & 0.000581 (0.000524) & 0.086 (0.009) & 0.00389 (0.0025) \\
 & 300 & 0.975 (0.066) & 0.000179 (0.000195) & 0.058 (0.005) & 0.00138 (0.0012) \\[0.25em]
\hline
 & 100 & 0.945 (0.052) & 0.001652 (0.002492) & 0.141 (0.051) & 0.00250 (0.0024) \\
Poisson & 200 & 0.977 (0.040) & 0.000407 (0.000291) & 0.066 (0.015) & 0.00069 (0.0006) \\
 & 300 & 0.986 (0.141) & 0.000147 (0.313174) & 0.044 (0.167) & 0.00029 (0.0099) \\[0.25em]
\hline
 & 100 & 0.924 (0.172) & 0.002722 (0.195986) & 0.199 (0.173) & 0.00521 (0.1216) \\
Neg. Bin. & 150 & 0.951 (0.033) & 0.001147 (0.001195) & 0.132 (0.034) & 0.00217 (0.0017) \\
 & 200 & 0.965 (0.041) & 0.000599 (0.000543) & 0.100 (0.018) & 0.00139 (0.0008) \\[0.25em]
\hline
 & 50 & 0.935 (0.079) & 0.009565 (0.008941) & 0.118 (0.026) & 0.17972 (0.0983) \\
Tweedie & 100 & 0.969 (0.055) & 0.001007 (0.000831) & 0.043 (0.006) & 0.02304 (0.0157) \\
 & 150 & 0.982 (0.057) & 0.000313 (0.000257) & 0.027 (0.003) & 0.00990 (0.0060) \\
\bottomrule
\end{tabular}
\caption{Trace correlation and relative estimation error of the model's parameters as the number of nodes ($n$) in the networks increased. Each cell contains the median value of the metric with the standard deviation over the 50 simulation displayed in parentheses.}
\label{tab:performance}
\end{table}


\subsection{Dimension Selection}\label{subsec:sim_dim_select}

Here, we assess the ability of the proposed spike-and-slab prior to select the correct number of dimensions. We used the same simulation setup described in Section~\ref{ss:subsec:competition} with true latent space dimension $d_0 = 3$. Figure~\ref{fig:heatmap_canonical} and Figure~\ref{fig:heatmap_noncanonical} display heatmaps of the percentage of times a dimension was selected by the proposed SS-IBP prior and its competitors over the 50 simulations for GLNEMs with canonical and non-canonical link functions, respectively. Note that the proposed SS-IBP estimates the latent space dimension based on the posterior mode, that is, $\argmax_{1 \leq k \leq d} \hat{\mathbb{P}}(\sum_{h=1}^d  Z_h = k \mid \bY)$. The proposed SS-IBP prior performed the best or equivalent to the best in all scenarios. In contrast, the DIC and WAIC estimators often overestimate the number of latent space dimensions. The BIC estimator often underestimated the latent space dimension except for the Bernoulli case where it has competitive performance for larger networks. The K-fold CV estimator performed poorly except for the negative binomial distribution. The ad-hoc K-fold CV 1SE performed well for the Bernoulli and Gaussian models, but poorly otherwise. Note that we did not compute the cross-validation estimators for the largest network sizes of the negative binomial and Tweedie models because they took more than a day to compute for a single simulation. Finally, the AIC estimator performed well for the Bernoulli, Gaussian, and negative binomial models, but performed poorly for the Poisson and Tweedie models. Overall, unlike the competing methods, the SS-IBP prior has excellent performance for a variety of GLNEMs. The SS-IBP prior is also less computationally intensive than the competitors because it requires only a single MCMC run. The SS-IBP prior also inherently accounts for the uncertainty due to the latent space dimension, unlike the competitors.

\begin{figure}[tp]
    \centering
    \begin{subfigure}[b]{\textwidth}
        \centering
        \includegraphics[width=\textwidth]{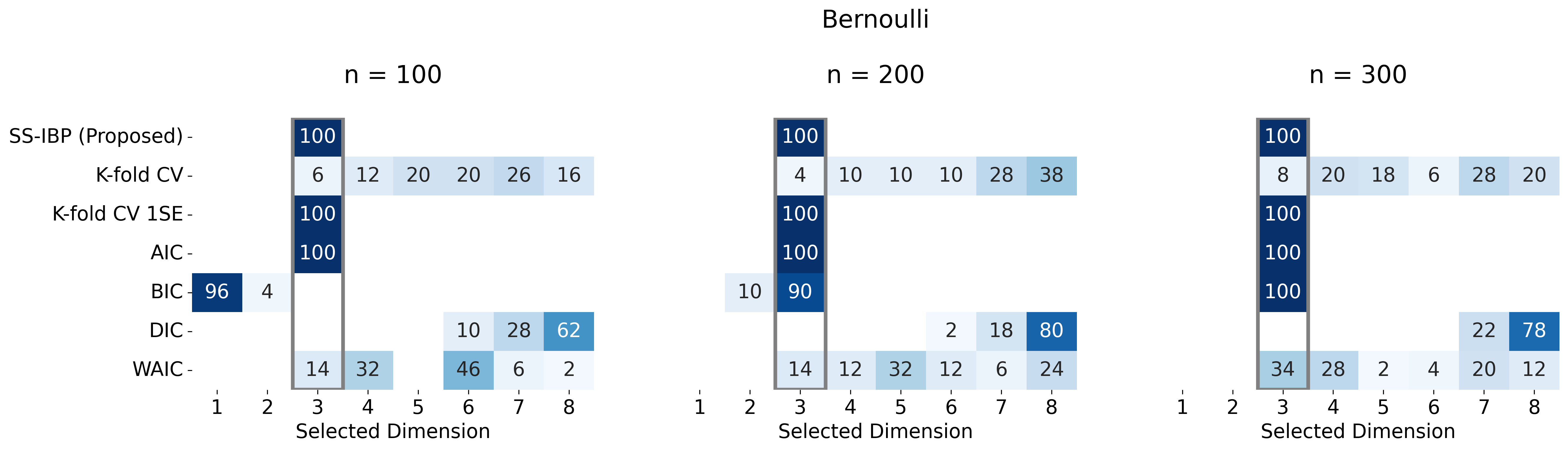}
    \end{subfigure}
    \begin{subfigure}[b]{\textwidth}
        \centering
        \includegraphics[width=\textwidth]{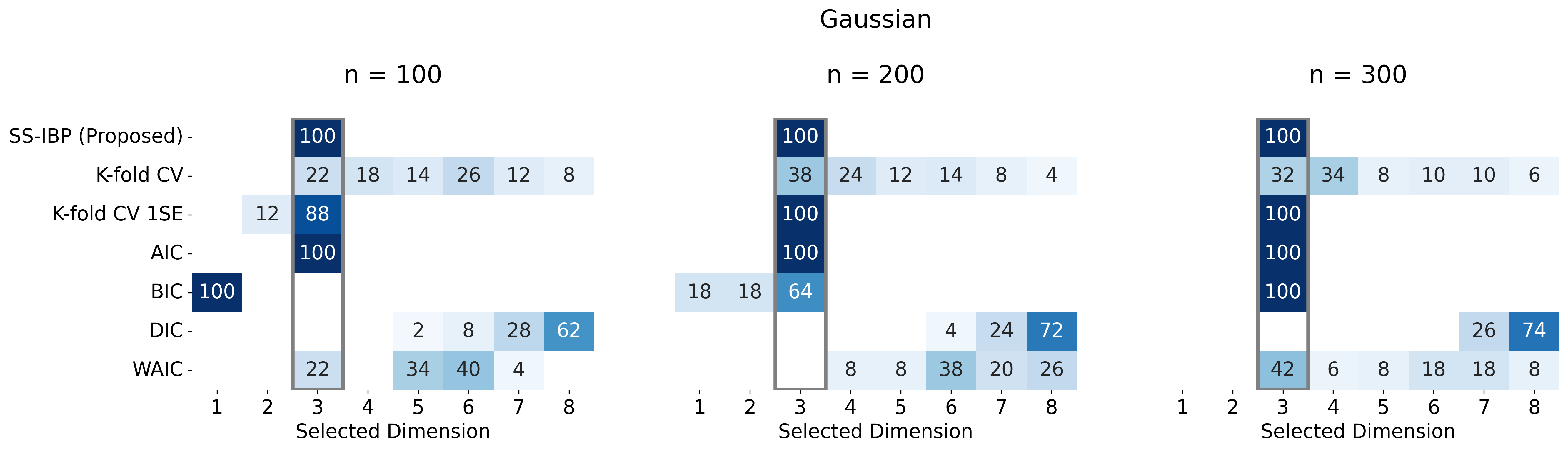}
    \end{subfigure}
    \begin{subfigure}[b]{\textwidth}
        \centering
        \includegraphics[width=\textwidth]{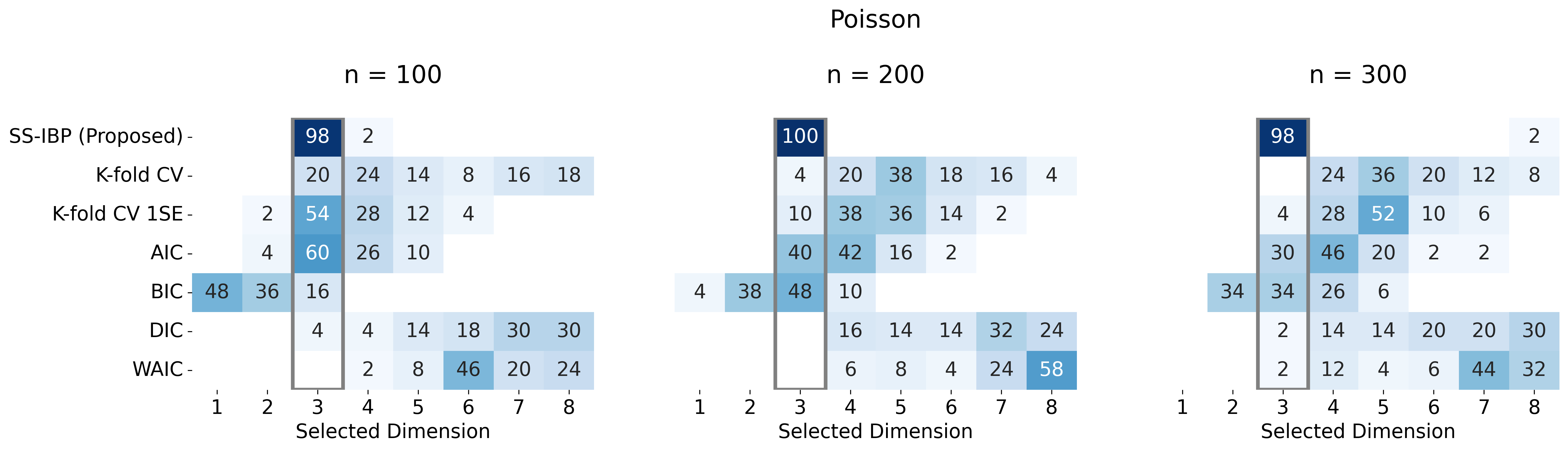}
    \end{subfigure}
    \caption{Heatmaps displaying the percentage of the 50 simulations in which the selected dimension was a particular value for the SS-IBP (proposed), K-fold CV, K-fold CV (1SE), and information criterion based estimators for GLNEMs with a canonical link function. The column corresponding to the true value of $d_0 = 3$ is outlined in gray. Darker blue cells indicate percentages closer to $100\%$.} 
    \label{fig:heatmap_canonical}
\end{figure}

\begin{figure}[t]
    \centering
    \begin{subfigure}[b]{\textwidth}
        \centering
        \includegraphics[width=\textwidth]{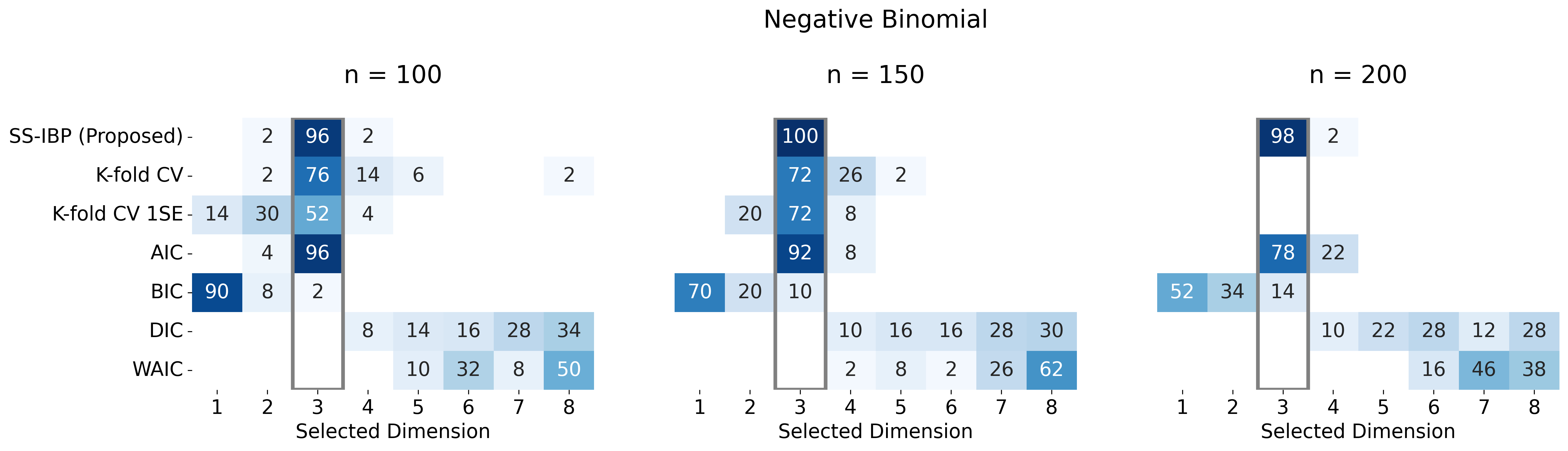}
    \end{subfigure}
    \begin{subfigure}[b]{\textwidth}
        \centering
        \includegraphics[width=\textwidth]{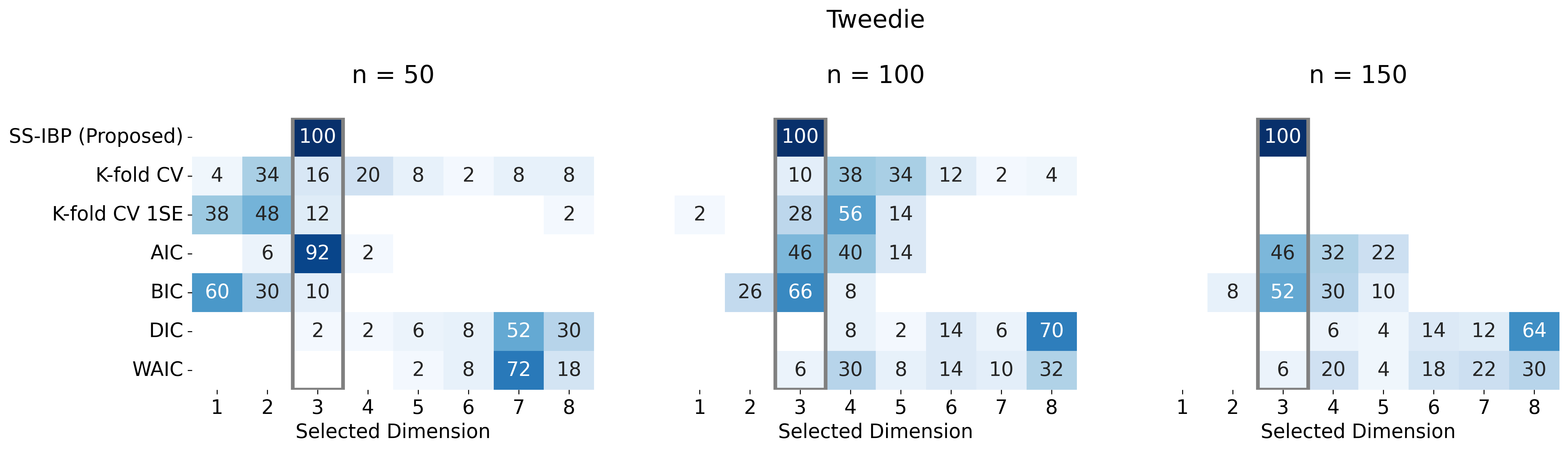}
    \end{subfigure}
    \caption{Heatmaps displaying the percentage of the 50 simulations in which the selected dimension was a particular value for the SS-IBP (proposed), K-fold CV, K-fold CV (1SE), and information criterion based estimators for GLNEMs with a non-canonical link function. The column corresponding to the true value of $d_0 = 3$ is outlined in gray. Darker blue cells indicate percentages closer to $100\%$. The CV estimators were not calculated for the largest network sizes because of their prohibitively long computation time.}
    \label{fig:heatmap_noncanonical}
\end{figure}


\section{Real Data Applications}\label{sec:applications}

To illustrate the benefits of the proposed methodology, we used it to analyze networks arising from different scientific fields, namely biology, ecology, and economics. The first example studies a binary network of interactions between 270 proteins of \textit{E.\ coli}. The second example investigates the similarities between 51 tree species based on a count network of shared fungal parasites. Finally, the third example analyzes trade in bananas between 75 nations based on a network with non-negative continuous edges. All GLNEMs were estimated using the same hyperparameter values outlined in the simulation studies.


\subsection{Protein-Protein Interactions}\label{subsec:ppnetwork}

Here, we re-analyze a binary network of protein-protein interaction data from \citet{butland2005} that was introduced as a benchmark for the eigenmodel by \citet{hoff2009}. Each node corresponds to an essential protein of \textit{E.\ coli} with $n = 270$ proteins overall. A binary edge ($Y_{ij} = 1$) is recorded if protein $i$ and protein $j$ interact, otherwise $Y_{ij} = 0$. 

In this analysis, we demonstrate that the proposed SS-IBP prior selects a better fitting model than the K-fold cross-validation estimator used in previous analyses~\citep{hoff2008, hoff2009, jauch2021}. Specifically, we estimated a Bernoulli GLNEM with a logistic link using both our proposed SS-IBP prior and the $N(\mathbf{0}_d, n \bI_d)$ prior used in the original analysis. Note that the original analysis used a probit link; however, we use a common logistic link to focus on the differences due to model size. Since there are no dyadic covariates, the model only contains an intercept. We estimated both models by running the MCMC algorithms for 15,000 iterations after a burn-in of 7,500 iterations. The Gaussian model using the K-fold CV 1SE estimator with $K = 4$ and $1 \leq d \leq 8$ estimated $\hat{d}_0 = 3$, which matches the original analysis.  For the SS-IBP prior we set $d = 10$ and made the same hyperparameter choices as described in Section~\ref{ss:subsec:competition}. The posterior mode selected $\hat{d}_0 = 5$ dimensions. See Figure~\ref{fig:ppnetwork_kfold} and Figure~\ref{fig:pp_network_mcmc} in Section~\ref{ss:subsec:additional_figures} of the supplement for summaries of these selection procedures.

A natural approach to comparing the goodness-of-fit of two network models is based on their predictions of observed quantities~\citep{hunter2008}. As such, we demonstrate that the proposed model better describes the data by comparing the two models' posterior predictive distributions of the transitivity coefficient and the degree distribution. We chose to use the transitivity coefficient because latent space models are often used to explain the high levels of transitivity found in real-world networks. We included the degree distribution because the network science literature puts tremendous focus on this statistic. Figure~\ref{fig:ppnetwork_gof} displays the results. The proposed model better fits the data based on both statistics. The three-dimensional model underestimates the transitivity, while the observed transitivity falls within the SS-IBP model's posterior predictive distribution. Similarly, the degree distribution more closely matches the SS-IBP model, although both models underestimate the number of single degree nodes and overestimate the number of isolated nodes.

\begin{figure}[htb]
\centering \includegraphics[width=\textwidth, keepaspectratio]{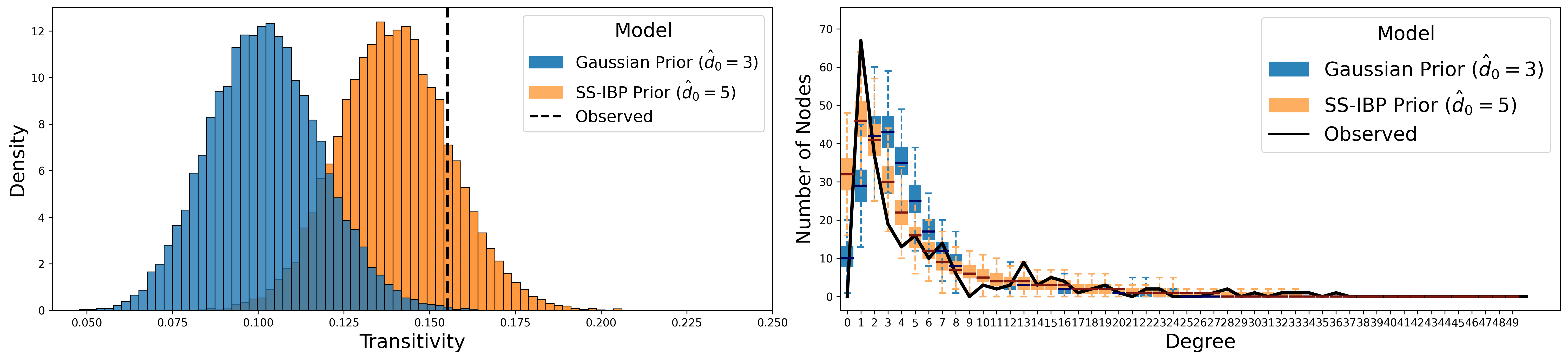}
    \caption{Goodness-of-fit statistics comparing a Bernoulli GLNEM with logistic link and $\blambda$ given a $N(\mathbf{0}_{d_0}, n \mathbf{I}_{d})$ prior with $\hat{d}_0 = 3$ selected by K-fold CV 1SE with $K = 4$ and the proposed SS-IBP prior ($\hat{d}_0 = 5$) on a binary network of protein-protein interactions.}
\label{fig:ppnetwork_gof}
\end{figure}

\subsection{Host-Parasite Interactions Between Tree Species}\label{subsec:tree}

In ecology, networks are often used to study the extent that species are related. Here, we consider an ordinal-valued network of $n = 51$ tree species introduced by \citet{vacher2008}, where the edge variables $Y_{ij}$ represent the number of shared fungal parasites between tree species $i$ and $j$.  In addition, three dyadic covariates are measured that represent different distances between pairs of tree species: taxonomic ($\texttt{Tax}_{ij}$), geographic ($\texttt{Geo}_{ij}$), and genetic ($\texttt{Gene}_{ij}$) distances. This network was previously analyzed using a Poisson stochastic block model with covariates in \citet{mariadassou2010} and \citet{donnet2021}. 

Similar to \citet{mariadassou2010}, our goals are to understand how the covariates affect the number of shared fungal parasites and to describe any potential residual network structure after accounting for the covariates. We consider two models for networks with ordinal edge variables: A Poisson and negative binomial GLNEM both with a log link and an SS-IBP prior. We included the negative binomial model because the network is potentially zero-inflated with 46\% of the edges equaling zero. We estimated the models by running the MCMC algorithm for 15,000 iterations after a burn-in of 7,500 iterations.

To choose between the models, we examined the dispersion parameter of the negative binomial GLNEM, which reduces to the Poisson GLNEM when $\phi$ is zero. The 95\% credible interval for $\phi$ is (0.43, 0.62), which is well above zero and provides support for the negative binomial GLNEM. Furthermore, the WAIC of the Poisson and negative binomial models  are 3957 and 3807, respectively, which also favors the negative binomial model. The negative binomial model also has a lower-dimensional latent space with 
the one-dimensional space receiving 95\% of the posterior probability. On the other hand, the Poisson model was split between a latent space with two (posterior probability of $45\%$) and three (posterior probability of $48\%$) dimensions. See Figure~\ref{fig:tree_poisson_mcmc} and Figure~\ref{fig:tree_negbinom_mcmc} in Section~\ref{ss:subsec:additional_figures} of the supplement for posterior summaries. As demonstrated in the simulation study in Section~\ref{sec:zero-inflation} of the supplement, the negative binomial model's lower dimension may be due to zero-inflation.

\begin{figure}[htb]
\centering \includegraphics[width=\textwidth, keepaspectratio]{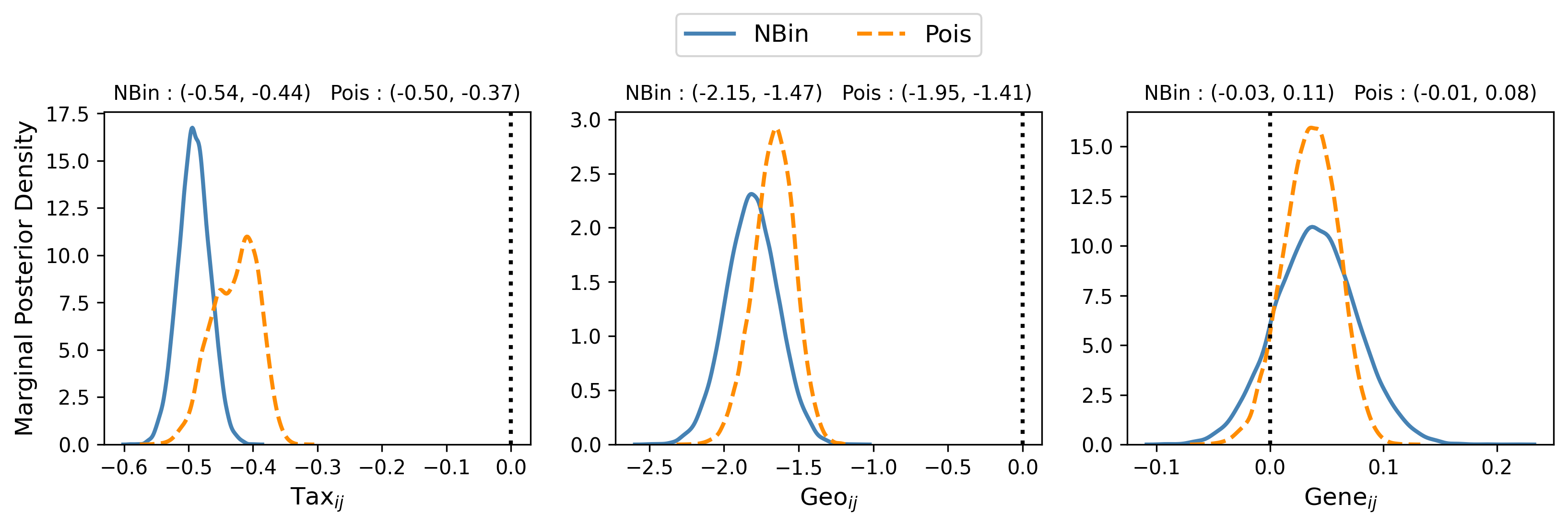}
    \caption{Marginal posterior densities of the coefficients for the three dyad-variables ($\texttt{Tax}_{ij}$, $\texttt{Geo}_{ij}$, $\texttt{Gene}_{ij}$) for the negative binomial (NBin) and Poisson (Pois) GLNEMs. The 95\% credible intervals are displayed on the top and the vertical dotted line is at zero.}
\label{fig:tree_covariates}
\end{figure}

Despite the two models' differences, Figure~\ref{fig:tree_covariates} demonstrates that their inferences about the covariates' marginal effects are roughly the same. Both models indicate that increasing a tree's average taxonomic or geographic distance to other trees in the network decreases the geometric mean of its expected number of shared parasites, while genetic distance is insignificant when these variables are in the model. These conclusions match the original analysis by \citet{mariadassou2010}; however, their results are conditional on the network structure. Also, note that under the Poisson GLNEM, the posterior density of $\texttt{Tax}_{ij}$'s coefficient has two local modes due to the mixing of the two and three dimensional models.


\begin{figure}[tb]
\centering \includegraphics[height=0.225\textheight, width=\textwidth,keepaspectratio]{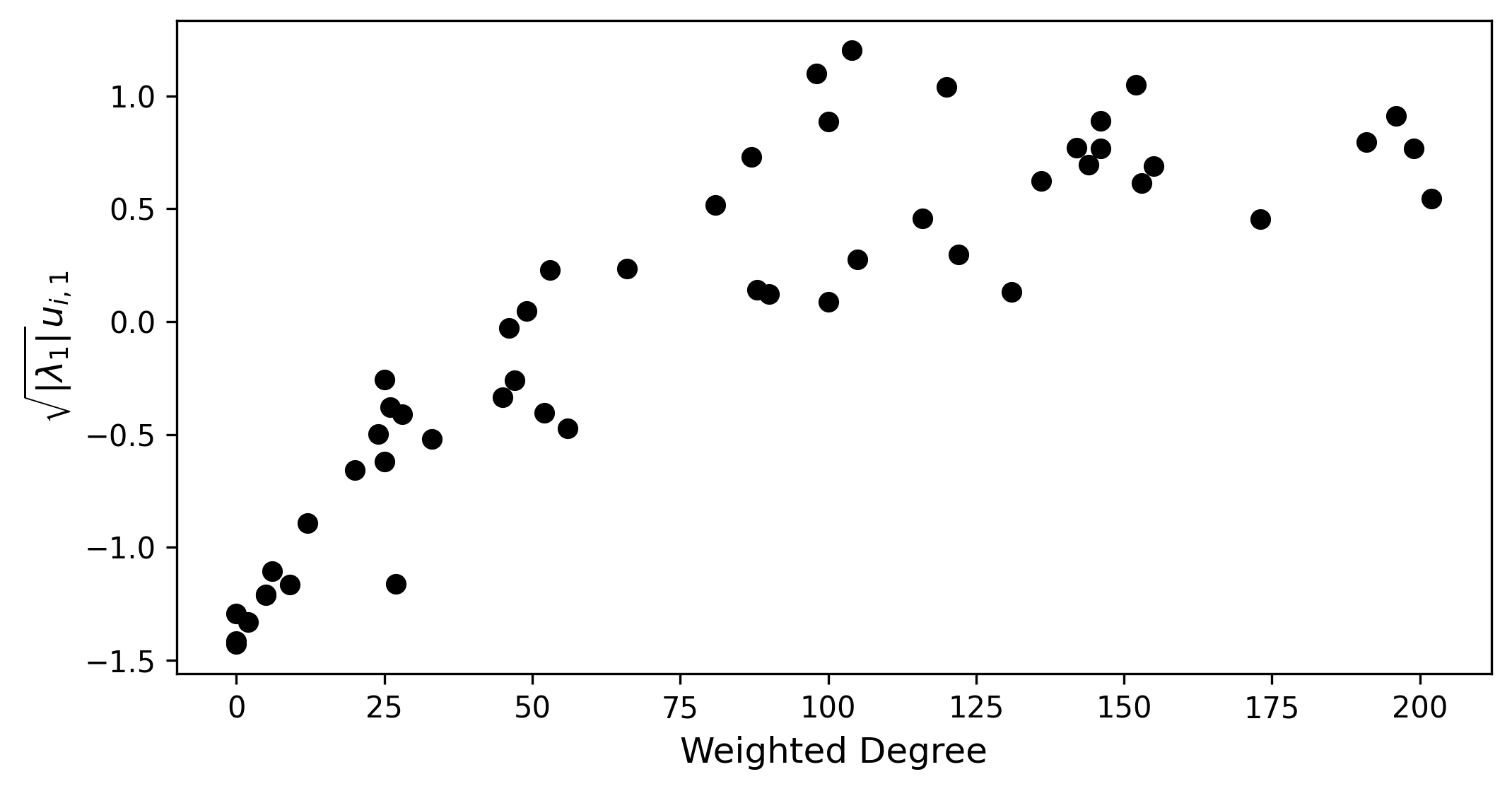}
    \caption{Scatter plot of the latent positions from the negative binomial GLNEM as a function of the weighted degree, $\sum_{j=1}^n y_{ij}$, for the tree network.}
\label{fig:tree_ls}
\end{figure}

Next, we analyze the residual network structure captured by the one-dimensional latent space in the negative binomial model. The scatter plot in Figure~\ref{fig:tree_ls} demonstrates that a node's latent position is highly correlated with its weighted degree with a Pearson correlation equal to 0.86. The 95\% credible interval for $\lambda_1$ is (26.61, 34.42) indicating assortativity along this dimension. These observations suggest that the latent space acts as a degree correction in the model. In conclusion, the negative binomial GLNEM fits the network data better than the Poisson GLNEM and also has a simpler latent structure.

\subsection{International Trade of Bananas in 2018}\label{subsec:banana}

Lastly, we consider the international trade of bananas, which is a topic of academic research~\citep{josling2003} and listed as one of the Food and Agriculture Organization's most important commodities for both consumption and trade~\citep{benedictis2014}. The data take the form of a non-negative continuous-valued network of nations where the edge variable $Y_{ij}$ is the amount of trade of bananas in thousands of U.S. Dollars between nation $i$ and nation $j$ in 2018. We only consider the top $n = 75$ nations in terms of total trade volume, that is, their weighted degree.  The network was constructed from the BACI database~\citep{cepii2010}, which is derived from the UN Comtrade database by the Centre d'\'{E}tudes Prospectives et d'Informations Internationales (CEPII). 

Our goal is to quantify the effect of various dyadic covariates on the trade of bananas, while controlling for network effects. To develop an appropriate GLNEM, we start with a model for trade from economics known as the gravity model~\citep{tinbergen1962, anderson1979}. In its simplest form, the gravity model states that the systematic component is
\begin{equation}\label{eq:gravity}
    \log(\mathbb{E}[Y_{ij} \mid \bx_{ij}]) = \beta_1 [\log(\texttt{GDP}_i) + \log(\texttt{GDP}_j)] + \beta_2 \log(\texttt{Dist}_{ij}) + \sum_{k=3}^p \beta_k \, x_{ij,k}, 
\end{equation}
where $\texttt{GDP}_i$ is 
nation $i$'s GDP in thousands of U.S. Dollars, $\texttt{Dist}_{ij}$ is the population weighted harmonic distance between nations $i$ and $j$, and the $x_{ij,k}$'s are any other relevant covariates. We included three additional binary indicator variables for shared language ($\texttt{CommonLang}_{ij}$), shared border ($\texttt{Border}_{ij}$), and active trade agreement ($\texttt{TradeAgreement}_{ij}$). We obtained the dyadic covariates from the CEPII Gravity database~\citep{gravity2022}.

For the random component, we propose to use the Tweedie distribution supported on the non-negative reals with power parameter $1 < \xi < 2$, which has not appeared in the network latent space literature. This distribution has been recognized as a suitable distribution for trade data~\citep{barabesi2016} because of its interpretation as a compound Poisson-gamma distribution and invariance to the edge variable's unit of measurement. See Section~\ref{sec:tweedie} of the supplementary material for more details. We estimated a Tweedie GLNEM with a log link and $d = 10$ by running the MCMC algorithm for 15,000 iterations after a burn-in of 7,500 iterations. The fraction of deviance explained~\citep{hastie2015} of the model is 0.75, the AUC (area under the reciever operating characteristic curve) for classifying zero-valued edges is 0.853, and the fraction of variance explained $R^2$ for non-zero edges is 0.86. All values are close to one, which indicates a good fit to the trade network.

Figure~\ref{fig:trade_covariates} displays the posterior distributions of the covariate effects and their 95\% credible intervals. According to the credible intervals, all covariates effects are significant. Furthermore, the sign of the coefficients for distance and GDP agree with those posited by economic theory, namely, that the amount of trade between nations decreases with distance and increases with GDP. The other coefficients indicate that sharing a language, a border, or entering into a trade agreement increases the volume of trade in bananas. 

\begin{figure}[htb]
\centering \includegraphics[height=0.4\textheight, width=\textwidth, keepaspectratio]{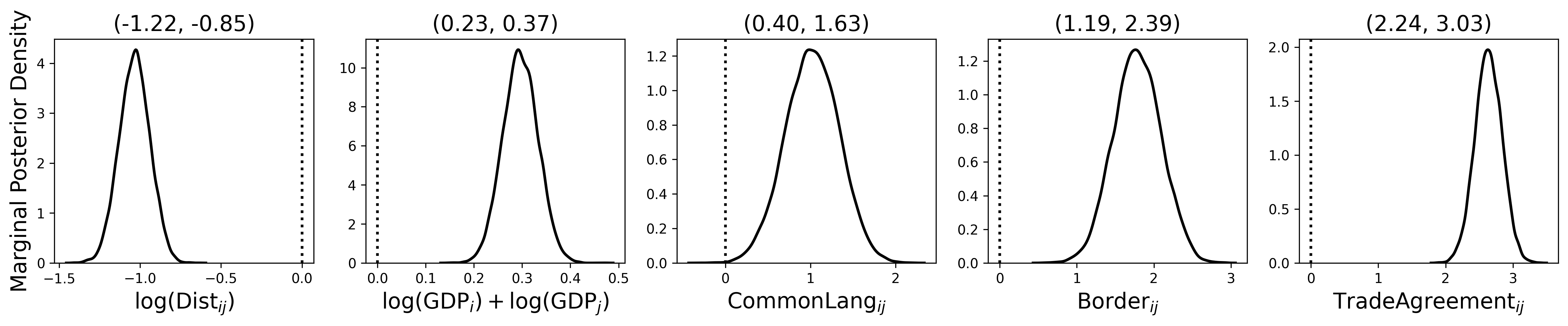}
\caption{Marginal posterior densities of the coefficients for the five dyad-variables for the Tweedie GLNEM on the banana trade network. The 95\% credible intervals are displayed on the top and the vertical dotted line is at zero.}
\label{fig:trade_covariates}
\end{figure}

\begin{figure}[tb]
\centering \includegraphics[height=0.39\textheight, width=\textwidth, keepaspectratio]{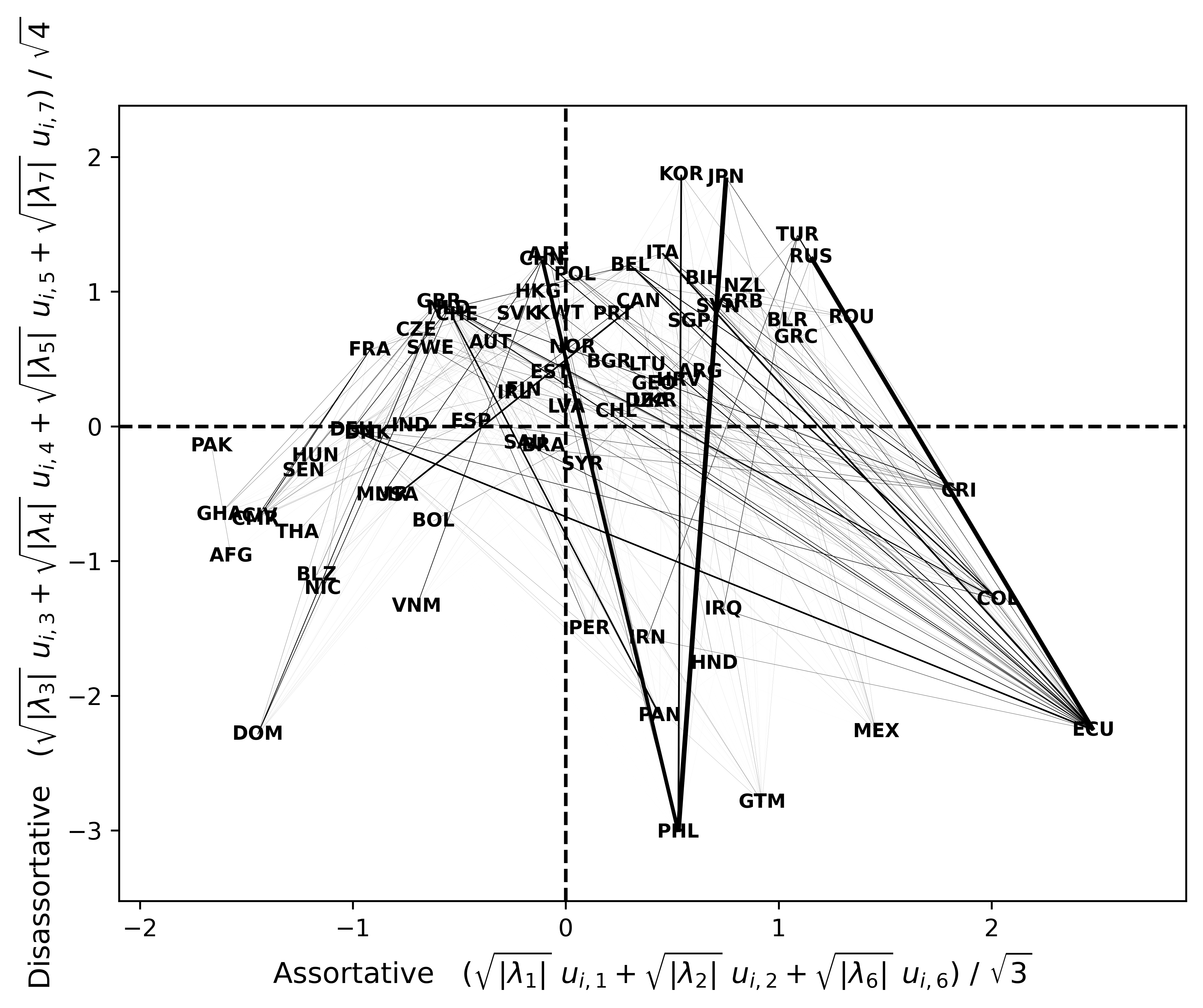}
    \caption{Two-dimensional summary of the latent space for the banana trade network. The $x$-axis (assortative summary) and $y$-axis (disassortative summary) are denoted by the dashed horizontal and vertical lines, respectively. Each nation's name is annotated and the observed edges are displayed with a width proportional to the trade amount.}
\label{fig:trade_ls}
\end{figure}

Next, we analyzed the residual network structure captured by the latent space. The dimension of the latent space was uncertain with the posterior consisting of a mixture of five (posterior probability 39\%), six (posterior probability 40\%), and seven (posterior probability 17\%) dimensional latent spaces. Detailed summaries are in Section~\ref{ss:subsec:additional_figures} of the supplement. Figure~\ref{fig:trade_ls} shows a two-dimensional visualization of the seven-dimensional latent space obtained by separately combining the assortative and disassorative dimensions into single-feature summaries, that is, $(\sqrt{\abs{\lambda_3}} u_{i,3} + \sqrt{\abs{\lambda_4}} u_{i,4} + \sqrt{\abs{\lambda_5}} u_{i,5} + \sqrt{\abs{\lambda_7}} u_{i,7})/\sqrt{4}$ and $(\sqrt{\abs{\lambda_1}} u_{i,1} + \sqrt{\abs{\lambda_2}} u_{i,2} + \sqrt{\abs{\lambda_6}} u_{i,6})/\sqrt{3}$. In this visualization, a nation tends to trade more with nations that they are closer to along the $x$-axis (assortative) and further apart along the $y$-axis (disassortative). The disassortative summary identifies a group of nations below the $x$-axis that primary trade with nations above the $x$-axis and do not trade with each other. These nations include the major global exporters of bananas such as Ecuador (ECU), the Philippines (PHL), and the Dominican Republic (DOM). The assortative summary encodes regional patterns in the trade relationships. For example, the Philippines primarily trade with nations in Asia such as Japan (JPN), South Korea (KOR), and China (CHN). In addition, the Dominican Republic focuses its trade with nations in the European Union such as the United Kingdom (GBR), the Netherlands (NLD), and Germany (DEU).

\section{Discussion}\label{sec:discussion}

In this article, we developed a theoretically supported Bayesian approach to dimension selection for a class of network models we called generalized linear network eigenmodels. 
This model class is very general and applicable to networks with edge variables drawn from any exponential-dispersion family and a systematic component using possibly non-canonical link functions. Furthermore, we introduced a new sum-to-zero identifiability constraint on the latent positions that allows one to interpret the effect of dyadic covariates as marginal effects. To enforce this constraint in a fully Bayesian way, we proposed a new prior for the latent positions based on the QR decomposition that has full support and easily integrates with gradient based sampling methods.

Next, we introduced a novel $\SSIBP$ prior that addresses the important problem of automatically selecting the latent space dimension and assessing its uncertainty. We showed that this prior induces a stochastic ordering that counteracts certain identifiability issues in the model's likelihood. In addition, we provided a posterior tail bound on the number of dimensions that reflects the true model's underlying dimensionality. To the best of our knowledge, this is the first such theoretical result for Bayesian LSMs. Furthermore, this result is a first step toward establishing a posterior concentration rate for GLNEM estimation. Note that our theoretical results do not establish posterior dimension selection consistency, which would require a matching lower bound on the number of dimensions. Lastly, we developed a general Metropolis-within-Gibbs MCMC algorithm applicable to any GLNEM with SS-IBP priors, and showed the flexibility of our method by modeling three real-world networks from biology, ecology, and economics.

An interesting research direction is to extend our approach to directed networks. The typical eigenmodel for directed networks involves a decomposition based on the singular value decomposition: the latent space component, $\bU \bLambda \bU^{\top}$,  becomes $\bU \bS \bV^{\top}$, where $\bU \in \mathcal{V}_{d,n}$, $\bV \in \mathcal{V}_{d,n}$, and the matrix of singular values $\bS = \diag(s_1, \dots, s_d)$ has positive real-valued entries. The positivity of the singular values invalidates our use of a $\Laplace(b)$ slab, so choosing an appropriate prior requires additional computational and theoretical care. Finally, while our approach to dimension selection is more computationally efficient than existing model selection criteria for Bayesian LSMs, our use of MCMC sampling is still computationally intensive for large networks. The exploration of an appropriate EM algorithm~\citep{dempster1977} to estimate the posterior mode or a variational inference algorithm~\citep{wainwright2008} to approximate the posterior is an area of future research interest.

\bibliographystyle{asa}
\bibliography{references}

\clearpage

\begin{appendix}

\begin{center}
{\Large\bf Supplementary Material for \\
    ``A Spike-and-Slab Prior for Dimension Selection in Generalized Linear Network Eigenmodels''}\\[1em]

{\large Joshua Daniel Loyal and Yuguo Chen}
\end{center}

\section{Proofs of Main Results}\label{sec:proofs}

    This section contains proofs of the various results found in the main text. Section~\ref{sec:proofs_prop} contains proofs of the various propositions concerning the identifiability of GLNEMs and the proposed SS-IBP prior. Section~\ref{sec:post_con} contains proofs of the tail-bound of the SS-IBP prior and the GLNEM's dimension under the posterior. Lastly, Section~\ref{sec:aux_lemma} contains auxiliary lemmas necessary to prove the results in the previous two sections.

    We use the following notation throughout this section. For two densities $p$ and $q$, we let $K(p, q) = \mathbb{E}_p \log(p/q)$ and $V(p, q) = \mathbb{E}_p [\log(p/q) - K(p, q)]^2$ denote the Kullback-Leibler (KL) divergence and second moment of the KL ball, respectively. For two positive sequences $\set{a_n}$ and $\set{b_n}$, we write $a_n \asymp b_n$ to denote $0 < \liminf_{n\rightarrow \infty} a_n / b_n \leq \limsup_{n \rightarrow \infty} a_n / b_n < \infty$. If $\lim_{n\rightarrow \infty} a_n / b_n = 0$, we write $a_n = o(b_n)$. We use $a_n = O(b_n)$ to denote that for sufficiently large $n$, there exists a constant $K > 0$ independent of $n$ such that $a_n \leq K b_n$. For a vector $\bv = (v_1, \dots, v_p)^{\top} \in \Reals{p}$, we let $\norm{\bv}_0, \norm{\bv}_2, \norm{\bv}_{\infty}$ denote the $\ell_0$, $\ell_2$, and $\ell_{\infty}$ norm, respectively. For a matrix $\bC \in \Reals{m \times m}$, we denote the Frobenius norm as $\norm{\bC}_F = (\sum_{ij} C_{ij}^2)^{1/2}$. Last, for two real numbers $a, b \in \Reals{}$, $a \vee b = \max(a, b)$ and $a \wedge b = \min(a, b)$.

\subsection{Proofs of Propositions \ref{prop:identify}, \ref{prop:ssibp-ordering}, and \ref{prop:prior_support}}\label{sec:proofs_prop}

\begin{proof}[Proof of Proposition~\ref{prop:identify}]

    The condition that two probability distributions under two different parameterizations coincide implies that their moments must also coincide. This implies that the adjacency matrix's expected value $\bbE[\bY \mid \bX_{1:p}]$ is identifiable. Furthermore, the condition that the link function $g$ is strictly increasing, thus invertible, implies that 
\begin{equation}\label{eq:identify}
    \sum_{k=1}^p \bX_k \beta_k + \bU \bLambda \bU^{\top} = \sum_{k=1}^p \bX_k \tilde{\beta}_k + \tilde{\bU} \tilde{\bLambda} \tilde{\bU}^{\top}.
\end{equation}
Now, note that
\begin{equation}\label{eq:one_mul}
    \left(\sum_{k=1}^p \bX_k \beta_k\right) \mathbf{1}_n = 
    \begin{pmatrix}
        \sum_{k=1}^p \beta_k \sum_{j=1}^n x_{1j,k}  \\
        \vdots \\
        \sum_{k=1}^p \beta_k \sum_{j=1}^n x_{nj,k}
    \end{pmatrix} =  
    n \begin{pmatrix}
        \bbeta^{\top} \bar{\bx}_1  \\
        \vdots \\
        \bbeta^{\top} \bar{\bx}_n
    \end{pmatrix} = n \bar{\bX}\bbeta.
\end{equation}
    Similarly, $(\sum_{k=1}^p \bX_k \tilde{\beta}_k) \mathbf{1}_n = n \bar{\bX}\tilde{\bbeta}$. Multiplying both sides of 
    Equation 
    (\ref{eq:identify}) on the right by $\mathbf{1}_n$, applying 
    Equation (\ref{eq:one_mul}), and rearranging, we have
\[
\bar{\bX} (\bbeta - \tilde{\bbeta}) = \mathbf{0}_n.
\]
    Next, multiplying both sides of the previous expression on the left by $\bar{\bX}^{\top}$, we have
\[
    \bar{\bX}^{\top} \bar{\bX} (\bbeta - \tilde{\bbeta}) = \mathbf{0}_p.
\]
    Since $\bar{\bX}^{\top} \bar{\bX}$ is full rank, we must have $\bbeta = \tilde{\bbeta}$. From the previous equality, the fact that $\bU \bLambda \bU^{\top} = \tilde{\bU} \tilde{\bLambda} \tilde{\bU}^{\top}$ follows immediately from Equation (\ref{eq:identify}).
\end{proof}

\begin{proof}[Proof of Proposition~\ref{prop:ssibp-ordering}]

The proof proceeds directly:
\begin{align*}
\bbP(&\abs{\eta_h - \eta_0} \leq \epsilon) = \bbE[\bbP(\bbB_{\epsilon}(\eta_0) \mid \theta_h)] \\
&= \bbE(\theta_h) \bbP_{slab}(\bbB_{\epsilon}(\eta_0)) + \bbE(1 - \theta_h) \bbP_{spike}(\bbB_{\epsilon}(\eta_0)) \\
&= \left(\frac{a}{a + \kappa + 1}\right) \left(\frac{a}{a + 1}\right)^{h-1} \bbP_{slab}(\bbB_{\epsilon}(\eta_0)) +  \left[1 - \left(\frac{a}{a + \kappa + 1}\right)\left(\frac{a}{a + 1}\right)^{h-1} \right] \bbP_{spike}(\bbB_{\epsilon}(\eta_0)) \\
&= \bbP_{spike}(\bbB_{\epsilon}(\eta_0)) + \left(\frac{a}{a + \kappa + 1}\right)\left(\frac{a}{a + 1}\right)^{h-1} \left[\bbP_{slab}(\bbB_{\epsilon}(\eta_0)) - \bbP_{spike}(\bbB_{\epsilon}(\eta_0))\right].
\end{align*}
Since $[a/(a + 1)]^{h-1}$ is decreasing in $h$, the result follows.
\end{proof}

\begin{proof}[Proof of Proposition~\ref{prop:prior_support}]

Before proving the result, we need to formally define the support of a probability measure and prove a general measure theoretic lemma.

\begin{definition}
Let $\bbP$ be a probability measure on a measurable space $(X, \mathcal{B}(X))$, where $X$ is a topological space and $\mathcal{B}(X)$ is the associated Borel $\sigma$-algebra. The support of $\bbP$, denoted by $\supp(\bbP)$, is the set of all $x \in X$ such that every open neighborhood of $x$ has non-zero probability.
\end{definition}

\begin{lemma}\label{lemma:support}
Let $(X_1, \mathcal{B}(X_1))$ and $(X_2, \mathcal{B}(X_2))$ be measurable spaces where $X_1$ and $X_2$ are topological spaces and $\mathcal{B}(X_1)$ and $\mathcal{B}(X_2)$ are their associated Borel $\sigma$-algebras. Also, let $\bbP$ be a probability measure on $X_1$ and $f: X_1 \rightarrow X_2$ be a continuous, surjective function. If $\supp(\bbP) = X_1$, then $\bbP_f$, the push-forward measure on $X_2$ associated with $f$ defined as $\bbP(f^{-1}(B))$ for all $B \in \mathcal{B}(X_2)$, has $\supp(\bbP_f) = X_2$.
\end{lemma}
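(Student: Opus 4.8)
The plan is to prove the set equality $\supp(\bbP_f) = X_2$ by showing that every point of $X_2$ belongs to $\supp(\bbP_f)$; since supports are by definition subsets of the ambient space, this containment is all that is needed. Fix an arbitrary $y \in X_2$ and an arbitrary open neighborhood $V$ of $y$. By the definition of the push-forward measure, the goal reduces to verifying the single inequality $\bbP_f(V) = \bbP(f^{-1}(V)) > 0$, and once this holds for every such $V$ we may conclude $y \in \supp(\bbP_f)$, hence $X_2 \subseteq \supp(\bbP_f) \subseteq X_2$.

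The key steps are the following, carried out in order. First, I would note that $f$ is measurable: since $f$ is continuous and both $\sigma$-algebras are the Borel $\sigma$-algebras, preimages of open sets are open and therefore Borel, so $\bbP_f$ is a well-defined probability measure and $f^{-1}(V) \in \mathcal{B}(X_1)$. Second, I would use surjectivity of $f$ to produce a point $x \in X_1$ with $f(x) = y$; because $y \in V$, this $x$ lies in $f^{-1}(V)$. Third, I would use continuity of $f$ together with openness of $V$ to conclude that $f^{-1}(V)$ is open in $X_1$, so that $f^{-1}(V)$ is an open neighborhood of $x$. Fourth, I would invoke the hypothesis $\supp(\bbP) = X_1$, which says precisely that every open neighborhood of every point of $X_1$ has positive $\bbP$-measure; applying this to the open neighborhood $f^{-1}(V)$ of $x$ yields $\bbP(f^{-1}(V)) > 0$, which is exactly the desired inequality.

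This argument is essentially a routine point-set topology computation, so there is no serious obstacle; the only points requiring a small amount of care are making sure the push-forward is well-defined (handled by continuity implying Borel-measurability) and correctly combining the three structural hypotheses on $f$—surjectivity to locate a preimage, continuity to keep the preimage open, and full support of $\bbP$ to transfer positivity. I would emphasize that surjectivity is what guarantees the preimage $f^{-1}(V)$ is nonempty, which is the one place the hypothesis cannot be dropped; continuity is what guarantees the preimage is open, so that the full-support assumption on $\bbP$ actually applies to it.
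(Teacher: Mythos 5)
Your proof is correct and follows essentially the same route as the paper's: continuity makes $f^{-1}(V)$ open, surjectivity makes it nonempty (hence an open neighborhood of some point), and the full-support hypothesis on $\bbP$ then gives $\bbP(f^{-1}(V)) > 0$. The only cosmetic difference is that you exhibit an explicit preimage point $x$ of $y$, whereas the paper merely notes the preimage is a nonempty open set and hence a neighborhood of some point.
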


\begin{proof}
First note that the continuity of $f$ implies that $f$ is $\mathcal{B}(X_1)/\mathcal{B}(X_2)$-measurable, so the statements made above are well defined. Now pick an arbitrary $x \in X_2$. Note that for any open neighborhood $N_x$ of $x$, $f^{-1}(N_x) \neq \emptyset$ because $f$ is surjective. In addition, since $f^{-1}(N_x)$ is a non-empty open set in $X_1$, it is also an open neighborhood of some point in $X_1$. Therefore, $\bbP_f(N_x) = \bbP(f^{-1}(N_x)) > 0$ for any $N_x$ by the assumption that $\bbP$ has full support on $X_1$. Since $x$ was arbitrary, this proves that $\bbP_f$ has full support on $X_2$.
\end{proof}

We are now ready to prove Proposition~\ref{prop:prior_support}. First, we show that the function $f: \Reals{n \times d} \rightarrow \bar{\mathcal{V}}_{d,n}$ defined by steps 2 and 3 in Section~\ref{subsec:semi_ortho_prior} is surjective. Pick an arbitrary $\bA \in \bar{\mathcal{V}}_{d,n}$. Let $\bB = \bA$, then $\bar{\bB} = [\mathbf{1}_n, \bA] \ \bI_{d+1}$ so that $\bQ = [\mathbf{1}_n, \bA]$ and $\bR = \bI_{d+1}$. Therefore, $\bU = \bQ_{2:(d+1)} = \bA$, which proves $f$ is surjective. Furthermore, $f$ is differentiable with respect to $\bB$ and thus continuous. The result follows from Lemma~\ref{lemma:support} with $X_1 = \Reals{n \times d}$, $X_2 = \bar{\mathcal{V}}_{d,n}$, and $\bbP$ the Gaussian measure on $\Reals{n \times d}$ (that is, the elements of $X_1$ are independently drawn from a $N(0, 1)$ distribution) which has full support.
\end{proof}

\subsection{Proof of Theorem~\ref{thm:exp_decay} and Theorem~\ref{thm:post_con}}\label{sec:post_con}

Recall that we model the observed edge variables using a GLNEM.
In particular, we assume the edge variables, $Y_{ij}$, are independently drawn from an exponential dispersion family with densities of the form:
\[
q(y_{ij}; \theta_{ij}, \phi) = q_{ij}(y_{ij}) = \exp\left\{\frac{y_{ij} \, \theta_{ij} - b(\theta_{ij})}{\phi} + k(y_{ij}, \phi) \right\}, \qquad 1 \leq i \leq j \leq n,
\]
where
\begin{equation}
    g(b'(\theta_{ij})) = \bbeta^{\top}\bx_{ij} + \left[\bU \bLambda \bU^{\top}\right]_{ij}
\end{equation}
for a strictly increasing link function $g$. We will refer to the right hand side of the above equation as $\eta_{ij} = \bbeta^{\top}\bx_{ij} + \left[\bU \bLambda \bU^{\top}\right]_{ij}$. We will also write $\xi = (g \circ b')^{-1}$, so that $\theta_{ij} = \xi(\eta_{ij})$.

In addition, we assume that the matrices $\bY^{(n)}$ are generated from
model (\ref{eq:random}) with true parameters $\set{\bbeta_0, \bLambda_0, \bU_0}$, where $\bbeta_0 \in \Reals{p}$, $\bLambda_0 \in \Reals{d_0 \times d_0}$, $\bU_0 \in \mathcal{V}_{d_0, n}$, and $\norm{\bLambda_0}_0 = d_0$. Furthermore, let 
\begin{equation}
    \eta_{0ij} = \bbeta_0^{\top} \bx_{ij} + \left[\bU_{0} \bLambda_0 \bU_{0}^{\top}\right]_{ij},
\end{equation}
and $\theta_{0ij} = \xi(\eta_{0ij})$. Let $\bbP_0^{(n)}$ and $\bbE_0^{(n)}$ denote the probability and expectation under this true data generating process. Finally, we denote the exponential dispersion family density under this true data generating process as
\[
q(y_{ij}; \theta_{0ij}, \phi) = q_{0ij}(y_{ij}) = \exp\left\{\frac{y_{ij} \, \theta_{0ij} - b(\theta_{0ij})}{\phi} + k(y_{ij}, \phi) \right\}, \qquad 1 \leq i \leq j \leq n.
\]

Our proofs of Theorem~\ref{thm:exp_decay} and Theorem~\ref{thm:post_con} make use of the theory developed by \citetSup{goshal2007} and \citetSup{jeong2021} to demonstrate that the posterior concentrates around low-dimensional models for GLNEMs.
Furthermore, we synthesize recent theoretical results for sparse Gaussian factor models~\citepSup{rockova2016, ohn2021} to derive our result on the $\SSIBP$ prior's ability to concentrate on low-dimensional models and derive our posterior concentration rates.

\begin{proof}[Proof of Theorem~\ref{thm:exp_decay}]
Since the inequality follows trivially when $t \geq d$, we assume $t < d$. Define the random variable $S_d = \norm{\boldeta}_0 = \sum_{h=1}^d \ind{\eta_h \neq 0}$. We bound $\bbP(S_d > t)$ as follows:
\[
\bbP(S_d > t) \leq \bbE\left[\bbP(S_d > t \mid \set{\theta_h}_{h=1}^d) \ind{\theta_1 < \tilde{\theta}}\right] + \bbP(\theta_1 \geq \tilde{\theta}),
\]
where $\tilde{\theta} = t \delta \log d / 6 d^{1 + \delta}$. Conditioned on $\set{\theta_h}_{h=1}^d$, $S_d$ is the sum of independent Bernoulli random variables with success probabilities
\[
p_h := \bbP(\lambda_h \neq 0 \mid \set{\theta_h}_{h=1}^d) = \theta_h,
\]
with the property $p_1 > p_2 > \dots > p_d$ by construction. In addition, on the event $\set{\theta_1 < \tilde{\theta}}$, we have $p_1 = \theta_1 < t \delta \log d / 6 d^{1+\delta} = (t/6d) \,  \delta \log d \, \exp(-\delta \log d) < t / d < 1$, where we used the fact that $e^{-x} < 1 / x$ for $x > 0$. As such, we can apply Lemma~\ref{lemma:chernoff}:
\begin{align}\label{eq:sum_bound}
\bbP(S_d > t \mid \set{\theta_h}_{h=1}^d) &= \bbP(S_d > \frac{t}{d} d \mid \set{\theta_h}_{h=1}^d) \nonumber \\
&\leq \left[\left\{\theta_1 \frac{d}{t}\right\}^{t/d} e^{t/d}\right]^d \nonumber \\
&\leq \left\{\frac{t \delta \log d}{6 d^{1 + \delta}} \frac{d}{t}\right\}^t e^t \\
&= \left\{\frac{\log d^{\delta/2}}{3 d^{\delta}}\right\}^t e^t, \nonumber \\
&\leq e^{-t(\delta/2) \log d}, \nonumber
\end{align}
where we used the fact that $\log x \leq (1/e) x$ for any $x > 0$ in the fifth line.

Now we bound $\bbP(\theta_1 \geq \tilde{\theta}) = \bbP(\nu_1 \geq \tilde{\theta})$, where $\nu_1 \sim \Beta(a, \kappa + 1)$ with $\kappa = d^{1 + \delta}$. Since $0 < a < 1$, Gautschi's inequality~\citepSup{gautschi1959} implies
\[
B(a, \kappa + 1) = \frac{\Gamma(a)}{\kappa + a} \frac{\Gamma(\kappa + 1)}{\Gamma(\kappa + a)} > \frac{\Gamma(a)}{\kappa + 1} \kappa^{1 - a} \geq \frac{\kappa^{1 - a}}{\kappa + 1}.
\]
The last line of the previous display used the fact that $\Gamma(a) \geq 1$ for $a \in (0, 1)$, which follows from $\Gamma(a)$ being monotonically decreasing on $(0, 1)$ with $\Gamma(1) = 1$. Then we have
\begin{align*}
\bbP(\nu_1 > \tilde{\theta}) &= \frac{1}{B(a, \kappa + 1)} \int_{\tilde{\theta}}^1 \nu^{a - 1} (1 - \nu)^{\kappa} d\nu \\
&\leq \frac{\kappa + 1}{\kappa^{1 - a}} \, \tilde{\theta}^{a - 1} \int_{\tilde{\theta}}^1 (1 - \nu)^{\kappa} d\nu \\
&= \left(\frac{1}{\kappa \tilde{\theta}}\right)^{1 - a}\left(1 - \frac{t \delta \log d}{6 d^{1 + \delta}}\right)^{d^{1 + \delta} + 1} \\
&= \left(\frac{6}{t\delta \log d}\right)^{1 - a}\left(1 - \frac{t \delta \log d}{6 d^{1 + \delta}}\right)^{d^{1 + \delta} + 1} \\
&\leq  \exp\left\{(a - 1) \log(t \delta \log d / 6)\right\} \ \exp\left(-\frac{t \delta}{6} \log d\right),
\end{align*}
where we used the fact that $(1 - b/x)^x \leq e^{-b}$ for $b \in \Reals{}$ and $x > 0$ in the last line. Next, we note that $t\delta \log d / 6 \geq \delta \log d / 6  \geq 1$ by the assumption that $\delta \geq 6 / \log d$ and $t \geq 1$. It follows that the first term in the last line of the previous display is bounded above by 1, so that
\begin{equation}\label{eq:comp_bound}
\bbP(\nu_1 > \tilde{\theta}) \leq \exp\left(-t (\delta/6) \log d\right).
\end{equation}
Combining (\ref{eq:sum_bound}) and (\ref{eq:comp_bound}), we obtain the desired bound.
\end{proof}

Before proving Theorem~\ref{thm:post_con} we need the following Lemma.

\begin{lemma}\label{lemma:elbo}
Suppose that the conditions of Theorem \ref{thm:post_con} hold. Then $\mathbb{P}_0^{(n)}(\mathcal{A}_n^c) = o(1)$, where the set $\mathcal{A}_n$ is
\[
\mathcal{A}_n = \left\{Y_{ij} \in \set{0, 1}, \ 1 \leq i \leq j \leq n \ : \ \int \int \int \prod_{i\leq j} \frac{q_{ij}(Y_{ij})}{q_{0ij}(Y_{ij})} d\Pi(\bbeta) d\Pi(\bLambda) \textrm{d}\Pi(\bU)  \geq e^{- C_1 n (n-1) \epsilon_n^2}\right\}
\]
for
\[
    \epsilon_n =  K_{\lambda} \, \sqrt{\frac{d_0 \log d}{n(n-1)}},
\]
and some constant $C_1 > 1$ that only depends on $\delta$.
\end{lemma}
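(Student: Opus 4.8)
The plan is to establish the standard ``evidence lower bound'' (the denominator bound of \citetSup{goshal2007}): the event $\mathcal{A}_n$ asserts that the integrated likelihood ratio is not exponentially small, which is the ingredient that lets the prior tail bound of Theorem~\ref{thm:exp_decay} transfer to the posterior in the proof of Theorem~\ref{thm:post_con}. Write $N = n(n-1)$ (comparable to the number $\binom{n+1}{2}$ of summands) and let $\Lambda_n(\bbeta,\bLambda,\bU) = \sum_{i\leq j}\log\{q_{ij}(Y_{ij})/q_{0ij}(Y_{ij})\}$ be the log-likelihood ratio. First I would introduce the Kullback--Leibler neighborhood
\[
B_n = \Big\{(\bbeta,\bLambda,\bU) : \textstyle\sum_{i\leq j} K(q_{0ij},q_{ij}) \leq \tfrac{1}{2}N\epsilon_n^2, \ \sum_{i\leq j} V(q_{0ij},q_{ij}) \leq C N\epsilon_n^2\Big\}
\]
and lower bound the integral defining $\mathcal{A}_n$ by restricting the domain to $B_n$.

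On $B_n$, Jensen's inequality applied to the normalized restricted prior gives $\int_{B_n} e^{\Lambda_n}\,d\Pi \geq \Pi(B_n)\exp\{\Pi(B_n)^{-1}\int_{B_n}\Lambda_n\,d\Pi\}$. Because edges are independent, $\bbE_0^{(n)}\Lambda_n = -\sum_{i\leq j}K(q_{0ij},q_{ij})\geq -N\epsilon_n^2/2$ on $B_n$ and $\Var_0(\Lambda_n) = \sum_{i\leq j}V(q_{0ij},q_{ij})\leq C N\epsilon_n^2$. A Chebyshev bound on the centered quantity $Z = \Pi(B_n)^{-1}\int_{B_n}(\Lambda_n - \bbE_0^{(n)}\Lambda_n)\,d\Pi$, whose variance is at most $\Pi(B_n)^{-1}\int_{B_n}\Var_0(\Lambda_n)\,d\Pi\leq C N\epsilon_n^2$ by Cauchy--Schwarz, then shows that with $\bbP_0^{(n)}$-probability $1 - O((N\epsilon_n^2)^{-1}) = 1-o(1)$ one has $\Pi(B_n)^{-1}\int_{B_n}\Lambda_n\,d\Pi\geq -N\epsilon_n^2$. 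Here I would use that $N\epsilon_n^2 = K_\lambda^2 d_0\log d\to\infty$ under Condition~\ref{assump:A3} and Assumption~\ref{assump:A2}. Combining the two displays, $\int e^{\Lambda_n}\,d\Pi\geq\Pi(B_n)e^{-N\epsilon_n^2}$ on this high-probability event, so it remains only to show $\Pi(B_n)\geq e^{-c\,N\epsilon_n^2}$ for a constant $c$ depending on $\delta$ and $K_\lambda$.

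To bound $\Pi(B_n)$ I would first replace the KL conditions by parameter-space conditions: by Assumptions~\ref{assump:A7}--\ref{assump:A8} and a second-order Taylor expansion of $\theta\mapsto b(\theta)$ about $\theta_{0ij}$, both $K(q_{0ij},q_{ij})$ and $V(q_{0ij},q_{ij})$ are bounded by a constant multiple of $(\eta_{ij}-\eta_{0ij})^2$ on the compact region fixed by Assumptions~\ref{assump:A4}--\ref{assump:A5}, so $B_n$ contains the set where $\sum_{i\leq j}(\eta_{ij}-\eta_{0ij})^2\lesssim N\epsilon_n^2\asymp d_0\log d$. I would then bound the prior mass of this set from below by intersecting the events (a) $Z_1=\cdots=Z_{d_0}=1$ and $Z_{d_0+1}=\cdots=Z_d=0$, (b) $\abs{\lambda_h-\lambda_{0h}}\leq\rho$ for $h\leq d_0$ with $\rho\asymp(d_0\log d/N)^{1/2}$, and (c) $\norm{\bbeta-\bbeta_0}_2\leq\rho$, leaving $\bU$ unconstrained. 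The key observation that makes $\bU$ free is the Frobenius isometry $\norm{\bU\bLambda_0\bU^{\top}}_F = \norm{\bLambda_0}_F\leq\sqrt{d_0}K_\lambda$, valid for every $\bU\in\mathcal{V}_{d,n}$; combined with (a)--(b) it forces $\norm{\bU\bLambda\bU^{\top} - \bU_0\bLambda_0\bU_0^{\top}}_F\leq 2\sqrt{d_0}K_\lambda + o(1)\lesssim\sqrt{d_0\log d}$, so the crude rate absorbs any placement of the latent positions.

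The main obstacle is the prior mass of the sparsity pattern in (a) under the ordered SS-IBP. I would compute it exactly as $\bbE[\prod_{h=1}^{d_0}\theta_h\prod_{h>d_0}(1-\theta_h)]$ and, using $\prod_{h\leq d_0}\theta_h = \prod_{\ell=1}^{d_0}\nu_\ell^{\,d_0-\ell+1}$ together with the Beta moment identities $\bbE[\nu_\ell^m]=a/(a+m)$ for $\ell\geq2$ and the Gamma-ratio formula for $\bbE[\nu_1^m]$ at $a=1/d$, $\kappa=d^{1+\delta}$, show it is at least $e^{-c_1 d_0\log d}$. The delicate point is the residual $\log(d_0!)$ arising from $\prod_{j=1}^{d_0-1}(a+j)$: Assumption~\ref{assump:A2} is exactly what guarantees $\log(d_0!)\asymp d_0\log d_0 = o(d_0\log d)$, so no factorial blow-up survives. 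Factor (b) contributes at least $e^{-c_2 d_0\log d}$, since the $\Laplace(b)$ density is bounded below on the $\rho$-interval around each $\lambda_{0h}$ and $\log(b/\rho)\asymp\log n\asymp\log d$ by Condition~\ref{assump:A1} and the polynomial smallness of $\rho$; factor (c) has $\Omega(1)$ mass because $p$ is fixed and the Gaussian density is bounded below on the fixed ball around $\bbeta_0$. Multiplying the three independent factors gives $\Pi(B_n)\geq e^{-(c_1+c_2)d_0\log d}$, whence $\int e^{\Lambda_n}\,d\Pi\geq e^{-C_1 N\epsilon_n^2}$ with $C_1 = 1 + (c_1+c_2)/K_\lambda^2$ depending only on $\delta$ and $K_\lambda$, and therefore $\bbP_0^{(n)}(\mathcal{A}_n^c)=o(1)$. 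The work to watch is ensuring every contribution is genuinely of order $d_0\log d$ so the separate constants collapse into a single $C_1$.
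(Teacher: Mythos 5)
Your proposal is correct and follows the same overall architecture as the paper's proof: the Kullback--Leibler neighborhood, the evidence lower bound, the Taylor-expansion reduction of $K$ and $V$ to $(\eta_{ij}-\eta_{0ij})^2$ via Assumptions \ref{assump:A7}--\ref{assump:A8}, and a three-factor prior-mass bound of order $e^{-c\,d_0\log d}$. Three sub-steps differ. First, you re-derive the evidence lower bound from Jensen plus Chebyshev rather than citing Lemma 10 of Ghosal and van der Vaart as the paper does; that is the same argument spelled out. Second, and most substantively, you handle the latent positions by leaving $\bU$ entirely unconstrained and invoking $\norm{\bU\bLambda\bU^{\top}}_F=\norm{\bLambda}_F$, so the $\bU$-contribution is deterministically $O(\sqrt{d_0}K_\lambda)$ against a budget of order $\sqrt{d_0\log d}$; the paper instead decomposes $\bU\bLambda\bU^{\top}-\bU_0\bLambda_0\bU_0^{\top}$ into $d_0$ rank-one differences and bounds their Frobenius-norm sum by $\sqrt{2}\,d_0$, which forces it to invoke Assumption \ref{assump:A2} at that step. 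Your version is cleaner and loses less. Third, you compute the SS-IBP mass of the exact sparsity pattern directly from the Beta moments, where the paper routes through its auxiliary Lemma~\ref{lemma:ssibp_lb}; the underlying calculation ($\prod_{h\le d_0}\theta_h=\prod_{\ell\le d_0}\nu_\ell^{d_0-\ell+1}$, factor out $\nu_1$, use independence, control $\log(d_0!)$ by \ref{assump:A2}) is the same, and your identification of the factorial term as the place where \ref{assump:A2} bites is accurate. One slip to fix: factor (c) does \emph{not} have $\Omega(1)$ mass, since you (correctly) need $\norm{\bbeta-\bbeta_0}_2\lesssim\sqrt{d_0\log d}/n\to 0$ to keep $\norm{\sum_k\bX_k(\beta_k-\beta_{0k})}_F^2$ within budget; the Gaussian mass of that shrinking ball is of order $\rho^p=e^{-p\log(1/\rho)}$ with $\log(1/\rho)\asymp\log n\asymp\log d$, which is still absorbed into $e^{-c\,d_0\log d}$ because $p$ is fixed, so the conclusion is unaffected but the stated order of that factor is wrong as written.
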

\begin{proof}
Denote the KL-neighborhood for the model parameters by
\[
\mathcal{B}_n = \left\{ (\bbeta, \bLambda, \bU) : \ \frac{2}{n(n-1)} \sum_{i\leq j} K(q_{0ij}, q_{ij}) \leq \epsilon_n^2,\ \frac{2}{n(n-1)} \sum_{i \leq j} V(q_{0ij}, q_{ij}) \leq \epsilon_n^2 \right\}.
\]
By Lemma 10 of \citetSup{goshal2007},  we have for any $C > 0$,
\[
\bbP_0^{(n)}\left\{\int_{\mathcal{B}_n} \prod_{i \leq j} \frac{q_{ij}(Y_{ij})}{q_{0ij}(Y_{ij})} d\Pi(\bbeta) d\Pi(\bLambda) d\Pi(\bU) \geq e^{-(1 + C)n(n-1)\epsilon_n^2} \bbP(\mathcal{B}_n)\right\} \geq 1 - \frac{1}{C^2 n(n-1)\epsilon_n^2}.
\]

Therefore, it suffices to show that the prior probability $\bbP(\mathcal{B}_n) \geq e^{-C_2 n(n-1) \epsilon_n^2}$ for some constant $C_2 > 0$. By Taylor expanding $K(q_{0ij}, q_{ij})$ and $V(q_{0ij}, q_{ij})$, Lemma 1 of \citetSup{jeong2021} established that 
\[
    \max\left\{K(q_{0ij}, q_{ij}), V(q_{0ij}, q_{ij})\right\} \leq \frac{b''(\theta_{0ij})[\xi'(\eta_{0ij})]^2}{\phi} (\eta_{ij} - \eta_{0ij})^2 + o\left\{(\eta_{ij} - \eta_{0ij})^2\right\}.
\]

    First, we show that $\abs{\theta_{0ij}}$ is uniformly bounded above by a constant for all $1 \leq i,j \leq n$. Specifically, $\abs{\theta_{0ij}} = \abs{\xi(\eta_{0ij})} \leq \max_{i,j} \abs{\xi(\bbeta^{\top}_0 \bx_{ij} + \bu_{0i}^{\top} \bLambda_0 \bu_{0j})} \leq \abs{\xi(K_{\beta} K_x + K_{\lambda} K_u^2)} < \infty$, since $\xi$ is continuous and strictly increasing and we used the parameter bounds in Assumptions \ref{assump:A4} -- \ref{assump:A5}. Coupled with Assumption \ref{assump:A7}, we have that $b''(\theta_{0ij})$ is uniformly bounded above by a constant. Next, using the chain-rule, we have that $\xi'(\eta_{0ij}) = (g^{-1})'(\eta_{0ij}) / b''(\theta_{0ij}) \leq \sup_{\set{\eta:\ \abs{\eta} \leq K_{\beta} K_x + K_{\lambda} K_u^2}} (g^{-1})'(\eta) / K_{b,1} \leq K_g / K_{b,1}$, where we used the lower bound in Assumption \ref{assump:A7} and Assumption \ref{assump:A8}. Therefore, both $K(q_{0ij}, q_{ij})$ and $V(q_{0ij}, q_{ij})$ can be bounded above by a constant multiple of $(\eta_{ij} - \eta_{0ij})^2$ for sufficiently large $n$. Thus, we have for some constant $b_1 > 0$ and sufficiently large $n$,
\begin{align*}
\bbP(\mathcal{B}_n) &\geq \bbP\left(\sum_{i \leq j} (\eta_{ij} - \eta_{0ij})^2 \leq b_1 n(n-1) \epsilon_n^2\right) \nonumber \\
&\geq \bbP\left(\sum_{i=1}^n \sum_{j=1}^n (\eta_{ij} - \eta_{0ij})^2 \leq b_1 n (n-1) \epsilon_n^2\right) \nonumber \\
    &= \bbP\left(\norm{\sum_{k=1}^p \bX_k \beta_k + \bU \bLambda \bU^{\top} - \sum_{k=1}^p \bX_k \beta_{0k} - \bU_0 \bLambda_0 \bU_0^{\top}}_F^2 \leq b_1 n (n - 1) \epsilon_n^2\right) \nonumber \\
    &\geq \bbP\left(\norm{\sum_{k=1}^p \bX_k (\beta_k - \beta_{0k})}_F\leq \sqrt{\frac{b_1n(n-1)}{4}} \epsilon_n \right) \\
    &\qquad\qquad\qquad \bbP\left(\norm{\bU \bLambda \bU^{\top} - \bU_0 \bLambda_0 \bU_0^{\top}}_F \leq \sqrt{\frac{b_1n(n-1)}{4}} \epsilon_n\right),
\end{align*} 
where the last line follows from the triangle inequality and the fact that $\bbeta$ is independent of the remaining variables under our prior specification.

Let $\bU_{0,h} \in \Reals{n}$ and $\bU_h \in \Reals{n}$ denote the $h$-th column of $\bU_0$ and $\bU$, respectively. Furthermore, we define $\tilde{\bLambda}_0 = \diag(\tilde{\blambda}_0) \in \Reals{d \times d}$ and $\tilde{\bU}_0 \in \Reals{n \times d}$ such that $\tilde{\lambda}_{0h} = \lambda_{0h}$ and $\tilde{\bU}_{0,h} = \bU_{0,h}$ for $h \leq d_0$ and $\tilde{\lambda}_{0h} = 0$ and $\tilde{\bU}_{0,h} = \mathbf{0}_n$ for $h > d_0$. In particular, this implies $[\tilde{\bU}_{0,1}, \dots, \tilde{\bU}_{0, d_0}] \in \mathcal{V}_{d_0, n}$. Next, we have
\begin{align*}
    \norm{\bU \bLambda \bU^{\top} - \bU_0 \bLambda_0 \bU_0^{\top}}_F &= \norm{\bU \bLambda \bU^{\top} - \tilde{\bU}_0 \tilde{\bLambda}_0 \tilde{\bU}_0^{\top}}_F \\
    &=\norm{\bU \bLambda \bU^{\top} - \bU \tilde{\bLambda}_0 \bU^{\top} + \bU \tilde{\bLambda}_0 \bU^{\top} - \tilde{\bU}_0 \tilde{\bLambda}_0 \tilde{\bU}_0^{\top}}_F \\
    &\leq \norm{\bU \bLambda \bU^{\top} - \bU \tilde{\bLambda}_0 \bU^{\top}}_F + \norm{\bU \tilde{\bLambda}_0 \bU^{\top} - \tilde{\bU}_0 \tilde{\bLambda}_0 \tilde{\bU}_0^{\top}}_F \\
    &= \norm{\bLambda - \tilde{\bLambda}_0}_F + \norm*{\sum_{h=1}^{d_0} \lambda_{0h} \left(\bU_h \bU^{\top}_{h} - \bU_{0,h} \bU_{0,h}^{\top}\right)}_F \\
    &\leq \norm{\bLambda - \tilde{\bLambda}_0}_F + \sum_{h=1}^{d_0} \abs{\lambda_{0h}} \ \norm{\bU_h \bU^{\top}_{h} - \bU_{0,h} \bU_{0,h}^{\top}}_F \\
    &\leq \norm{\bLambda - \tilde{\bLambda}_0}_F + \norm{\blambda_0}_{\infty} \sum_{h=1}^{d_0} \norm{\bU_h \bU^{\top}_{h} - \bU_{0,h} \bU_{0,h}^{\top}}_F.
\end{align*}
    Combining this with the previous inequality and use Assumptions \ref{assump:A4} and \ref{assump:A5}, we have
\begin{align}
    \bbP(\mathcal{B}_n) &\geq \bbP\left( \norm{\bbeta - \bbeta_0}_2 \leq \sqrt{\frac{b_1 n (n - 1)}{4 n^2 K_x^2}} \epsilon_{n} \right) \bbP\left(\norm{\bLambda - \tilde{\bLambda}_0}_F \leq \sqrt{\frac{b_1 n(n-1)}{16}} \epsilon_n\right) \times  \nonumber \\
    &\qquad\bbP\left( \sum_{h=1}^{d_0} \norm{\bU_{h} \bU_{h}^{\top} - \bU_{0,h} \bU_{0,h}^{\top}}_F \leq \sqrt{\frac{b_1 n(n-1)}{16}} \frac{\epsilon_{n}}{K_{\lambda}}\right), \label{eq:lower_bound}
\end{align}
where we used the fact that $\bLambda$ and $\bU$ are independent under the prior. Now we bound each term 
on the right hand side of (\ref{eq:lower_bound}) from below.

    We start with the last term
    \begin{align}
    \bbP\bigg(\sum_{h=1}^{d_0} &\norm{\bU_{h} \bU_{h}^{\top} - \bU_{0,h} \bU_{0,h}^{\top}}_F \leq \sqrt{\frac{b_1 n (n-1)}{16}} \frac{\epsilon_n}{K_{\lambda}}\bigg) = \nonumber \\
    &\bbP\left(\frac{1}{\sqrt{2}} \sum_{h=1}^{d_0} \norm{\bU_{h} \bU_{h}^{\top} - \bU_{0,h} \bU_{0,h}^{\top}}_F \leq \sqrt{\frac{b_1 d_0 \log d}{32}}\right). \label{eq:u_bound}
\end{align}
    We assert that $\sum_{h=1}^{d_0} \norm{\bU_h \bU_h^{\top} - \bU_{0,h} \bU_{0,h}^{\top}}_F/ \sqrt{2} \leq d_0$. Indeed,
\begin{align*}
    \sum_{h=1}^{d_0} \norm{\bU_h\bU_h^{\top} - \bU_{0,h}\bU_{0,h}^{\top}}_F &= \sum_{h=1}^{d_0} \left[\norm{\bU_h\bU_h^{\top}}_F^2 + \norm{\bU_{0,h}\bU_{0,h}^{\top}}_F^2 - 2 \tr(\bU_h\bU_h^{\top}\bU_{0,h}\bU_{0,h}^{\top})\right]^{1/2} \\
    &= \sum_{h=1}^{d_0} \left[2 - 2 (\bU_h^{\top} \bU_{0,h})^2\right]^{1/2} \\
&\leq \sqrt{2} d_0.
\end{align*}
    Therefore, the probability on the right hand side of 
    (\ref{eq:u_bound}) equals one for sufficiently large $n$ by Assumption \ref{assump:A2}.

    Next, we apply Lemma~\ref{lemma:ssibp_lb} to bound the second term in 
    (\ref{eq:lower_bound}) from below. Note that we satisfy the condition that $d_0 \leq d/\sqrt{2}$ for sufficiently large $n$ by Assumption \ref{assump:A2}, so the lemma implies that
\begin{align*}
    \bbP\left(\norm{\bLambda - \tilde{\bLambda}_0}_F \leq \sqrt{\frac{b_1n(n-1)}{16}} \epsilon_n\right) &= \bbP\left(\norm{\blambda - \tilde{\blambda}_0}_2 \leq K_{\lambda} \sqrt{b_1 d_0 \log d/16}\right) \\
    &\geq e^{-\tilde{C} d_0 \log\left[bd \, \vee \, (K_{\lambda}\sqrt{ b_1 d_0 \log d/16})^{-1}\right] - K_{\lambda} \sqrt{b_1 d_0 \log d/16} - K_{\lambda} d_0} \\
    &\geq e^{-C_3 K_{\lambda}^2 d_0 \log bd} \\
    &\geq e^{-C_4 K_{\lambda}^2 d_0 \log d} \\
    &= e^{-C_4 n(n-1) \epsilon_n^2}
\end{align*}
    for some constants $C_3$ and $C_4$ that depends on $\delta$. In the third line of the previous display, we used the fact that $bd \, \vee \, (K_{\lambda}\sqrt{b_1 d_0 \log d/16})^{-1} = bd$ for sufficiently large $n$ because $b \geq 1$. Also, we used Condition \ref{assump:A1} in the last line of the previous display.

Lastly, we provide a lower bound for the first term in 
(\ref{eq:lower_bound}). First, we have that $\sqrt{b_1 n (n-1) / (4 n^2 K_x^2)} \geq b_3$ for some constant $b_3$ and $n$ sufficiently large. Thus,
\begin{align*}
\bbP\left( \norm{\bbeta - \bbeta_0}_2 \leq \sqrt{\frac{b_1 n(n-1)}{4 n^2 K_x^2}} \epsilon_{n} \right) &\geq \bbP\left(\norm{\bbeta - \bbeta_0}_2 \leq b_3 \epsilon_n\right)  \\
&\geq \prod_{k=1}^p \bbP\left(\abs{\beta_k - \beta_{0k}} \leq \frac{b_3}{\sqrt{p}}\epsilon_n\right).
\end{align*}
For simplicity, let $b_4 = b_3 / \sqrt{p}$. Since the prior density on $\beta_k \iidsim N(0, \sigma^2_{\beta})$ is bounded away from zero in a neighborhood of $\beta_{0k}$ for $1 \leq k \leq p$, we have
\begin{align*}
\bbP(\abs{\beta_k - \beta_{0k}} \leq b_4 \epsilon) &= (2\pi \sigma_{\beta}^2)^{-1/2} \int_{\abs{\beta_k - \beta_{0k}} \leq b_4 \epsilon_n} e^{-\beta_k^2/2\sigma_{\beta}^2} d\beta_k \\
&\geq m_k (2\pi \sigma_{\beta}^2)^{-1/2} \int_{\abs{\beta_k - \beta_{0k}} \leq b_4 \epsilon_n} d\beta_k \\
&= \tilde{b}_k \epsilon_n,
\end{align*}
where $\tilde{b}_k = 2 b_4 m_k (2\pi\sigma_{\beta}^2)^{-1/2}$ and $m_k > 0$ is the lower bound on the Gaussian density in the neighborhood of $\beta_{0k}$. Combining this result with the previous expression, we have that there is a constant $b_4$ such that for large enough $n$,
\begin{align*}
\bbP\left( \norm{\bbeta - \bbeta_0}_2 \leq \sqrt{\frac{b_1 n(n-1)}{4 n^2 K_x^2}} \epsilon_{n} \right) &\geq b_4 \epsilon_n^p \\
&= e^{-(p/2)\log(n(n-1)) + p \log K_{\lambda} + (p/2)\log(d_0 \log d) + \log b_4} \\
&\geq e^{-b_5 p\log(n)} \\
&\geq e^{-b_5 (p/\gamma) \log d} \\
&\geq e^{-C_5 n(n-1) \epsilon_n^2},
\end{align*}
for some constants $b_5 > 0$ and $C_5 > 0$, where we used Condition \ref{assump:A3} on the second to last line.

Combining the previous three bounds for the probabilities in 
(\ref{eq:lower_bound}) with $C_2 = \max(C_4, C_5)$ gives the desired bound on $\bbP(\mathcal{B}_n)$. The result follows with $C_1 = 1 + C + C_2$.
\end{proof}

\begin{proof}[Proof of Theorem \ref{thm:post_con}]
As in \citetSup{jeong2021}, we write the posterior as
\begin{equation}\label{eq:posterior}
\bbP(\norm{\bLambda}_0 > C d_0 \mid \bY) = \frac{\int \int \int_{\norm{\bLambda}_0 > C d_0} \prod_{i \leq j} \frac{q_{ij}(Y_{ij})}{q_{0ij}(Y_{ij})}d\Pi(\bLambda) d\Pi(\bbeta) d\Pi(\bU)}{\int\int\int \prod_{i \leq j} \frac{q_{ij}(Y_{ij})}{q_{0ij}(Y_{ij})} d\Pi(\bLambda)d\Pi(\bbeta)d\Pi(\bU)} = \frac{N_n}{D_n},
\end{equation}
where $\Pi(\bLambda)$, $\Pi(\bbeta)$, and $\Pi(\bU)$ denote the prior distributions for $\bLambda, \bbeta$, and $\bU$, respectively. Now, we introduce an event $\mathcal{A}_n$ with large probability under the true data generating process. In particular, let $\mathcal{A}_n = \set{\bY \ : \ \int \int \int \prod_{i \leq j} \frac{q_{ij}(Y_{ij})}{q_{0ij}(Y_{ij})} d\Pi(\bLambda) d\Pi(\bbeta) d\Pi(\bU) \geq e^{-C_1 n (n-1) \epsilon_n^2}}$, where $\epsilon_n$ is the rate in Lemma \ref{lemma:elbo}. If we decompose the probability in Equation (\ref{eq:posterior}) into the sum of two complementary conditional probabilities (conditioning on $\mathcal{A}_n$ and its complement $\mathcal{A}_n^c$), we can write:
\begin{align*}
\mathbb{E}_0^{(n)}\left[\bbP(\norm{\bLambda}_0 > C d_0 \mid \bY)\right] &\leq \mathbb{E}_0^{(n)}\left[\bbP(\norm{\bLambda}_0 > C d_0 \mid \bY) \mathbbm{1}_{\mathcal{A}_n}\right] + \mathbb{P}_0^{(n)}(\mathcal{A}_n^c).
\end{align*}
     On the event $\mathcal{A}_n$, $D_n$ is bounded below by $e^{-C_1 K_{\lambda}^2 d_0 \log d}$. On the other hand, the expected value of the numerator can by bounded above by $\bbP(\norm{\bLambda}_0 > C d_0)$ using Fubini's theorem:
\begin{align*}
\bbE_0^{(n)}&\left[\int \int \int_{\norm{\bLambda}_0 > C d_0} \prod_{i \leq j} \frac{q_{ij}(Y_{ij})}{q_{0ij}(Y_{ij})}d\Pi(\bLambda) d\Pi(\bbeta) d\Pi(\bU)\right] = \\
&\qquad \int \int \int_{\norm{\bLambda}_0 > C d_0} \prod_{i \leq j} \bbE_0^{(n)}\left[\frac{q_{ij}(Y_{ij})}{q_{0ij}(Y_{ij})}\right]d\Pi(\bLambda) d\Pi(\bbeta) d\Pi(\bU) \leq \bbP(\norm{\bLambda}_0 > C d_0).
\end{align*}
Therefore, we can use Lemma~\ref{lemma:elbo} to conclude that
\begin{align*}
    \mathbb{E}_0^{(n)}\left[\bbP(\norm{\bLambda}_0 > C d_0 \mid \bY)\right] &\leq \bbP(\norm{\bLambda}_0 > C d_0) e^{C_1 K_{\lambda}^2 d_0 \log d} + o(1).
\end{align*}
In addition, for $n$ sufficiently large we have that $\delta > 6 / \log d$, since $d = \lfloor n^{\gamma}\rfloor$ by Condition \ref{assump:A3}, so we can apply Theorem~\ref{thm:exp_decay} to bound the previous expression from above as
\begin{align*}
    \mathbb{E}_0^{(n)}\left[\bbP(\norm{\bLambda}_0 > C d_0 \mid \bY)\right] \leq 3e^{-(C(\delta/6) - C_1 K_{\lambda}^2) d_0 \log d} + o(1),
\end{align*}
    for $n$ large enough, which goes to zero as $n \rightarrow \infty$ for $C > C_1 (6/\delta) K_{\lambda}^2 \, \vee \, 1$.

\end{proof}

\subsection{Auxiliary Lemmas}\label{sec:aux_lemma}

\begin{lemma}\label{lemma:ssibp_lb}
Let $\blambda \sim \text{SS-IBP}_d(a, d^{1+\delta}, \bbP_{spike}, \bbP_{slab})$ with $\delta > 0$, $\bbP_{slab} = \Laplace(b)$ and $\bbP_{spike} = \delta_0$. If $\blambda_0 \in \Reals{d}$ has nonzero support on the first $d_0$ entries with $1 \leq d_0 \leq d/\sqrt{2}$, i.e., $\lambda_{0j} \neq 0$ for $j \leq d_0$ and $\lambda_{0j} = 0$ for $j > d_0$, then for any $\epsilon > 0$,
\[
\bbP\left(\norm{\blambda - \blambda_0}_2 \leq \epsilon \right) \geq  e^{-\epsilon/\sqrt{2} b} e^{-\norm{\lambda_0}_1/b} a^{d_0} e^{-C_1 d_0 \log(bd \, \vee \, \epsilon^{-1})},
\]
for some positive constant $C_1$ that only depends on $\delta$. Moreover, if $\norm{\blambda_0}_{\infty} \leq M$ for some $M > 0$, $b \geq 1$, and $a = 1/d$, we have
\[
\bbP\left(\norm{\blambda - \blambda_0}_2 \leq \epsilon \right) \geq e^{-C_2 d_0 \log(bd \, \vee \, \epsilon^{-1}) - \epsilon - d_0 M},
\]
for some positive constant $C_2$ that depends only on $\delta$.
\end{lemma}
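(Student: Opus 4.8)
The plan is to pass to the latent Bernoulli representation of the prior and restrict to the single configuration that matches the true support exactly. Writing $Z_h \sim \Bern(\theta_h)$ for the slab indicators with $\theta_h = \prod_{\ell=1}^h \nu_\ell$, I would consider the event $E = \{Z_h = 1 \text{ for } h \le d_0,\ Z_h = 0 \text{ for } h > d_0\}$. On $E$ the null coordinates $\lambda_h$, $h > d_0$, are \emph{exactly} zero, matching $\lambda_{0h} = 0$, and the active coordinates are independent $\Laplace(b)$ draws. Since the slab draws are independent of the sticks $\nu_{1:d}$ given $Z$, this yields the factorization
\[
\bbP(\norm{\blambda - \blambda_0}_2 \le \epsilon) \ge \bbP(E)\cdot \bbP\Big(\sum_{h=1}^{d_0}(\lambda_h - \lambda_{0h})^2 \le \epsilon^2\Big),
\]
where in the second factor $\lambda_1,\dots,\lambda_{d_0}\iidsim\Laplace(b)$. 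The two factors are then bounded separately.

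The Laplace-concentration factor is the routine part. I would restrict the $\ell_2$-ball to the smaller $\ell_1$-ball $\{\sum_h \abs{\lambda_h - \lambda_{0h}} \le \epsilon/\sqrt2\}$, on which $\abs{\lambda_h} \le \abs{\lambda_{0h}} + \abs{\lambda_h - \lambda_{0h}}$ gives the pointwise density lower bound $\prod_h \frac{1}{2b}e^{-\abs{\lambda_h}/b} \ge (2b)^{-d_0} e^{-\norm{\blambda_0}_1/b} e^{-\epsilon/(\sqrt2 b)}$. Integrating this constant bound over the $\ell_1$-ball contributes its volume $(\sqrt2\,\epsilon)^{d_0}/d_0!$, so the factor is at least $e^{-\norm{\blambda_0}_1/b} e^{-\epsilon/(\sqrt2 b)}$ times a polynomial-in-$(\epsilon/b)$ volume term; by Stirling the latter is bounded below by $e^{-C d_0 \log(bd \vee \epsilon^{-1})}$. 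This already produces the $e^{-\epsilon/(\sqrt2 b)}$ and $e^{-\norm{\blambda_0}_1/b}$ factors in the statement.

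The selection probability $\bbP(E) = \bbE_{\nu}\big[\prod_{h\le d_0}\theta_h \prod_{h>d_0}(1-\theta_h)\big]$ is where the difficulty lies, because the ``keep'' requirement (the $\theta_h$ must not be too small for $h \le d_0$) and the ``drop'' requirement (the $\theta_h$ must be small for $h > d_0$) are coupled through the monotone sticks $\theta_1 > \dots > \theta_d$. The key observation is that it suffices to control the single quantity $\theta_{d_0}$: I would restrict to the window $\theta_{d_0} \in [1/(4d), 1/(2d)]$. On this event monotonicity gives $\prod_{h\le d_0}\theta_h \ge \theta_{d_0}^{d_0}\ge (4d)^{-d_0}$ (absorbable into the $\log(bd\vee\epsilon^{-1})$ term), while $\theta_h \le \theta_{d_0}\le 1/(2d)$ for every $h > d_0$ forces $\prod_{h>d_0}(1-\theta_h)\ge (1 - 1/(2d))^{d} \ge e^{-1}$. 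It then remains to bound $\bbP(\theta_{d_0}\in[1/(4d),1/(2d)])$ from below. Using that $\nu_\ell^a\sim\Uniform(0,1)$ for $\ell\ge2$ turns $-a\log\theta_{d_0}$ into a sum of a $\Gamma(d_0-1,1)$ variable and the contribution of the non-homogeneous first stick $\nu_1\sim\Beta(a,\kappa+1)$; the small-ball probability of this sum landing in the short interval $[a\log(2d), a\log(4d)]$ is bounded below by a quantity of order $a^{d_0}$ up to factors absorbable into $e^{-C d_0 \log(bd\vee\epsilon^{-1})}$, which yields the $a^{d_0}$ factor. The large second Beta parameter $\kappa = d^{1+\delta}$ is handled exactly as in the proof of Theorem~\ref{thm:exp_decay}, via the Gautschi bound on $B(a,\kappa+1)$.

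Finally I would multiply the two bounds and collect constants into a single $C_1$ depending only on $\delta$, giving the first display. The second statement follows by substituting $\norm{\blambda_0}_1 \le d_0 M$, $a = 1/d$ (so $a^{d_0} = e^{-d_0\log d}$ is absorbed, since $\log d \le \log(bd\vee\epsilon^{-1})$), and $b\ge 1$ (so $e^{-\epsilon/(\sqrt2 b)}\ge e^{-\epsilon}$), and renaming the constant to $C_2$. The main obstacle is the selection-probability lower bound of the third step: establishing that the product of Beta sticks concentrates on the polynomial window with probability of the advertised order $a^{d_0}$, while simultaneously keeping the drop factor $\prod_{h>d_0}(1-\theta_h)$ bounded away from zero, is the only step that genuinely uses the non-homogeneous stick-breaking structure and requires care with the growing parameter $\kappa$.
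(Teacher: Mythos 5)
Your overall decomposition is the same as the paper's: condition on the indicator configuration that matches the true support, bound the Laplace small-ball probability of the active block via the $\ell_1$-ball (this half is correct and matches the paper's Lemmas~\ref{lemma:laplace} and \ref{lemma:ss_laplace}), and then lower-bound the selection probability $\bbE[\prod_{h\le d_0}\theta_h\prod_{h>d_0}(1-\theta_h)]$. The gap is in the last step, and it is not a technicality: your window $\theta_{d_0}\in[1/(4d),1/(2d)]$ is essentially unreachable under this prior. Since $\theta_{d_0}=\prod_{\ell=1}^{d_0}\nu_\ell\le\nu_1$ and $\nu_1\sim\Beta(a,d^{1+\delta}+1)$ concentrates at scale $d^{-(1+\delta)}$, the event $\theta_{d_0}\ge 1/(4d)$ forces $\nu_1\ge 1/(4d)$, whose probability is at most a polynomial factor times $(1-1/(4d))^{d^{1+\delta}+1}\le e^{-d^{\delta}/4}$. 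This is doubly exponentially smaller (in $\log d$) than the advertised $a^{d_0}e^{-C_1 d_0\log(bd\vee\epsilon^{-1})}$, and it cannot be absorbed: in the application (Lemma~\ref{lemma:elbo}) one needs the bound $e^{-Cd_0\log d}$ with $d_0=o(\log d)$, whereas $e^{-d^{\delta}/4}$ decays polynomially in $d$ in the exponent. The Gautschi bound on $B(a,\kappa+1)$ does not rescue this; it controls the normalizing constant, not the $(1-\nu_1)^{\kappa}$ factor in the density, which is $\approx e^{-d^{\delta}/2}$ at $\nu_1\asymp 1/d$. Concretely, your claim that the small-ball probability of $-a\log\theta_{d_0}$ landing in $[a\log(2d),a\log(4d)]$ is of order $a^{d_0}$ up to absorbable factors is false because of the non-homogeneous first stick.

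The fix is to place the window one polynomial order lower, at the natural scale of $\nu_1$: require $\nu_1\le d^{-(1+\delta)}$ (so the drop factor $(1-\theta_1)^{d-d_0}\ge(1-d^{-(1+\delta)})^{d}$ is still bounded below by a constant) and accept that $\theta_{d_0}^{d_0}$ is then only of order $d^{-(1+\delta)d_0}=e^{-(1+\delta)d_0\log d}$, which is exactly the kind of factor the $e^{-C_1d_0\log(bd\vee\epsilon^{-1})}$ term is designed to absorb. This is what the paper does: it bounds $\prod_{h\le d_0}\theta_h\ge\theta_{d_0}^{d_0}=(\prod_{h=2}^{d_0}\nu_h^{d_0})\,\nu_1^{d_0}$, uses independence of the sticks to factor the expectation, and lower-bounds $\bbE[\nu_1^{d_0}(1-\nu_1)^{d-d_0}]\ge a\,e^{-C d_0\log d}$ by restricting the Beta integral to $[0,d^{-(1+\delta)}]$, and $\bbE[\nu_2^{d_0}]\ge a/(d_0+1)$ directly. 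There is no need to localize $\theta_{d_0}$ in a two-sided window at all; a one-sided moment computation suffices once you let the keep-probability be small and absorb its cost.
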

\begin{proof}
Conditioned on $\set{\theta_h}_{h=1}^d$, it follows from Lemma~\ref{lemma:ss_laplace} that
\begin{align*}
\bbP(\norm{\blambda - \blambda_0}_2 &\leq \epsilon \mid \set{\theta_h}_{h=1}^d) \geq \theta_{d_0}^{d_0} (1 - \theta_1)^{d - d_0} \left[e^{-\norm{\blambda_0}_1/b - \epsilon/\sqrt{2}b} \left(\frac{\epsilon}{\sqrt{2} bd_0}\right)^{d_0}\right] \\
&\geq e^{-\norm{\blambda_0}_1/b - \epsilon/\sqrt{2}b - 2 d_0 \log(\sqrt{2} bd_0 \, \vee \, \epsilon^{-1})} \left[\theta_{d_0}^{d_0} (1 - \theta_1)^{d - d_0} \right] \\
&\geq e^{-\norm{\blambda_0}_1/b - \epsilon/\sqrt{2}b - 2 d_0 \log(bd \, \vee \, \epsilon^{-1})} \left[\theta_{d_0}^{d_0} (1 - \theta_1)^{d - d_0}\right],
\end{align*}
where in the last line we used $\sqrt{2} d_0 \leq d$. Focusing on the terms involving $\set{\theta_h}_{h=1}^d$, we have
\begin{align*}
\theta_{d_0}^{d_0}(1 - \theta_1)^{d - d_0} &= \left(\prod_{h=1}^{d_0} \nu_h\right)^{d_0} (1 - \nu_1)^{d - d_0} \\
&=\left(\prod_{h=2}^{d_0} \nu_h^{d_0}\right) \, \nu_1^{d_0}(1 - \nu_1)^{d - d_0},
\end{align*}
so that
\[
\bbE\left[\theta_{d_0}^{d_0}(1 - \theta_1)^{d - d_0}\right] = \left[\bbE(\nu_2^{d_0})\right]^{d_0 - 1} \, \bbE\left[\nu_1^{d_0}(1 - \nu_1)^{d - d_0}\right],
\]
due to the independence of $\nu_1, \dots, \nu_d$ and the fact that $\nu_2, \dots, \nu_d$ are identically distributed. Since $B(a, d^{1+\delta} + 1) < B(a, 1) = 1 / a$, $a - 1 < 0$, and $d^{1 + \delta} > d - d_0$, it follows that
\begin{align*}
\bbE\left[\nu_1^{d_0}(1 - \nu_1)^{d - d_0}\right] &= \frac{1}{B(a, d^{1 + \delta} + 1)} \int_0^1 \nu_1^{d_0 + a - 1}(1 - \nu_1)^{d-d_0 + d^{1 + \delta}} d\nu_1 \\
&\geq a \left(1 - \frac{1}{d^{1+\delta}}\right)^{2d^{1+\delta}} \int_0^{d^{-(1 + \delta)}} \nu_1^{d_0} d\nu_1 \\
&\geq \frac{a}{(d_0 + 1)} e^{-4} \left(\frac{1}{d^{1 + \delta}}\right)^{d_0 + 1} \\
&\geq a e^{-C_1d_0\log d}
\end{align*}
for some $C_1 > 0$ depending only on $\delta$, where the second inequality used the fact that $(1 - x)^{1/x} \geq e^{-2}$ for $0 < x < 1/2$. Similarly we have
\begin{align*}
\bbE\left(\nu_2^{d_0}\right) &= \frac{1}{B(a, 1)} \int_{0}^1 \nu_2^{d_0 + a - 1} d\nu_2 \\
&\geq a \int_0^1 \nu_2^{d_0} d\nu_2 \\
&= \frac{a}{d_0 + 1} \\
&\geq a e^{-\log d},
\end{align*}
where the second line used the fact that $a < 1$. Combining these results, we recover the desired bound:
\[
\bbP(\norm{\blambda - \blambda_0}_2 \leq \epsilon) = \bbE\left[\bbP(\norm{\blambda - \blambda_0}_2 \leq \epsilon \mid \set{\theta_h}_{h=1}^d)\right] \geq e^{-\epsilon/\sqrt{2}b} e^{-\norm{\blambda_0}_1/b} a^{d_0} e^{-C_4 d_0 \log (bd \, \vee \, \epsilon^{-1})},
\]
where $C_4$ is a positive constant that only depends on $\delta$.

For the second assertion, note that $\norm{\blambda_0}_{\infty} \leq M$ implies
\[
\norm{\blambda_0}_1 \leq d_0 \norm{\blambda_0}_{\infty} \leq d_0 M,
\]
which completes the proof.
\end{proof}

\begin{lemma}\label{lemma:chernoff}
Let $X_1, \dots, X_n$ be independent Bernoulli random variables with $\mathbb{P}(X_i = 1) = p_i$ for $i = 1, \dots, n$, such that $p_1 > p_2 > \dots > p_n$. Let $S_n = \sum_{i=1}^n X_i$ and $p = \sum_{i=1}^n p_i$, then
\[
\bbP(S_n > a n) \leq \left[\left(\frac{p_1}{a}\right)^a e^{a}\right]^n
\]
for $p_1 \leq a < 1$.
\end{lemma}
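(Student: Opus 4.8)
The plan is to apply the standard exponential Markov (Chernoff) inequality and to exploit the ordering hypothesis only through the fact that $p_1 = \max_i p_i$. First I would fix $s > 0$ and write $\bbP(S_n > an) = \bbP(e^{sS_n} > e^{san}) \leq e^{-san}\,\bbE[e^{sS_n}]$ by Markov's inequality. By independence of the $X_i$, the moment generating function factors as $\bbE[e^{sS_n}] = \prod_{i=1}^n (1 - p_i + p_i e^s) = \prod_{i=1}^n (1 + p_i(e^s - 1))$.

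Next I would apply the elementary inequality $1 + x \leq e^x$ to each factor, giving $\bbE[e^{sS_n}] \leq \exp\{(e^s - 1)\sum_{i=1}^n p_i\}$. This is the only place the hypothesis enters: since $p_1 > p_2 > \dots > p_n$, the maximal success probability is $p_1$, so $\sum_{i=1}^n p_i \leq n p_1$ and therefore $\bbE[e^{sS_n}] \leq \exp\{(e^s - 1) n p_1\}$. Combining this with the Markov step yields, for every $s > 0$, the bound $\bbP(S_n > an) \leq \exp\{n[-sa + (e^s - 1)p_1]\}$.

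Finally I would optimize the exponent over $s$. Minimizing $f(s) = -sa + (e^s - 1)p_1$ gives $f'(s) = -a + p_1 e^s = 0$, i.e. $s^{\star} = \log(a/p_1)$, which is admissible ($s^{\star} \geq 0$) precisely because $a \geq p_1$; the boundary case $a = p_1$ gives $s^{\star} = 0$, where the asserted bound reduces to $1$ and holds trivially. Substituting $e^{s^{\star}} = a/p_1$ gives $f(s^{\star}) = -a\log(a/p_1) + a - p_1$, so that $\bbP(S_n > an) \leq \exp\{n[-a\log(a/p_1) + a - p_1]\} = (p_1/a)^{an}\, e^{an}\, e^{-n p_1}$. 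Since $e^{-n p_1} \leq 1$, discarding this factor yields the stated inequality $\bbP(S_n > an) \leq [(p_1/a)^a e^a]^n$.

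There is no serious obstacle here: this is the textbook Poisson-tail Chernoff argument, and the ordering assumption is used only to replace $\sum_i p_i$ by the crude upper bound $n p_1$ (so in fact only $p_1 = \max_i p_i$ is relevant). The condition $a < 1$ is not needed for the algebra—it merely ensures the event $\{S_n > an\}$ is nontrivial—and the slight looseness in the stated bound corresponds exactly to throwing away the favorable factor $e^{-n p_1}$, which one could retain for a sharper estimate.
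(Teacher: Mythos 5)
Your proof is correct and follows essentially the same Chernoff argument as the paper: both start from $\bbP(S_n > an) \leq e^{-san}\prod_{i=1}^n\bigl(1+p_i(e^s-1)\bigr)$ and use the ordering hypothesis only through $p_i \leq p_1$. The one (cosmetic) difference is that you apply $1+x\leq e^x$ before optimizing, landing directly on the Poisson-tail exponent $-a\log(a/p_1)+a-p_1$, whereas the paper optimizes the binomial form first, obtains the intermediate bound $\bigl[(p_1/a)^a\bigl((1-p_1)/(1-a)\bigr)^{1-a}\bigr]^n$, and then relaxes the second factor to $e^{a}$; your route has the minor advantage of not requiring $a<1$ for the choice of the optimizing parameter, exactly as you note.
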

\begin{proof}
The proof follows the derivation of a similar Chernoff bound found in \citetSup{hagerup1990}. For $t \geq 0$, we have that
\begin{align*}
\bbP(S_n \geq an) &\leq e^{-tan} \bbE\left(e^{tS_n}\right) \\
&= e^{-tan} \prod_{i=1}^n (1 + p_i (e^t - 1)) \\
&\leq e^{-tan} (1 + p_1 (e^t - 1))^n,
\end{align*}
where the last line follows from the ordering of the $p_i$. For $t = \log(a(1-p_1)/[p_1(1-a)])$, this become
\begin{align*}
\bbP(S_n \geq an) &\leq \left[\left(\frac{p_1}{a}\right)^a \left(\frac{1 - p_1}{1 - a}\right)^{1 - a}\right]^n, \quad p_1 \leq a < 1.
\end{align*}
Note that
\[
\left(\frac{1 - p_1}{1 - a}\right)^{1 - a} = \left(1 + \frac{a - p_1}{1 - a}\right)^{1 -a} \leq e^{a - p_1} \leq e^{a}, \quad \text{ for } p_1 \leq a,
\]
so that
\[
\bbP(S_n \geq a n) \leq \left[\left(\frac{p_1}{a}\right)^a e^{a}\right]^n.
\]
Noting that $\bbP(S_n > an) \leq \bbP(S_n \geq an)$ gives the desired bound.
\end{proof}

\begin{lemma}\label{lemma:laplace}
    Assume $\blambda = (\lambda_1, \dots, \lambda_d)^{\top}$ is distributed as $\lambda_j \iidsim \Laplace(b)$ for $1 \leq j \leq d$. Then for any $\blambda_0 \in \Reals{d}$ and any $\epsilon > 0$,
\[
\bbP(\norm{\blambda - \blambda_0}_1 \leq \epsilon) \geq e^{-\frac{\norm{\blambda_0}_1}{b} - \frac{\epsilon}{b} - d \log(bd/\epsilon)}.
\]
\end{lemma}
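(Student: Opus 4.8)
The plan is to reduce the $\ell_1$-ball event to a product of independent per-coordinate events and then lower bound each factor by controlling the Laplace density from below on a short interval. First I would record the containment of events
\[
\bigcap_{j=1}^d \set{\abs{\lambda_j - \lambda_{0j}} \leq \epsilon/d} \subseteq \set{\norm{\blambda - \blambda_0}_1 \leq \epsilon},
\]
which holds because on the left-hand event $\norm{\blambda - \blambda_0}_1 = \sum_{j=1}^d \abs{\lambda_j - \lambda_{0j}} \leq d \cdot (\epsilon/d) = \epsilon$. Since the $\lambda_j$ are independent, this yields the factorization
\[
\bbP(\norm{\blambda - \blambda_0}_1 \leq \epsilon) \geq \prod_{j=1}^d \bbP(\abs{\lambda_j - \lambda_{0j}} \leq \epsilon/d).
\]

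Next I would bound each single-coordinate probability from below. The $\Laplace(b)$ density is $(2b)^{-1} e^{-\abs{\lambda}/b}$, and for any $\lambda$ in the interval $[\lambda_{0j} - \epsilon/d, \lambda_{0j} + \epsilon/d]$ the triangle inequality gives $\abs{\lambda} \leq \abs{\lambda_{0j}} + \epsilon/d$. Hence the density on this interval is at least $(2b)^{-1} e^{-(\abs{\lambda_{0j}} + \epsilon/d)/b}$, and integrating over the interval of length $2\epsilon/d$ gives
\[
\bbP(\abs{\lambda_j - \lambda_{0j}} \leq \epsilon/d) \geq \frac{\epsilon}{bd} \, e^{-\abs{\lambda_{0j}}/b - \epsilon/(bd)}.
\]

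Finally I would multiply these $d$ bounds together. Using $\sum_{j=1}^d \abs{\lambda_{0j}} = \norm{\blambda_0}_1$ and $(\epsilon/(bd))^d = e^{-d \log(bd/\epsilon)}$, the product collapses to
\[
\left(\frac{\epsilon}{bd}\right)^d e^{-\norm{\blambda_0}_1/b - \epsilon/b} = e^{-\norm{\blambda_0}_1/b - \epsilon/b - d \log(bd/\epsilon)},
\]
which is exactly the claimed inequality. There is no substantive obstacle here; the only genuine design choice is the per-coordinate radius $\epsilon/d$, picked so that the triangle-inequality union of the coordinate events lands inside the $\ell_1$-ball while the resulting volume and density factors emerge in precisely the form needed downstream (this estimate feeds the Laplace-slab bound and ultimately the prior-mass bound of Lemma~\ref{lemma:ssibp_lb}). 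A minor point worth noting is that the bound is only informative when $\epsilon < bd$, so that the final logarithm is positive, although the inequality itself holds for every $\epsilon > 0$.
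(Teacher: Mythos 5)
Your proof is correct and yields exactly the stated bound, but it takes a slightly different route from the paper's. The paper keeps the full $\ell_1$-ball: after the change of variables $\bu = \blambda - \blambda_0$ and the bound $e^{-\norm{\blambda}_1/b} \geq e^{-\norm{\blambda_0}_1/b} e^{-\norm{\bu}_1/b}$, it recognizes the remaining integral as $\bbP(\sum_{i=1}^d E_i \leq \epsilon)$ for i.i.d.\ exponential $E_i$, lower bounds the gamma CDF by $\frac{\epsilon^d}{b^d d!} e^{-\epsilon/b}$, and finishes with $d! \leq d^d$. You instead shrink to the $\ell_\infty$-box of per-coordinate radius $\epsilon/d$ inside the $\ell_1$-ball, factor by independence, and lower bound each one-dimensional Laplace probability by a pointwise density estimate. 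The two losses are exactly equivalent: the box has volume $(2\epsilon/d)^d$ versus the $\ell_1$-ball's $2^d\epsilon^d/d!$, so your geometric loss of $d!/d^d$ is precisely the paper's Stirling-type loss at the last step, and both arguments land on the identical constant. Your version is arguably more elementary (no gamma CDF needed) and makes the role of the radius $\epsilon/d$ transparent; the paper's version would give a marginally sharper intermediate bound if one kept the $d!$ rather than replacing it by $d^d$, though that sharpening is never used downstream. Your closing remark that the bound is vacuous unless $\epsilon < bd$ while the inequality itself holds for all $\epsilon > 0$ is also accurate.
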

\begin{proof}
Using the change-of-variables $\bu = \blambda - \blambda_0$, we get
\begin{align*}
\bbP(\norm{\blambda - \blambda_0}_1 \leq \epsilon) &= \int_{\norm{\blambda - \blambda_0}_1 \leq \epsilon} \left(\frac{1}{2b}\right)^d e^{-\norm{\blambda}_1/b} d\blambda \\
&\geq e^{-\norm{\blambda_0}_1/b} \int_{\norm{\bu}_1 \leq \epsilon} \left(\frac{1}{2b}\right)^de^{-\norm{\bu}_1/b} d\bu \\
&\geq e^{-\norm{\blambda_0}_1/b} \bbP\left(\sum_{i=1}^d E_i \leq \epsilon\right),
\end{align*}
where $E_1, \dots, E_d$ are iid exponential random variables with scale $b$. Recall that the sum of $n$ iid exponential random variables with scale $\theta$ follows a gamma distribution with shape $n$ and scale $\theta$. Thus, we have
\begin{align*}
\bbP\left(\sum_{i=1}^d E_i \leq \epsilon \right) &= \frac{1}{\Gamma(d) b^d} \int_0^{\epsilon} x^{d - 1} e^{-x/b} dx \\
&\geq \frac{1}{b^d (d-1)!} e^{-\epsilon/b} \int_0^{\epsilon} x^{d-1} = \frac{\epsilon^d}{b^d d!} e^{-\epsilon/b}.
\end{align*}
The fact that $d! \leq d^d$ completes the proof.
\end{proof}

\begin{lemma}\label{lemma:ss_laplace}
    Assume $\blambda = (\lambda_1, \dots, \lambda_d)^{\top}$ is distributed as
\begin{align*}
&\lambda_j \mid \xi_j \indsim \xi_j \Laplace(b) + (1 - \xi_j) \delta_0 \\
&\xi_j \indsim \Bern(\theta_j) 
\end{align*}
    for $\theta_1 > \theta_2 > \dots > \theta_d$ with $\theta_j \in (0, 1)$ for $1 \leq j \leq d$. Assume $\blambda_0 = (\lambda_{0,1}, \dots, \lambda_{0d})^{\top} \in \Reals{d}$ has nonzero support on the first $d_{0}$ entries, i.e., $\lambda_{0j} \neq 0$ for $j \leq d_0$ and $\lambda_{0j} = 0$ for $j > d_0$. Then for any $\epsilon > 0$,
\[
\bbP(\norm{\blambda - \blambda_0}_2 \leq \epsilon) \geq \theta_{d_0}^{d_0}(1 - \theta_1)^{d - d_0} \left[e^{-\norm{\blambda_0}_1/{b} - \epsilon/\sqrt{2}b} \left(\frac{\epsilon}{\sqrt{2}bd_0}\right)^{d_0}\right].
\]
\end{lemma}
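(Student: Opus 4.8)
The plan is to lower bound the $\ell_2$-ball probability by restricting to a single favorable configuration of the latent Bernoulli indicators $\xi_1,\dots,\xi_d$, namely the one that activates the Laplace slab exactly on the support of $\blambda_0$ and leaves the point mass $\delta_0$ active elsewhere. Concretely, let $E = \{\xi_j = 1 \text{ for } 1 \le j \le d_0,\ \xi_j = 0 \text{ for } d_0 < j \le d\}$. On $E$, the coordinates with index $j > d_0$ equal $0 = \lambda_{0j}$ and contribute nothing to $\norm{\blambda - \blambda_0}_2$, while $\lambda_1,\dots,\lambda_{d_0}$ are independent $\Laplace(b)$ variables. Since the event of interest contains its intersection with $E$, and the indicator draws are independent of the slab draws, I would factor
\[
\bbP(\norm{\blambda - \blambda_0}_2 \le \epsilon) \ge \bbP(E)\, \bbP\!\left(\sum_{j=1}^{d_0}(\lambda_j - \lambda_{0j})^2 \le \epsilon^2\right),
\]
where in the last factor $\lambda_1,\dots,\lambda_{d_0} \iidsim \Laplace(b)$.

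First I would bound $\bbP(E)$ from below. Because the $\xi_j$ are independent with $\bbP(\xi_j = 1) = \theta_j$ and the slab probabilities are strictly decreasing, $\theta_1 > \dots > \theta_d$, we have $\theta_j \ge \theta_{d_0}$ for $j \le d_0$ and $1 - \theta_j \ge 1 - \theta_1$ for $j > d_0$. Hence $\bbP(E) = \prod_{j=1}^{d_0}\theta_j \prod_{j=d_0+1}^{d}(1-\theta_j) \ge \theta_{d_0}^{d_0}(1-\theta_1)^{d-d_0}$, which is exactly the leading factor in the claimed bound; the strict monotonicity of the $\theta_h$ is what makes this crude product bound sufficient.

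It then remains to handle the $d_0$-dimensional Laplace probability, which is where the constant $\sqrt 2$ enters. The key observation is the norm inequality $\norm{\bu}_2 \le \norm{\bu}_1$ on $\Reals{d_0}$, so that $\{\norm{\bu}_1 \le \epsilon/\sqrt 2\} \subseteq \{\norm{\bu}_2 \le \epsilon\}$; applying this with $\bu = (\lambda_1 - \lambda_{01},\dots,\lambda_{d_0} - \lambda_{0d_0})$ converts the $\ell_2$ event into an $\ell_1$ event at radius $\epsilon/\sqrt 2$. I would then invoke Lemma~\ref{lemma:laplace} in dimension $d_0$ at that radius, noting that the $\ell_1$ norm of the restricted target vector equals $\norm{\blambda_0}_1$, which yields precisely the bracketed term $e^{-\norm{\blambda_0}_1/b - \epsilon/\sqrt 2 b}(\epsilon/(\sqrt 2 bd_0))^{d_0}$. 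Multiplying this against $\bbP(E)$ gives the statement. None of the steps is a genuine obstacle; the only points requiring care are confirming that the slab-active configuration $E$ matches the support pattern of $\blambda_0$ so the off-support coordinates vanish identically, and the bookkeeping of the $\ell_1$/$\ell_2$ conversion that produces the stated constants.
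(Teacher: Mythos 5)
Your proposal is correct and follows essentially the same route as the paper's proof: bound the probability of the favorable spike/slab configuration by $\theta_{d_0}^{d_0}(1-\theta_1)^{d-d_0}$ using the monotonicity of the $\theta_j$, convert the $d_0$-dimensional $\ell_2$-ball of radius $\epsilon/\sqrt{2}$ to an $\ell_1$-ball via $\norm{\cdot}_2 \leq \norm{\cdot}_1$, and invoke Lemma~\ref{lemma:laplace}. The only cosmetic difference is that you condition on the exact indicator event $E$ so the off-support coordinates vanish identically, whereas the paper allots each coordinate $j > d_0$ a budget of $\epsilon/\sqrt{2(d-d_0)}$ and then lower bounds that probability by the spike mass $1-\theta_j$; both yield the same factor.
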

\begin{proof}
We start with the inequality
\[
\bbP(\norm{\blambda - \blambda_0}_2 \leq \epsilon) \geq \left[\prod_{j > d_0} \bbP\left(\abs{\lambda_j} \leq \frac{\epsilon}{\sqrt{2 (d - d_0)}}\right)\right] \bbP\left(\norm{\lambda_{1:d_0} - \lambda_{0, 1:d_0}}_2 \leq \frac{\epsilon}{\sqrt{2}}\right).
\]
Note that $\bbP(\abs{\lambda_j} \leq \epsilon / \sqrt{2(d-d_0)}) \geq \bbP(\lambda_j = 0) = 1 - \theta_j \geq 1 - \theta_1$. Also,
\begin{align*}
\bbP\left(\norm{\lambda_{1:d_0} - \lambda_{0, 1:d_0}}_2 \leq \frac{\epsilon}{\sqrt{2}}\right) &\geq  \left[\prod_{j\leq d_0} \theta_j\right] \bbP_{\Laplace(b)}\left(\norm{\lambda_{1:d_0} - \lambda_{0, 1:d_0}}_2 \leq \frac{\epsilon}{\sqrt{2}}\right) \\
&\geq \theta_{d_0}^{d_0} \ \bbP_{\Laplace(b)}\left(\norm{\lambda_{1:d_0} - \lambda_{0, 1:d_0}}_1 \leq \frac{\epsilon}{\sqrt{2}}\right),
\end{align*}
where $\bbP_{\Laplace(b)}$ denotes the probability measure under a $\Laplace(b)$ density. Applying Lemma~\ref{lemma:laplace} completes the proof.
\end{proof}

\section{Full Conditional Distributions}\label{sec:full_cond}

In this section, we present the logarithm of the full conditional distribution used in the Hamiltonian Monte Carlo algorithm presented in Section~\ref{subsec:hmc_gibbs} of the main text. First, we outline the change-of-variable transformations that we applied to the scalar parameters with a constrained domain. We applied a log-transformation to $\sigma_{h}$ and a logistic transform to $\nu_h$ for $h = 1, \dots, d$. In particular, denote the unconstrained variables after the transformation as
\begin{align*}
\eta_{\sigma_h} &= \log \sigma_h, \qquad \eta_{\nu_h} = \log\left(\frac{\nu_h}{1 -\nu_h}\right), \qquad h = 1, \dots, d.
\end{align*}
    Note that we correct these transformations with Jacobian adjustments to the full conditional distribution. 
The logarithm of $p(\bpsi \mid Z_{1:d}, \bY_{1:T})$ up to an additive constant $C$ is
\begin{align}
    \log p(\bpsi \mid Z_{1:d}, \bY) = & \sum_{i \leq j} \frac{y_{ij} \, \left[\bTheta(Z_{1:d})\right]_{ij} + b(\left[\bTheta(Z_{1:d})\right]_{ij})}{\phi} + k(y_{ij}, \phi) \nonumber \\
&- \frac{\bbeta^{\top}\bbeta}{2 \sigma_{\beta}^2} - \frac{1}{2} \tr\left(\bB^{\top}\bB\right) - \sum_{h=1}^d \frac{\tilde{\lambda}_{h}^2}{2} - \sum_{h=1}^d \frac{e^{2 \eta_{\sigma_h}}}{2 b^2} \nonumber \\
&+\kappa \log\left[1 - \sigma(\eta_{\nu_1})\right]  + (\alpha - 1) \sum_{h=1}^d \log\left[\sigma(\eta_{\nu_h})\right]  \nonumber \\
&+ \sum_{h=1}^d \left[Z_h \log(\theta_h) + (1 - Z_h) \log(1 - \theta_h)\right] \nonumber \\
&+ \log \abs{\mathbf{J}(\eta_{\sigma_{1:d}}, \eta_{\nu_{1:d}})} + C, \label{eq:hmc_logpost}
\end{align}
where
\begin{equation*}
    \bTheta(Z_{1:d}) = (g \circ b')^{-1}\left(\sum_{k=1}^p  \bX_{k}\beta_k + \bU_{\bB} \bLambda(Z_{1:d}) \bU_{\bB}^{\top}\right).
\end{equation*}
The term involving the logarithm of the Jacobian determinant is
\begin{align*}
\log \abs{\mathbf{J}(\eta_{\sigma_{1:d}}, \eta_{\nu_{1:d}})} =  \sum_{h=1}^d \eta_{\sigma_h}  + \sum_{h=1}^d \log[\sigma(\eta_{\nu_h})(1 - \sigma(\eta_{\nu_h}))],
\end{align*}
where $\sigma(x) = 1/(1 + e^{-x})$ is the logistic function. 

\section{MCMC Initialization and Post-Processing}\label{sec:mcmc_info}

    This section provides additional information about the initialization of the MCMC algorithm proposed in Section~\ref{ss:sec:estimation} of the main text and the post-processing of its output.

\subsection{Parameter Initialization}\label{subsec:param_init}

 We can greatly reduce the number of iterations required to reach convergence by choosing good initial values for the model parameters. We initialize the parameters with the last sample from a short MCMC chain (500 iterations) targeting a model without dimension selection, that is, we repeat step one of Algorithm~\ref{alg:hmc_within_gibbs} for 500 iterations with $Z_h = 1$ for $1 \leq h \leq d$. We remove the dimension selection indicators to allow the sampler to initially freely explore the unconstrained parameter space before applying shrinkage. The initial parameters of this short chain are drawn randomly according to NumPyro's recommended random initialization scheme, which draws the unconstrained variables uniformly from the interval $(-2, 2)$~\citepSup{phan2019}. 

\subsection{Post-Processing to Account for Posterior Invariances}\label{subsec:post_invariance}

Although our use of an $\SSIBP$ prior places soft identifiability constraints on the parameters, the posterior remains invariant to signed-permutation of $\bU$'s columns, i.e., $\bU \rightarrow \bU \diag(\bs) \bP$ where $\bs \in \set{-1, 1}^d$ and $\bP$ is a permutation matrix on $d$ elements. This can lead to inflated variance estimates without proper post-processing. As such, we post-process the results by matching each posterior draw $\bU^{(s)}$ to some reference $\bU_0 \in \Reals{n \times d}$. For convenience, we choose $\bU_0$ to be the maximum a posteriori (MAP) estimator.

Similar to Procrustes matching commonly employed in the latent space literature, which minimizes the Frobenius norm over rotation matrices, we solve
\begin{align}
    \tilde{\bU}^{(s)} &= \argmin_{\bs \in \set{-1, 1}^d,\ \bP \in \Pi_d} \norm{\mathbf{U}_0 - \mathbf{U}^{(s)} \diag(\bs) \bP}_F^2 \nonumber \\
&=\argmax_{\bs \in \set{-1, 1}^d,\ \bP \in \Pi_d} \tr(\bP \, \mathbf{U}_0^{\top} \mathbf{U}^{(s)} \diag(\bs)). \label{eq:linear_assign}
\end{align}
where $\Pi_d$ is the set of permutation matrices on $d$ elements. We recognize the second expression as a linear assignment problem~\citepSup{gower2004} with a $d \times d$ cost matrix $C$ with elements
\[
    C_{ij} = \max\left(\mathbf{U}_{0,i}^{\top} \ \mathbf{U}_j^{(s)}, -\mathbf{U}_{0,i}^{\top} \ \mathbf{U}_j^{(s)}\right),
\]
where $\mathbf{U}_{0,i}$ and $\mathbf{U}_{j}^{(s)}$ are the $i$-th and $j$-th columns of $\mathbf{U}_0$ and $\mathbf{U}^{(s)}$, respectively. Therefore, the solution to 
(\ref{eq:linear_assign}) can be efficiently found in $O(d^3)$ time with algorithms such as the Hungarian method~\citepSup{kuhn1955}. In summary, for each posterior draw, we solve the above linear assignment problem to obtain $\bP$ and set $\bs = \text{sign}(\diag(\mathbf{U}_0^{\top} \mathbf{U}^{(s)} \bP))$. Note that we also apply the permutation matrix $\bP$ to the diagonal elements of $\bLambda^{(s)}$.

\subsection[A Point Estimator for the Latent Positions]{A Point Estimator for $\bU$}\label{subsec:point_estimates}

Because the posterior mean of $\bU$ may not lie on the Stiefel manifold, it does not provide a meaningful point estimator for $\bU$. To define a point estimator that lies on the Stiefel manifold, we take a Bayesian decision theoretic approach and minimize the following constrained Frobenius norm under the posterior:
\[
\argmin_{\bW \in \mathcal{V}_{d,n}} \E{ \norm{\bW - \bU}_F | \bY} = \argmax_{\bW \in \mathcal{V}_{d,n}} \tr(\E{\bU | \bY}^{\top} \bW).
\]
If we view the Stiefel manifold as an embedded manifold of $nd$-dimensional Euclidean space, then the solution is the posterior Fr\'echet mean of $\bU$. The above expression is minimized by the orthogonal component of the polar decomposition of $\E{\bU | \bY}$~\citepSup{gower2004}, i.e., $\bW = \bZ \bV^{\rm T}$ where $\E{\bU | \bY} = \bZ \bD \bV^{\top}$ is the singular value decomposition (SVD) of $\E{\bU | \bY}$.

\section{Definitions of Competing Model Selection Criteria}\label{sec:model_select}

First, we define the estimators of $d_0$ based on minimizing information criteria estimated with the posterior samples. Let $\set{\bxi^{(s)}}_{s=1}^S$ denote the $S$ posterior samples from the Bayesian model under consideration. In addition, let $\hat{\bxi}$ denote an estimate of the parameters based on these posterior samples. We use the definitions of the various information criteria as outlined in \citetSup{gelman2013}. Each criterion contains two terms, the first term measures model fit, while the second term penalizes model complexity. We define the criteria on the deviance scale, so that smaller values are better. The AIC is defined as
\[
\text{AIC} = -2 \log p(\bY \mid \hat{\bxi}) + 2 k,
\]
where $k$ is the number of model parameters, that is, $nd + d + p$. Similarly, the BIC is defined as
\[
    \text{BIC} = -2 \log p(\bY \mid \hat{\bxi}) + k \log{n \choose 2},
\]
where $k$ is the same as in the AIC. The DIC is defined as
\[
\text{DIC} = -2 \log p(\bY \mid \hat{\bxi}) + 2 p_{DIC},
\]
where $p_{DIC} = 2 \left(\log p(\bY \mid \hat{\bxi}) - \frac{1}{S} \sum_{s=1}^S \log p(\bY \mid \bxi^{(s)})\right)$. Next, WAIC approximates the leave-one-out cross-validation (LOO-CV) procedure with the expected log-predictive density as the loss function. For GLNEMs, we use the following formula for WAIC: 
\[
    \text{WAIC} = -2 \log \sum_{i < j} \left[\frac{1}{S} \sum_{s=1}^S p(Y_{ij} \mid \bxi^{(s)}) \right] + 2  \sum_{i < j} V_{s=1}^S \log p(\bY \mid \bxi^{(s)}),
\]
where $V_{s=1}^S \log p(\bY \mid \bxi^{(s)})$ denotes the sample variance of the log-likelihood evaluated over the posterior samples. This definition approximates leave-one-dyad-out cross-validation. Lastly, we considered the K-fold cross-validation estimator proposed by \citetSup{hoff2008}, which is outlined in Algorithm~\ref{alg:kfold}.
\begin{sampler}
\begin{enumerate}
\item Randomly split the set of dyads $\set{(i, j) \, : \, i < j}$ into $K$ test sets $A_1, \dots, A_K$.
\item For $d = 1, \dots, D$:
\begin{enumerate}
\item For $k = 1, \dots, K$:
\begin{enumerate}
\item[(1)] Perform the MCMC algorithm using only $\set{y_{ij} \, : \, (i, j) \notin A_k}$ and estimate the natural parameter matrices $\hat{\bTheta}$ for all dyads by their posterior means.
\item[(2)] Based on these estimates, compute the log predictive probability over all held-out dyads $L_d^{(k)} = \sum_{(i,j) \in A_{k}} \log \bbP(Y_{ij} = y_{ij} \mid \hat{\bTheta})$.
\end{enumerate}
\item Measure the predictive performance of $d$ as $L_d = \sum_{k=1}^K L_d^{(k)} / K$.
\end{enumerate}
\item Select $\hat{d} = \argmax_{1 \leq d \leq D} L_d$.
\end{enumerate}
\caption{K-Fold Cross-Validation for GLNEMs}
\label{alg:kfold}
\end{sampler}

\section{Sensitivity to Model Misspecification and Zero-Inflation}\label{sec:zero-inflation}

In this section, we provide a simulation study to assess the GLNEM's sensitivity to model misspecification caused by zero-inflation. In particular, we consider an ordinal-valued network generated from a zero-inflated Poisson model, that is,
\[
    Y_{ij} = Y_{ji} \indsim \pi \delta_0 + (1 - \pi) \Pois(\exp\left\{\bbeta^{\top}\bx_{ij} + [\bU \bLambda \bU^{\top}]_{ij}\right\}), \qquad 1 \leq i < j \leq n,
\]
where $\delta_0$ is a point mass at zero and $\pi \in (0, 1)$ is the probability that the edge variable is drawn from $\delta_0$. This model does not fall within the GLNEM framework; however, we posit that the negative binomial GLNEM with a log link may be able to recover the correct dimensionality of the latent space and certain model parameters. This hypothesis comes from the fact that the negative binomial GLNEM encompasses the over-dispersion implied by the zero-inflated Poisson model. Specifically, the zero-inflated Poisson model has the following mean and variance relationship:
\begin{align*}
    \mathbb{E}[Y_{ij} \mid \bx_{ij}] &= (1 - \pi) \mu_{ij},\\
    \textrm{Var}(Y_{ij} \mid \bx_{ij}) &= \mu_{ij} + [\pi/(1-\pi)] \mu_{ij}^2,
\end{align*}
where $\mu_{ij} = \exp\left\{\bbeta^{\top}\bx_{ij} + [\bU \bLambda \bU^{\top}]_{ij}\right\})$. As such, the negative binomial model correctly specifies the variance, but incorrectly specifies the mean. However, for the log link, this misspecification appears as constant shift of $\log(1 + \pi)$ in the intercept, which may only bias this parameter.

To empirically evaluate this hypothesis, we generated 50 networks from the zero-inflated Poisson model with zero-inflation probability $\pi = 0.1$ for $n = 100$ and 200. The model parameters and covariates were generated in the same way as the Poisson GLNEM outlined in Section~\ref{ss:subsec:competition} of the simulation study in the main text.  The true dimension of the latent space was $d_0 = 3$. We estimated a Poisson and negative binomial GLNEM with a log link and SS-IBP priors by running the MCMC algorithm proposed in Section~\ref{ss:sec:estimation} for 5,000 iterations after a burn-in of 5,000 iterations. We used the same hyperparamter settings as outlined in Section~\ref{ss:subsec:competition}.

\begin{figure}[htb]
    \centering
    \includegraphics[height=0.2\textheight, width=\textwidth, keepaspectratio]{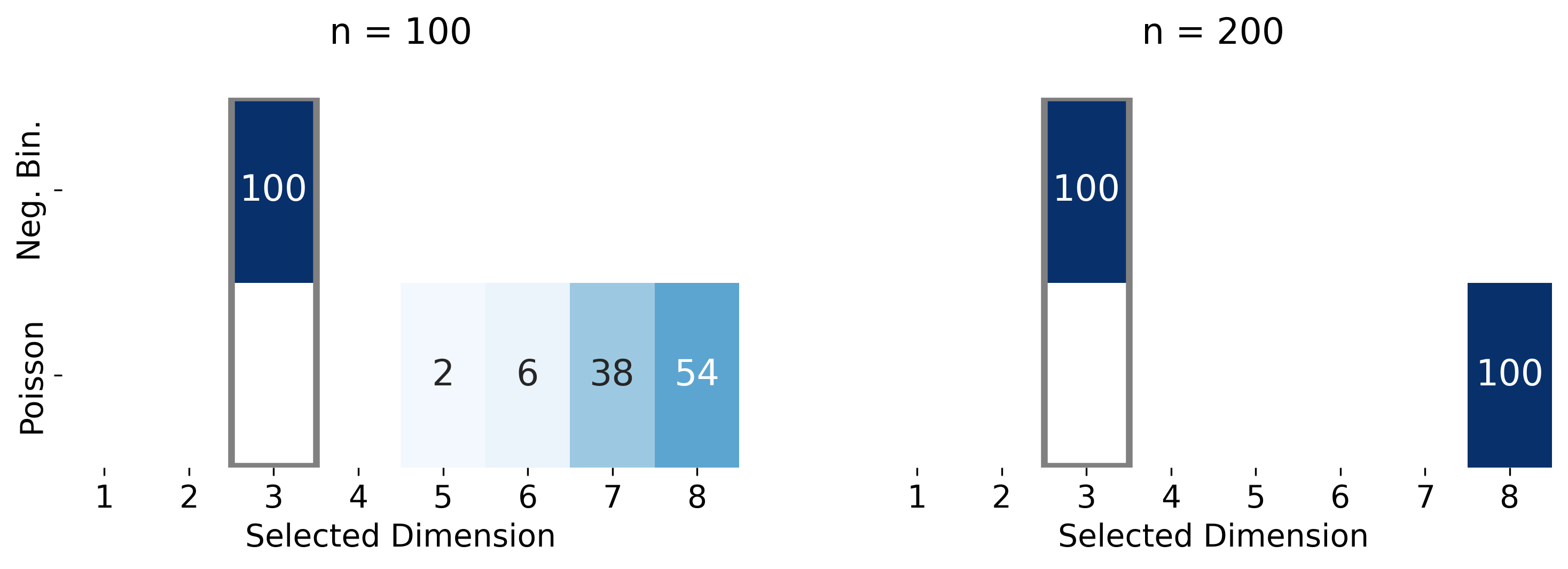}
    \caption{Heatmaps displaying the percentage of the 50 simulations in which the selected dimension was a particular value for the Poisson and negative binomial GLNEMs using the proposed SS-IBP prior. The column corresponding to the true value of $d_0 = 3$ is outlined in gray. Darker blue cells indicate percentages closer to $100\%$.} 
    \label{fig:heatmap_zip}
\end{figure}

Figure~\ref{fig:heatmap_zip} displays the results for estimating the latent space dimension. We see that the Poisson GLNEM drastically overestimates the dimension of the latent space for both sample sizes. In contrast, the negative binomial GLNEM correctly estimates the true dimension for all simulations. We believe this is due to the negative binomial GLNEM's ability to compensate for the zero-inflated Poisson model's over-dispersion through the extra dispersion parameter $\phi$. On the other hand, the Poisson GLNEM uses excessive dimensions to account for the over-dispersion due to zero-inflation.

Next, we examine the ability of the negative binomial GLNEM to recover the model parameters for this misspecified model. Table~\ref{tab:performance_zip} displays the trace correlation and relative errors for the model parameters estimated by the negative binomial GLNEM for both sample sizes. We see that all parameters except $\bbeta$ are accurately recovered and improve with $n$. The absolute errors of $\bbeta$'s elements, that is, the absolute value of the difference between the true and estimated values, are displayed in Table~\ref{tab:performance_zip_coef}. All coefficients except the intercept $\beta_1$ improve as $n$ increases. Also, the absolute error of $\beta_1$ is on the order of $\abs{\log(1-\pi)} = \abs{\log(0.9)} = 0.105$, which suggests that this bias is present in the intercept estimate. Overall, this simulation study suggests that one should prefer the negative binomial GLNEM when zero-inflation is suspected, which is often the case in count network data. 

\begin{table}[h]
\centering\small
\begin{tabular}{@{}llllll@{}} \toprule
$n$ & $\bU$ Trace Corr. & $\bLambda$ Rel. Error & $\bU \bLambda \bU^{\top}$ Rel. Error & $\bbeta$ Rel. Error \\
\bottomrule
100 & 0.986 (0.044) & 0.000587 (0.000588) & 0.033 (0.007) & 0.00618 (0.0035) \\
200 & 0.994 (0.056) & 0.000105 (0.000125) & 0.015 (0.002) & 0.00721 (0.0024) \\
\bottomrule
\end{tabular}
\caption{Trace correlation and relative estimation error of the model's parameters as the number of nodes ($n$) in the networks increased for the negative binomial GLNEM. Each cell contains the median value of the metric with the standard deviation over the 50 simulation displayed in parentheses.}
\label{tab:performance_zip}
\end{table}

\begin{table}[h]
\centering\small
\begin{tabular}{@{}cccccc@{}} \toprule
    $n$ & $\beta_1$ (Intercept) & $\beta_2$ & $\beta_3$ & $\beta_4$ & $\beta_5$ \\
\bottomrule
100 & 0.0814 (0.0285) & 0.0240 (0.0184) & 0.0190 (0.0191) & 0.0223 (0.0182) & 0.0164 (0.0132) \\
200 & 0.0994 (0.0185) & 0.0095 (0.0086) & 0.0115 (0.0093) & 0.0106 (0.0101) & 0.0094 (0.0089) \\
\bottomrule
\end{tabular}
    \caption{Absolute error of $\bbeta$'s elements as the number of nodes ($n$) in the networks increased for the negative binomial GLNEM. Each cell contains the median value of the metric with the standard deviation over the 50 simulation displayed in parentheses.}
\label{tab:performance_zip_coef}
\end{table}

\section{The Tweedie Distribution for Trade Data}\label{sec:tweedie}

In the main text, we stated that the Tweedie distribution is a suitable model for trade data due to various properties that we elaborate on in this section. The derivation of the following relationships can be found in \citetSup{dunn2005, dunn2008, dunn2018}. As in the real-data example in Section~\ref{subsec:banana}, we assume that we observe an adjacency matrix $\bY$ with entries $\set{Y_{ij} \, : \, 1 \leq i,j \leq n}$ that represent trade in thousands of U.S. Dollars occurring between two countries in a given time frame. The Tweedie GLNEM can be motivated by the following compound Poisson-gamma process. For each dyad, we assume an integer number of trades, $N_{ij} \geq 0$, occurred between country $i$ and country $j$ in a given time frame, where $N_{ij} \indsim \Pois(\lambda_{ij})$ with mean $\lambda_{ij}$. Note that $N_{ij}$ being zero indicates that no trade occurred between two countries in a given time frame. When $N_{ij} > 0$, we assume the amount in thousands of U.S. Dollars of each particular trade is $Z_{ij, k}$ for $k = 1, \dots, N_{ij}$, where $Z_{ij, k} \indsim \text{Gamma}(\alpha_{ij}, \beta_{ij})$ with shape parameter $\alpha_{ij}$ and scale parameter $\beta_{ij}$. The observed amount of trade $Y_{ij}$ between two nations is then
\[
    Y_{ij} = 
\begin{cases}
    \sum_{k=1}^{N_{ij}} Z_{ij, k}, & N_{ij} > 0, \\
    0, & N_{ij} = 0.
\end{cases}
\]
If we assume the network comes from a Tweedie GLNEM with mean $\mu_{ij} = \mathbb{E}[Y_{ij} \mid \bx_{ij}] = g^{-1}(\bbeta^{\top}\bx_{ij} + \bu_i^{\top}\bLambda\bu_j)$, dispersion $\phi$, and power parameter $1 < \xi < 2$, then we can relate the parameters to this underlying compound Poisson-gamma process as follows:
\begin{align*}
    \lambda_{ij} &= \frac{\mu_{ij}^{2 - \xi}}{\phi(2 - \xi)}, \\
    \alpha_{ij} &= \frac{2 - \xi}{\xi - 1}, \\
    \beta_{ij} &= \frac{\mu_{ij}^{\xi - 1}}{\phi (\xi - 1)}.
\end{align*}
Intuitively, the means of the underlying Poisson and gamma distributions increase as $\mu_{ij}$ increases. In terms of the latent positions, the Tweedie GLNEM assumes that as the similarity of two countries increases in latent space, the average number of trades they make and the average amount of each trade will also increase. Next, one can show that the Tweedie GLNEM has a power variance function $V(\mu_{ij}) = \mu_{ij}^{\xi}$, that is, the variance is heteroscedastic and increases as a power of the mean. \citetSup{silva2006} argued that this is a reasonable relationship for trade data.

Next, the probability of zero trade between two nations is
\[
    \mathbb{P}(Y_{ij} = 0 \mid \bx_{ij}) = \exp(-\lambda_{ij}) = \exp\left\{-\frac{\mu_{ij}^{2 - \xi}}{\phi (2 - \xi)}\right\}.
\]
From this relation, we see that the more dissimilar nations are in latent space, the more likely they are to have zero trade. In this way, the same covariates and latent variables are used to model the incidence of trade and the volume of trade, which is a model assumption that may not always be appropriate. 

Finally, the Tweedie distribution is scale-invariant in the sense that if $Y_{ij}$ comes from a Tweedie distribution with density $\text{Tw}(Y_{ij} \mid \mu_{ij}, \phi, \xi)$, then for any $c > 0$, $\tilde{Y}_{ij} = c Y_{ij}$ has a density $\text{Tw}(\tilde{Y}_{ij} | c\mu_{ij}, c^{2 - \xi} \phi, \xi)$, that is, Tweedie random variables are closed under scale transformations. This has important practical consequences for modeling trade using a log link function. When changing the units of measurement, e.g., from U.S. Dollars to Euros by multiplying by the exchange rate $c$, the mean will change to $c\mu_{ij} = \exp(\bbeta^{\top}\bx_{ij} + \bu_i^{\top}\bLambda \bu_j + \log c)$. That is, the intercept will be shifted, but the latent positions and covariate effects will remain the same, which is a desirable property for models of trade where the unit of measurement is arbitrary. This is not true for other models of trade, such as the tobit model, where the sign of the coefficient can flip under a scale change~\citepSup{head2014}.

\section{Additional Figures}\label{ss:subsec:additional_figures}

This section contains additional figures from the real data applications in Section~\ref{sec:applications} of the main text. In particular, this section includes cross-validation results, trace plots of the MCMC chains used to estimate the model parameters, and summaries of the latent space dimension's posterior distribution. 

Figure~\ref{fig:ppnetwork_kfold} contains the average held-out log-likelihoods for the fixed-dimension Bernoulli GLNEM estimated using K-fold cross-validation on the protein interaction network for $1 \leq d \leq 8$. The selected dimension based on the K-fold CV (1SE) estimator was $\hat{d}_0 = 3$.

\begin{figure}[ht]
\centering \includegraphics[width=\textwidth, height=0.25\textheight, keepaspectratio]{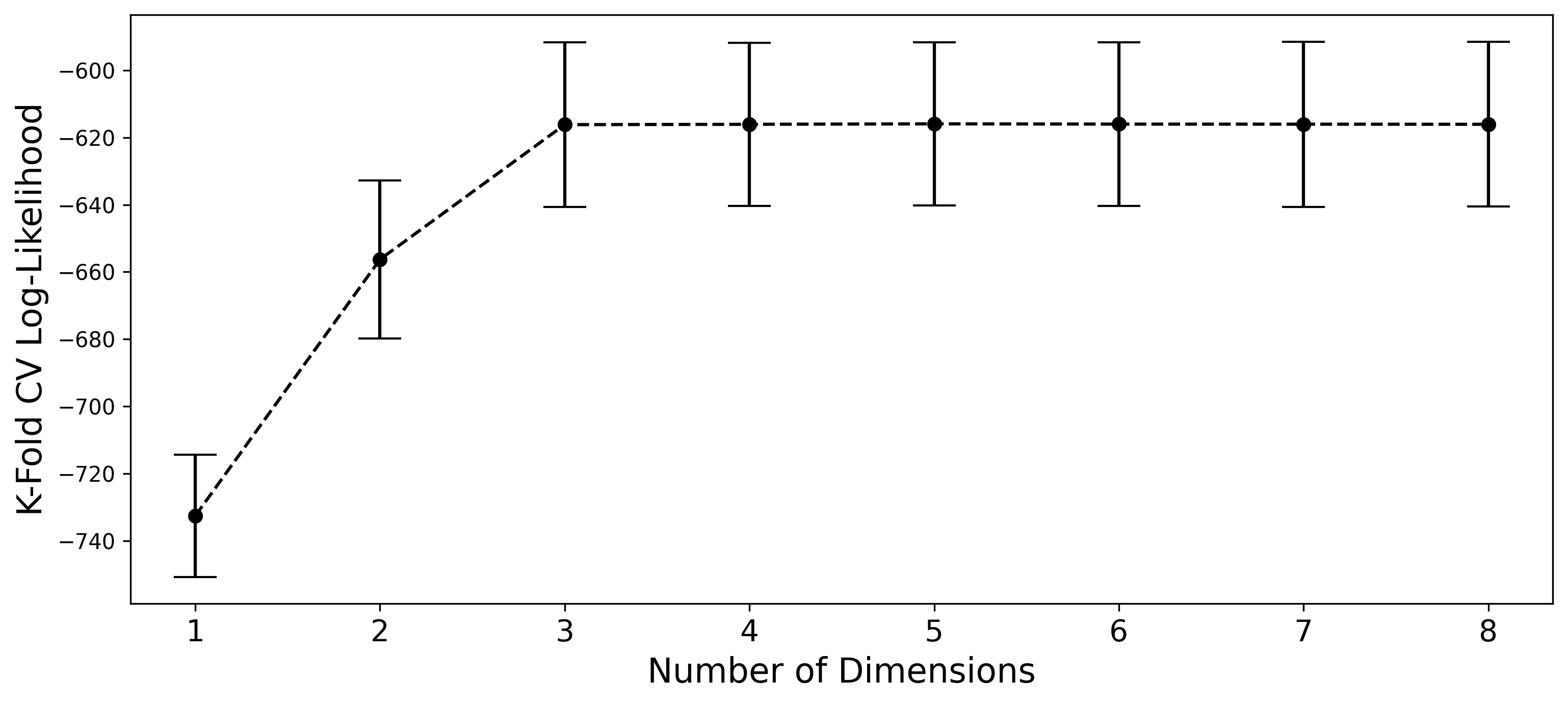}
    \caption{Dimension selection using K-Fold cross-validation with $K = 4$ for the Bernoulli GLNEM with a $N(\mathbf{0}_d, n \mathbf{I}_d)$ prior for $\bLambda$ on the protein interaction network. Error bars indicate one standard errors calculated over the four folds.}
\label{fig:ppnetwork_kfold}
\end{figure}

Figure~\ref{fig:pp_network_mcmc} contains the diagnostic plots for the Bernoulli GLNEM estimated on the network of protein interactions in Section~\ref{subsec:ppnetwork} of the main text. The plots contain only the 15,000 samples after burn-in. All trace plots look well mixed. Furthermore, we see the posterior distribution places 84\% of its mass on the 5 dimensional model. 

Figure~\ref{fig:tree_poisson_mcmc} contains the diagnostic plots for the Poisson GLNEM estimated on the tree network analyzed in Section~\ref{subsec:tree} of the main text. This model frequently switches between a two and three dimension model, see the trace plot on the lower left. As a result the trace plots for the coefficients appear less stationary; however, this is an artifact of the dimension switching. When the model visits a dimension for a significant period of time, the chains for the model coefficients do appear to mix relatively well.

Figure~\ref{fig:tree_negbinom_mcmc} contains the diagnostic plots for the negative binomial GLNEM estimated on the tree network analyzed in Section~\ref{subsec:tree} of the main text. This posterior is largely concentrated on a single latent space dimension. The trace plots indicate that the model parameters are well mixed.

Lastly, Figure~\ref{fig:banana_tweedie_mcmc} contains the diagnostic plots for the Tweedie GLNEM estimated on the banana trade network analyzed in Section~\ref{subsec:banana} of the main text. This posterior contains a mixture of five, six, and seven dimensions. The trace plots appear relatively well mixed.

\begin{figure}[htp]
    \centering \includegraphics[width=\textwidth, keepaspectratio]{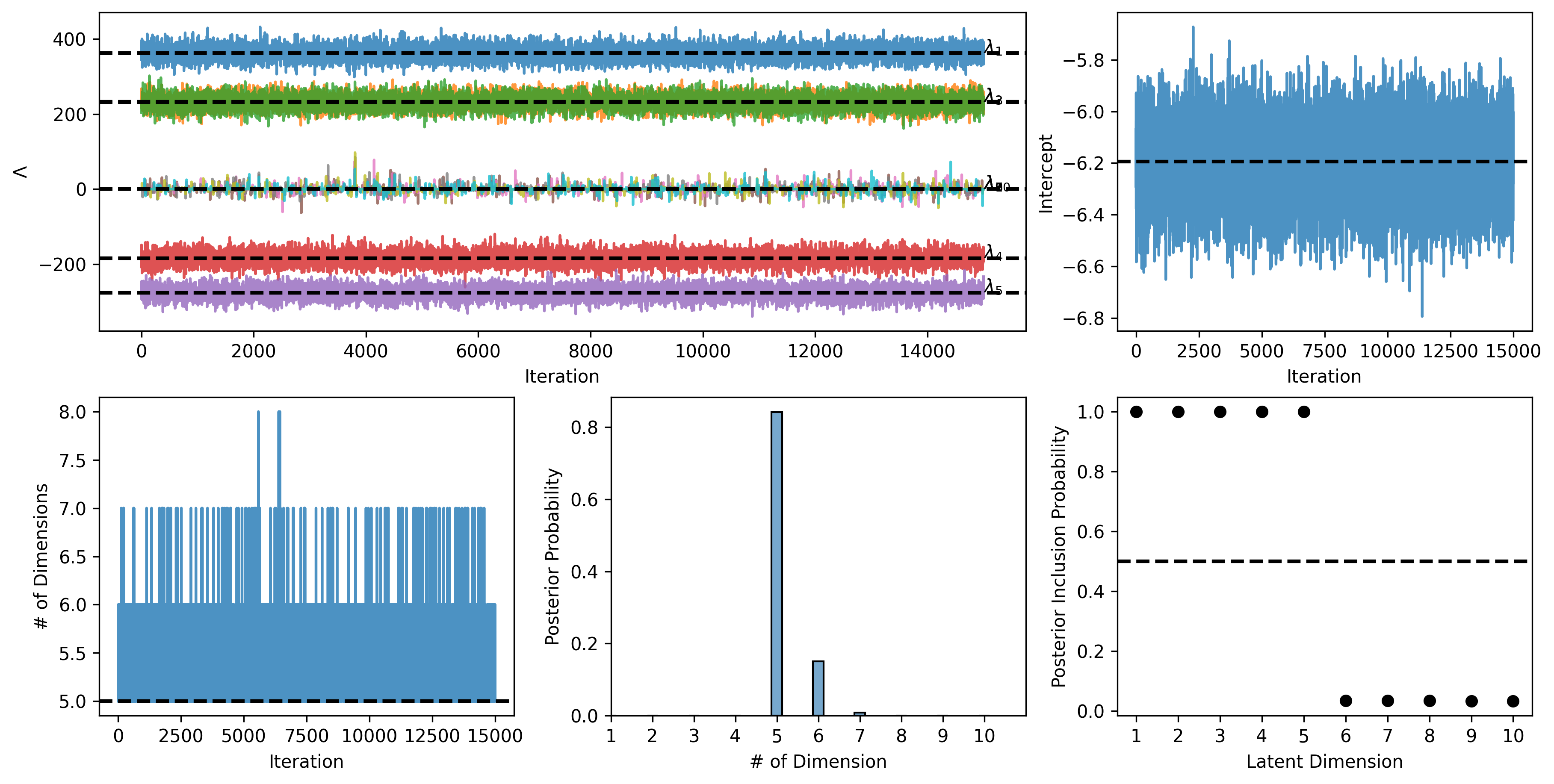}
    \caption{Diagnostic plots for the Bernoulli GLNEM estimated on the protein interaction network. (Top) Trace plots for certain model parameters: (Left) Trace plots of the diagonal elements of $\bLambda$, and (Right) trace plot of the intercept. (Bottom) Summaries of the number of latent dimensions inferred by the SS-IBP prior: (Left) The trace plot of the number of dimensions visited by the Markov chain, (Center) the posterior probabilities over the number of dimensions, and (Right) the posterior inclusion probabilities of a given dimension.}
\label{fig:pp_network_mcmc}
\end{figure}

\begin{figure}[htp]
    \centering \includegraphics[width=\textwidth, keepaspectratio]{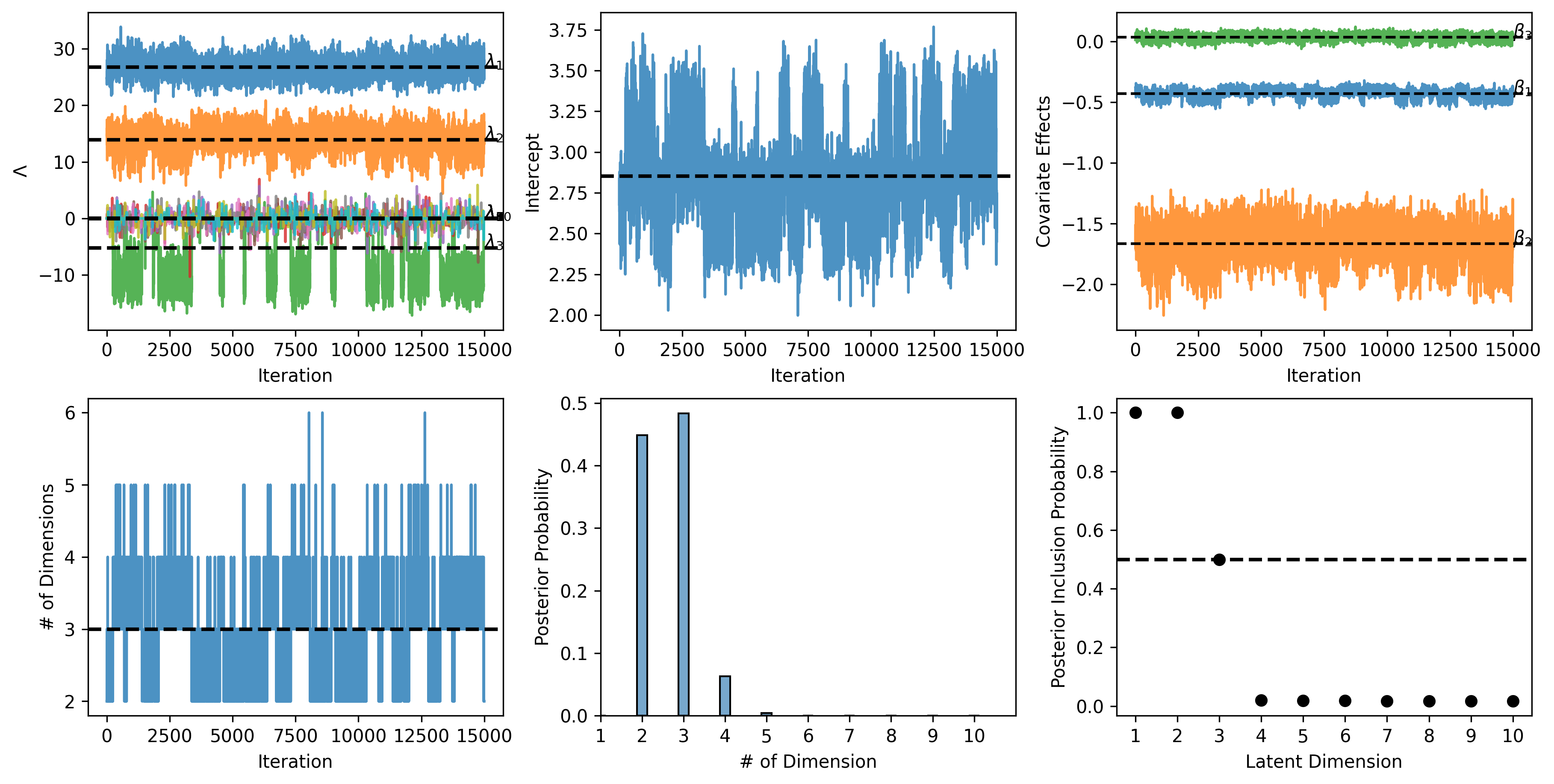}
    \caption{Diagnostic plots for the Poisson GLNEM estimated on the host-parasite interaction network. (Top) Trace plots for certain model parameters: (Left) Trace plots of the diagonal elements of $\bLambda$, (Center) trace plot of the intercept, and (Right) trace plots of dyadic covariates' coefficients. (Bottom) Summaries of the number of latent dimensions inferred by the SS-IBP prior: (Left) The trace plot of the number of dimensions visited by the Markov chain, (Center) the posterior probabilities over the number of dimensions, and (Right) the posterior inclusion probabilities of a given dimension.}
    \label{fig:tree_poisson_mcmc}
\end{figure}

\begin{figure}[htp]
    \centering \includegraphics[width=\textwidth, keepaspectratio]{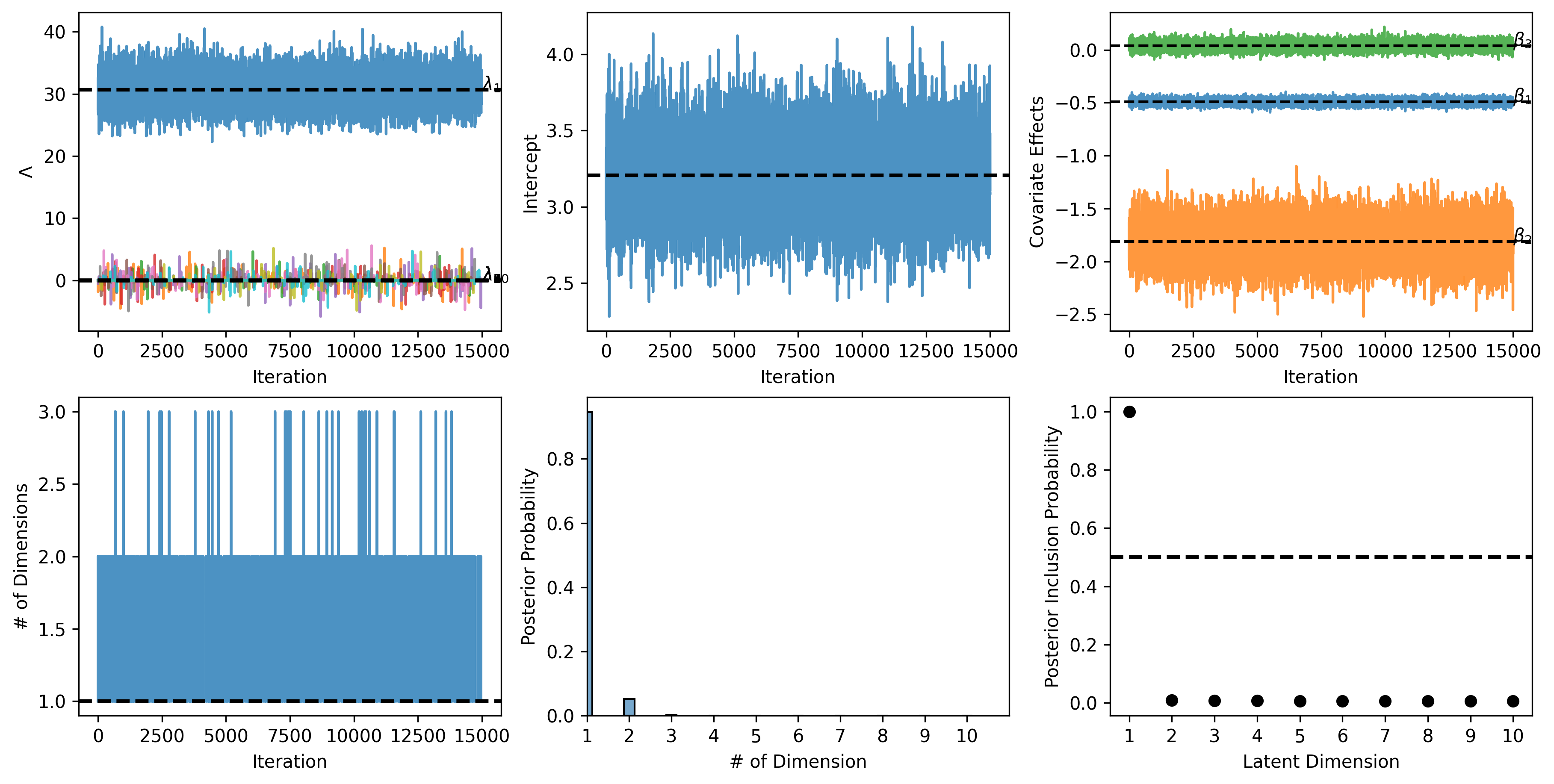}
    \caption{Diagnostic plots for the negative binomial GLNEM estimated on the host-parasite interaction network. (Top) Trace plots for certain model parameters: (Left) Trace plots of the diagonal elements of $\bLambda$, (Center) trace plot of the intercept, and (Right) trace plots of dyadic covariates' coefficients. (Bottom) Summaries of the number of latent dimensions inferred by the SS-IBP prior: (Left) The trace plot of the number of dimensions visited by the Markov chain, (Center) the posterior probabilities over the number of dimensions, and (Right) the posterior inclusion probabilities of a given dimension.}
    \label{fig:tree_negbinom_mcmc}
\end{figure}

\begin{figure}[htp]
    \centering \includegraphics[width=\textwidth, keepaspectratio]{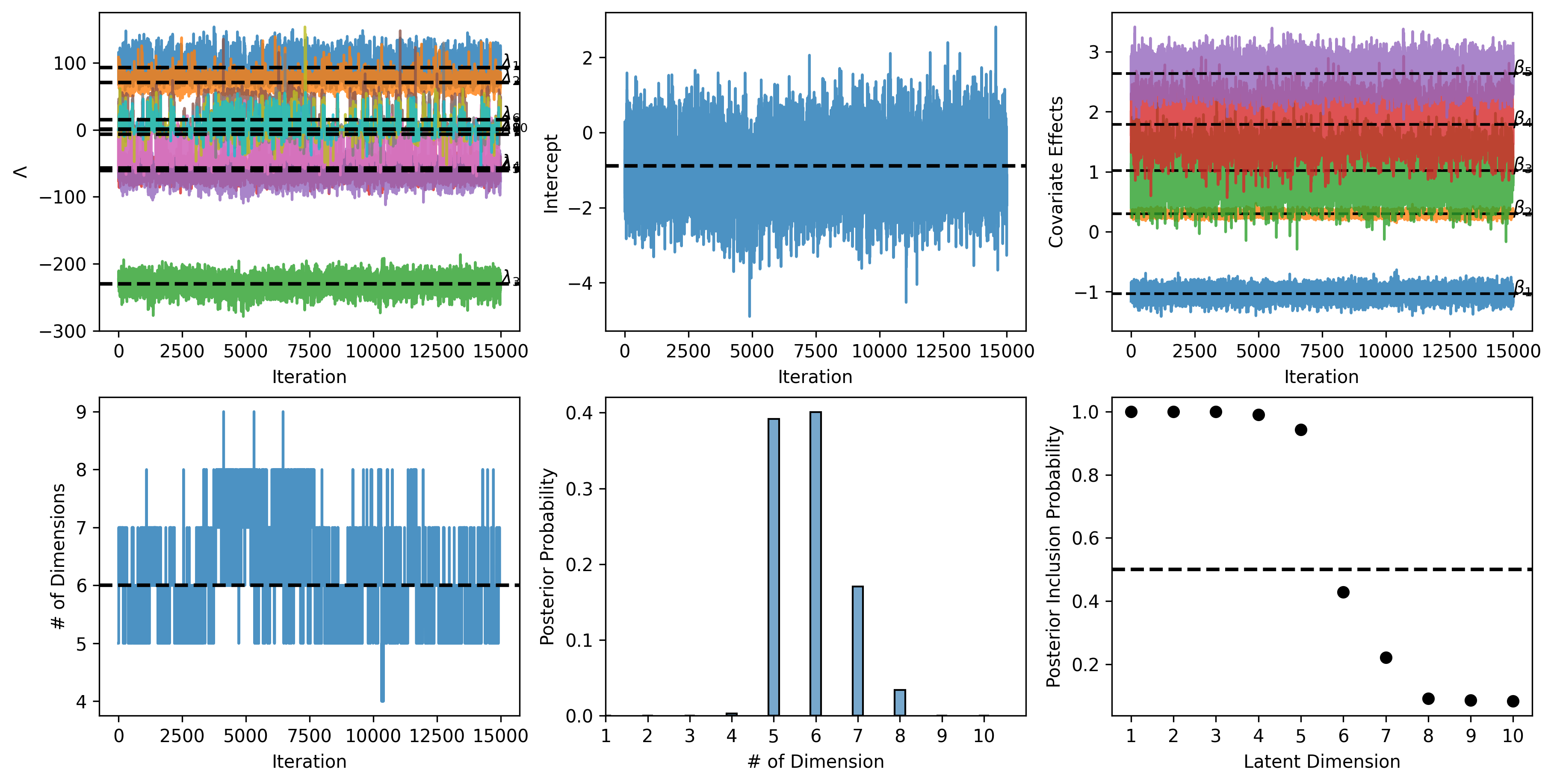}
    \caption{Diagnostic plots for the Tweedie GLNEM estimated on the banana trade network. (Top) Trace plots for certain model parameters: (Left) Trace plots of the diagonal elements of $\bLambda$, (Center) trace plot of the intercept, and (Right) trace plots of dyadic covariates' coefficients. (Bottom) Summaries of the number of latent dimensions inferred by the SS-IBP prior: (Left) The trace plot of the number of dimensions visited by the Markov chain, (Center) the posterior probabilities over the number of dimensions, and (Right) the posterior inclusion probabilities of a given dimension.}
    \label{fig:banana_tweedie_mcmc}
\end{figure}

\clearpage

\bibliographystyleSup{asa}
\bibliographySup{references}

\end{appendix}

\end{document}